\documentclass{article}
\usepackage{amsmath,amsthm,amssymb,hyperref}
\usepackage{ytableau}
\usepackage{tabmac}

\usepackage[margin=1.25in]{geometry}
\usepackage{graphicx} 

\newcommand{\F}{\mathbb{F}}

\DeclareMathOperator{\stab}{stab}
\DeclareMathOperator{\tr}{tr}
\DeclareMathOperator{\Aut}{Aut}

\DeclareMathOperator{\Res}{Res}

\DeclareMathOperator{\mult}{mult}
\DeclareMathOperator{\GL}{GL}

\newtheorem{theorem}{Theorem}[section]
\newtheorem{lemma}[theorem]{Lemma}
\newtheorem{corollary}[theorem]{Corollary}
\newtheorem{proposition}[theorem]{Proposition}
\theoremstyle{definition}
\newtheorem{definition}[theorem]{Definition}
\newtheorem{conjecture}[theorem]{Conjecture}
\newtheorem{observation}[theorem]{Observation}
\newtheorem{fact}[theorem]{Fact}
\newtheorem{open}[theorem]{Open Question}

\theoremstyle{remark}
\newtheorem{remark}[theorem]{Remark}

\title{Graph Isomorphism and Representation Theory}
\author{Joshua A. Grochow and Jacob Urisman}
\date{June 2026}

\begin{document}

\maketitle

\begin{abstract}
    We introduce an approach to distinguishing isomorphism types of graphs based on vector spaces of polynomials that are set-wise invariant under permutations (so-called separating modules, which are, in particular, representations of the symmetric group), inspired by the Geometric Complexity Theory approach to separating complexity classes (Mulmuley \& Sohoni, \emph{SIAM J. Comput.}, 2001). We characterize the power of this method for distinguishing non-isomorphic graphs under several different complexity measures:
    \begin{itemize}
        \item Degree. We show that separating modules of ``support-degree'' $k$ (each monomial touches at most $k$ vertices) are equivalent in power to the counts of $O(k)$-vertex subgraphs. This is strictly weaker than $O(k)$-dimensional Weisfeiler--Leman by a previous result of Fürer (ICALP '01).

    \item Symmetric algebraic circuit size. If we measure the complexity of a separating module by the size of the smallest multi-output symmetric algebraic circuit that computes a spanning set for it, 
    we show that separating modules of symmetric circuit size $n^{\Theta(k)}$ are equivalent to $\Theta(k)$-WL. This generalizes and strengthens the result of Dawar \& Wilsenach (CSL '18; ICALP '20; \emph{ACM Trans. Comput. Log.}, 2022; \emph{Theory Comput.}, 2025): they proved one direction of this equivalence, and only for invariant polynomials, whereas we generalize to  separating modules and prove both directions. 

    \item Multiplicities. 
    When considering only the representation-theoretic multiplicities of separating modules (as was proposed in GCT by Mulmuley \& Sohoni, \emph{ibid.}, rather than the polynomials themselves), we show that two graphs are separated by multiplicities if and only if their automorphism groups have different multiplicity of cycle types (cycle index). 
    \end{itemize}

    The latter result is notable in the analogy with GCT, as it is the only result we are aware of in which the multiplicity approach to  separating isomorphism types of objects has been given an exact ``intrinsic'' characterization in terms of the objects themselves. We use this characterization to show that for graphs, multiplicity obstructions are stronger than occurrence obstructions. We also make connections between invariant polynomials, the Graph Reconstruction Conjectures (Kelly, \emph{Pacific J. Math.}, 1957; Ulam, 1960; Harary, 1964), and Forman's ``invariants of finite type'' (\emph{Adv. Math.} 2004). 
    
\end{abstract}

\setcounter{tocdepth}{1}
\tableofcontents

\section{Introduction}
The \textsc{Graph Isomorphism} (GI) problem is one of the most long-standing problems in complexity theory, appearing in both Cook's and Levin's original papers on $\mathsf{NP}$ \cite{Cook1971,Levin1972} (see \cite{AllenderDas} for the history). It is one of only a handful of problems in the landscape of complexity theory that has no known polynomial-time algorithm and is also not known to be $\mathsf{NP}$-complete. Graph isomorphism has many practical applications as graphs are an extremely useful data structure, especially for network modeling. The problem is as follows: given two graphs $G,H$, decide whether there some relabeling of nodes of $G$ which makes it equal to $H$. More formally two graphs $G,H$ are isomorphic, denoted $G \cong H$, if there exists a bijection $f:V(G)\to V(H)$ such that $(u,v)\in E(G) \Leftrightarrow (f(u),f(v))\in E(H)$, and the algorithmic problem is to detect, given $G,H$, whether an isomorphism exists.

A common way to distinguish non-isomorphic graphs is to find some \emph{(isomorphism) invariant} of the graphs that differs between them: a function $f$ on labeled graphs (or adjacency matrices) such that $f(G)=f(H)$ whenever $G \cong H$. If $f(G) \neq f(H)$, then certainly $G$ and $H$ are not isomorphic, but it is possible for some invariant $f$ that $G \not\cong H$ yet $f(G)=f(H)$. Easy-to-compute invariants include the multiset of vertex degrees, or whether the graph is connected. A harder-to-compute invariant would be the number of $3$-colorings. Generally we seek invariants that are easy to compute.

An analogous situation arises in complexity lower bounds: to show that some hard function $f_{hard}$ is not in some complexity class $\mathcal{C}_{easy}$, one often seeks a property of functions that is invariant under some natural group of symmetries (such as permuting the inputs to the functions, which often doesn't change their complexity, or sometimes even taking linear combinations of inputs), show that every function in $\mathcal{C}_{easy}$ has the property, but $f_{hard}$ does not. In the analogy, we may think of $f_{hard}$ and each $f_{easy} \in \mathcal{C}_{easy}$ as analogous to two graphs we want to distinguish. When $\mathcal{C}_{easy}$ has a complete problem the analogy is even tighter: if $f_{easy}$ is $\mathcal{C}_{easy}$-complete, then one is typically seeking to show that $f_{hard}$ is not in the topological closure of the $\mathrm{GL}_n$-orbit of $f_{easy}$. In the GI case, one can also phrase the problem as showing that $G$ is not in the topological closure of the $S_n$-orbit of $H$, but because the orbits (=isomorphism classes) are finite, the orbit itself is already topologically closed. This is one of the ways in which GI is ``simpler'' in this analogy. See Lysikov \& Walter \cite{Lysikov-Walter} for a recent paper developing the complexity theory of orbit and orbit closure problems.

The Geometric Complexity Theory (GCT) program, initiated by Mulmuley \& Sohoni \cite{MS01}, takes this idea, and proposes specific ways to use algebraic geometry (seeking invariant properties defined by the vanishing of polynomials) and representation theory (the way the relevant symmetry group acts on the ring of polynomials) to help find such invariant properties. 

In this paper, we use this analogy both as an opportunity to explore potentially new methods for separating non-isomorphic graphs, and as an opportunity to use the ``simpler'' setting of GI (separating pairs of non-isomorphic graphs seems easier than separating complexity classes) as a testbed for ideas and techniques that first arose in GCT. 

Although the methods we develop do not end up being stronger than previously-proposed methods for GI (specifically, the Weisfeiler--Leman family of algorithms, about which we say more below), our exploration nonetheless reveals a number of interesting new connections between GI, algebraic proof complexity, the Graph Reconstruction Conjectures, and the representation-theoretic approach to separations put forward in GCT. In the remainder of this introduction we motivate the approach, explain the mathematical objects involved (primarily representations of the symmetric group), and put our main results in context.

In order to apply algebraic geometry in the setting of GI, we deal with polynomial equations. Although prior work by many authors \cite{Berkholz-Grohe, Berkholz-Grohe2, Derksen, SSC, Atserias-Maneva} has also dealt with a system of polynomial equations, we take a different approach. In prior works (\emph{ibid.}), they typically have variables $x_{ij}$, and polynomial equations in those variables whose vanishing enforces that the $n \times n$ matrix $X$ is a permutation matrix that gives an isomorphism between $G$ and $H$. If $G$ and $H$ are not isomorphic, then the resulting system of polynomial equations is not satisfiable, and one can use various algebraic or semi-algebraic proof systems to try to refute the satisfiability of the system of equations---equivalently, refute the existence of an isomorphism between $G$ and $H$. Several complexity measures on those algebraic proof systems turn out to be equivalent to or stronger than WL (\emph{ibid.}).

In contrast, in our setting, we wish to examine how to use polynomials ``on the graphs themselves'' to distinguish non-isomorphic graphs. That is, we also have variables $x_{ij}$, but the variable $x_{ij}$ for us just gives the $(i,j)$ entry of the adjacency matrix of $G$, rather than encoding part of a potential isomorphism between $G$ and $H$. We can then ask what isomorphism-invariant properties of graphs can be expressed this way. As a simple example, $\sum_{i,j} x_{ij}$ is clearly invariant, and computes the total number of edges of a graph. (Since we ultimately will wish to use the vanishing or non-vanishing of polynomials to distinguish non-isomorphic graphs, we note that the invariant $-|E(G)| + \sum_{i,j} x_{ij}$ vanishes on $G$, but not on any graph with a different number of edges.)

However, there are some invariant properties of graphs that are more naturally described by the vanishing of \emph{sets} of polynomials rather than individual invariant polynomials. For example, the invariant polynomial
\[
\sum_{i,i' \in [n], i \neq i'} \left(\sum_{j \neq i} x_{ij} - \sum_{j \neq i'} x_{i'j}\right)^2
\]
vanishes if and only if all vertices have the same out-degree (we assume throughout the paper that we work over a field of characteristic zero). But in fact how this polynomial is built is as a sum of squares, and since adjacency matrices are $\{0,1\}$-valued, all terms evaluate to real values, so a sum of squares vanishes iff the individual summands vanish. It is arguably more natural to say that a graph is out-regular if every polynomial in the set
\[
\left\{\sum_{j \neq i} x_{ij} - \sum_{j \neq i'} x_{i'j} : i,i' \in [n], i \neq i'\right\}
\]
vanishes on its adjacency matrix. (Note that this set of polynomials also has lower degree than the single polynomial above; degree is one of the relevant notions of complexity in our approach, as we will see.)

What is it that makes a given set of polynomials define an isomorphism-invariant property? To see this, we can use the standard rephrasing of GI as an ``orbit problem:'' the symmetric group $S_n$---consisting of permutations of $[n]$---acts on $n \times n$ adjacency matrices by permuting the indices: $(\pi \cdot A)_{ij} = A_{\pi(i), \pi(j)}$, and then two adjacency matrices $A,B$ correspond to isomorphic graphs iff there exists $\pi \in S_n$ such that $\pi \cdot A = B$. (If $P_{\pi}$ is the corresponding permutation matrix, then the action of $\pi$ on $A$ is given by $\pi \cdot A = P_{\pi} A P_{\pi}^\top$.) There is then a ``dual action'' of $S_n$ on the variables $x_{ij}$: $\pi \cdot x_{ij} = x_{\pi^{-1}(i), \pi^{-1}(j)}$, which has the effect of, for any polynomial $f$, $(\pi \cdot f)(A) = f(\pi^{-1} \cdot A)$. If $S$ is a set of polynomials, then the vanishing of all polynomials in $S$ defines an invariant property of graphs if the action of $S_n$ permutes the set $S$ amongst itself: for if $\{\pi \cdot f : f \in S\} = S$, and $G \cong H$, then there is some $\pi$ such that $\pi \cdot G = H$, and we have $f(H) = f(\pi \cdot G) = (\pi^{-1} \cdot f)(G)$, which is some other function in $S$ evaluated at $G$. So all the functions in $S$ vanish on $G$ iff each function in $S$ vanishes on all graphs isomorphic to $G$. (An invariant polynomial as above is then just a singleton set $S$.) 

By a similar proof, even if $\pi \cdot S$ is just contained in the linear span of the polynomials in $S$, the vanishing of $S$ will still define an invariant property. This motivates a key definition. Since we will only work with $\{0,1\}$-adjacency matrices, we need only consider multilinear polynomials (every variable appears in each monomial with degree $\leq 1$), or equivalently, work in the quotient of the polynomial ring $\F[x_{ij} : i,j \in [n]] / \langle x_{ij}^2 - x_{ij} : i,j \in [n] \rangle$. The idea of this definition in the context of complexity classes goes back to \cite{MS01}; we follow the nomenclature and definition from \cite{GrochowUnifying}.

\begin{definition}[{Test module, separating module, cf. \cite[Def.~2.7]{GrochowUnifying}}]
A \emph{test module} (for graph isomorphism) is a vector space $T \subseteq \F[x_{ij} : i,j \in [n]] / \langle x_{ij}^2 - x_{ij} : i,j \in [n] \rangle$ that is closed under the above action of $S_n$:\footnote{We will have occasion to consider test modules contained in the un-quotiented polynomial ring $\F[x_{ij} : i,j \in [n]]$ as well, and for those the definition is the same, \emph{mutatis mutandis}.} for all $f \in T$ and $\pi \in S_n$, we have $\pi \cdot f \in T$ as well. 

Given two graphs $G,H$, a test module $T$ is called a \emph{separating module} for $G$ and $H$, or is said to \emph{separate $G$ from $H$}, if all $f \in T$ vanish on $A_G$, and some $f \in T$ does not vanish on $A_H$.
\end{definition}

As we have argued above, the set of graphs on which a given test module $T$ vanishes is an isomorphism-invariant set of graphs. Conversely, the set of all polynomials that vanish on the set of all graphs isomorphic to $G$ is a test module. These will be our main object of exploration in this paper. We now come to our results.

\paragraph{Degree.}
We were motivated to explore the distinguishing power of separating modules using degree as a complexity measure by the work of Berkholz and Grohe. In \cite{Berkholz-Grohe}, they showed an equivalence between degree in the proof system they dubbed monomial-PC and WL dimension. 

To state the simplest version of our first result, we introduce the notion of ``support-degree'' of a monomial in the variables $x_{ij}$: the support-degree of a monomial is the number of distinct vertices referenced in its subscripts, and the support-degree of a polynomial is the maximum support-degree of any monomial in it. By the ``census'' of $d$-vertex subgraphs, we mean the multiset of isomorphism types of $d$-vertex subgraphs.

\begin{theorem}[{Simplified version of Thm.~\ref{thm:degree-new}}]
Two graphs are distinguished by separating modules of support-degree $d$ if and only if they are distinguished by separating invariants of support-degree $\Theta(d)$ if and only if they are distinguished by the census of $\Theta(d)$-vertex subgraphs.
\end{theorem}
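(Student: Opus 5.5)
We prove a cycle of implications: (a) subgraph census distinguishes $\Rightarrow$ separating invariants distinguish $\Rightarrow$ separating modules distinguish, and (b) separating modules of support-degree $d$ distinguish $\Rightarrow$ the census of $O(d)$-vertex subgraphs distinguishes. Throughout we work in the multilinear quotient ring. There every polynomial of support-degree $\le d$ is a linear combination of monomials $m_{S,E}=\prod_{(i,j)\in E} x_{ij}$. Here $E$ is an edge set on a vertex set $S$ with $|S|\le d$, and $m_{S,E}(A)=1$ exactly when $E\subseteq E(A)$.

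\textbf{Census $\Rightarrow$ invariants.} Fix a graph $F$ on $k$ vertices. Consider the orbit sum $p_F=\sum_{\pi\in S_n}\pi\cdot m_{[k],E(F)}$. This is an invariant of support-degree $k$, and it evaluates to $|\Aut(F)|$ times a count of (not necessarily induced) copies of $F$ in the graph. By Möbius inversion over supergraphs on the same vertex set, the non-induced counts of all $k$-vertex graphs determine the induced counts, i.e.\ the census, and conversely. So if the censuses of $G$ and $H$ on $k$ vertices differ, some $p_F$ takes different values $p_F(G)\neq p_F(H)$. Then $p_F-p_F(G)$ is a separating invariant (a one-dimensional test module) of support-degree $k$. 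Since an invariant is a test module, this also gives the first implication's consequence for modules.

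\textbf{Modules $\Rightarrow$ census (main step).} Let $T$ be a test module of support-degree $\le d$ that vanishes on $A_G$, and let $f\in T$ satisfy $f(A_H)\ne 0$. Since $T$ is closed under $S_n$, every $\pi\cdot f$ vanishes on $A_G$, so $f$ vanishes on every graph isomorphic to $G$. We cannot simply average $f$ over $S_n$, since that might yield zero on $H$ too. Instead consider $h=\sum_{\pi}(\pi\cdot f)^2$, or better, the $S_n$-symmetrization of $f\cdot(\sigma\cdot f)$ for suitable $\sigma$.

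\begin{itemize}
\item The function $h$ vanishes on $G$, because $f$ vanishes on the whole orbit of $G$.
\item The function $h$ is positive on $H$, because $(\mathrm{id}\cdot f)(A_H)^2>0$; here we work over $\mathbb{R}$, or over $\mathbb{Q}$ after restricting the field, since values on $0/1$ matrices lie in the prime field extension.
\item The function $h$ is an invariant of support-degree $\le 2d$.
\end{itemize}

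By the first paragraph, an invariant of support-degree $2d$ is a linear combination of the $p_F$ with $|V(F)|\le 2d$. Since $h(G)\ne h(H)$, some such $p_F$ differs between $G$ and $H$. Because $k$-vertex counts determine smaller ones (count extensions by isolated vertices, for $k\le n$), the $2d$-vertex census differs.

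The obstacle is the field: in characteristic zero over an arbitrary field, squares could cancel. To handle it, I would first reduce to $\F=\mathbb{Q}$ or $\mathbb{R}$. The conditions ``all $\pi\cdot f$ vanish at $A_G$'' and ``$f(A_H)\neq0$'' are linear in the coefficients of $f$ with rational data. So if they are satisfiable over $\F$, they are satisfiable over $\mathbb{Q}$, by solving a rational linear system and using the fact that a nonzero linear functional remains nonzero on some rational solution. This chain gives the equivalence with constants $d\mapsto 2d$, as claimed with $\Theta(d)$.
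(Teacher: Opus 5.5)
Your proposal is correct and follows essentially the same route as the paper: census $\Rightarrow$ invariant via the orbit sums $S_K$ (your $p_F$, whose value is actually $(n-k)!\,|\mathrm{Aut}(F)|$ times the subgraph count, a harmless constant), then module $\Rightarrow$ invariant of support-degree $2d$ via a sum of squares over an orbit, then invariant $\Rightarrow$ census via Observation~\ref{obs:invariants}. The differences are minor: you square the orbit of a single $f$ with $f(A_H)\neq 0$, so you need no irreducibility, and you justify passing to $\mathbb{Q}$ by an explicit linear-algebra descent, whereas the paper takes an integer-coefficient element of an irreducible separating module supplied by the proof of Theorem~\ref{thm:algorithm}.
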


In particular, it follows immediately from a lower bound of Fürer \cite{Furer01} that the proof systems mentioned above are strictly weaker than $\Theta(d)$-dimensional Weisfeiler--Leman, as the subgraph census is the initial coloring of WL, and Fürer shows that a linear number of rounds of refinement are required in the worst case.

We also connect invariant degree to the (in)famous Reconstruction Conjectures \cite{kellyThesis,kelly,ulam,harary}. A graph is called $k$-reconstructible (resp., edge-reconstructible) if its isomorphism type is uniquely identified by the multiset of isomorphism types of its $k$-vertex-deleted subgraphs (resp., $k$-edge-deleted). The Reconstruction Conjecture is that all $n$-vertex graphs are $(n-1)$-reconstructible, which, remarkably, remains unproved. In Theorems~\ref{thm:vertex-recon} and \ref{thm:edge-recon} we show that a graph is $(n-k)$-reconstructible if and only if it is identified by invariants of support-degree $k$ (and analogously for edge-reconstructible and degree). These results were previously established implicitly, in different language, by Forman \cite{forman}; see Section~\ref{sec:related} below for a more detailed comparison.

What is it that leads to the difference between monomial-PC degree \cite{Berkholz-Grohe} and degree of separating modules, in terms of their power to distinguish graphs? 
As discussed above, in the setting of Berkholz \& Grohe, the meaning of their variable  $x_{ij}$ is that vertex $i$ in graph $G$ maps to vertex $j$ in graph $H$. Therefore, a degree $k$ polynomial allows them to reason about possible partial isomorphisms from $k$ vertices of $G$ to $k$ vertices of $H$. In contrast, in our setting, the variables are $x_{ij}$ which takes value 1 if there is an edge from node $i$ to node $j$ in graph $G$. Therefore degree $k$ polynomials in our setting only let us reason about $k$-edge subgraphs of $G$, and separately $k$-edge subgraphs of $H$, but not directly the relations between them. We will see next that when we measure complexity by symmetric algebraic circuit size instead of degree, separating modules do become powerful enough to simulate WL.

\paragraph{Symmetric circuit size.}
As context for our next result, we first recall the Weisfeiler--Leman family of algorithms. The Weisfeiler--Leman (WL) algorithm for graph isomorphism was first formulated in 1968 by Weisfeiler and Leman \cite{Weisfeiler-Leman}. The $k$-dimensional WL algorithm works by coloring $k$-tuples of vertices according to the isomorphism type of their induced subgraph, and then iteratively refining the coloring based on the colors of ``neighboring'' $k$-tuples; this algorithm can be implemented in $n^{O(k)}$ time. However, Cai, Fürer, and Immmerman \cite{CFI92} showed that there are $n$-vertex graphs that can only be separated by $\Omega(n)$-dimensional WL (which would take $n^{\Omega(n)}$ time, more than the $n!$ of the trivial algorithm). Nonetheless, WL has been a useful subroutine in many other algorithms for GI, notably in Babai's quasi-polynomial-time algorithm \cite{Babai16}. WL has many characterizations in terms of coloring, first-order logical properties of graphs \cite{CFI92}, and proof complexity \cite{Atserias-Maneva, Berkholz-Grohe, Berkholz-Grohe2, Toran-Worz}.

Motivated by the results of Dawar and Wilsenach, we then look at symmetric circuit size as a complexity measure on separating modules. In \cite{Dawar-Wilsenach20}, they show that WL can simulate the computation of symmetric algebraic circuits. To relate separating modules to symmetric algebraic circuits, we define multi-output circuits which compute a spanning set of polynomials of a separating module. We then show that $k$-WL is equivalent in distinguishing power to separating modules of symmetric circuit size $n^{\Theta(k)}$.

\begin{theorem}[Simplified version of Cor.~\ref{col:equivalence}] \label{thm:equivalence}
    Two graphs are distinguished by separating modules (resp., separating invariants) computed by symmetric algebraic circuits of size $n^{\Theta(k)}$ if and only if they are distinguished by $\Theta(k)$-WL.\footnote{In more detail, this equivalence holds for all constant $k$; as $k$ approaches $n^c$ the factors in the big-Theta depend on $c$ in a way we make explicit in Cor.~\ref{col:equivalence}. For example, for $k$ up to $\sqrt{n}/2$, we still get an equivalence with a factor of $2$ in the big-Theta.}
\end{theorem}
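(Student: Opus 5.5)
The equivalence splits into two directions. We treat the ``circuits $\Rightarrow$ WL'' direction as an extension of Dawar--Wilsenach, and the ``WL $\Rightarrow$ circuits'' direction as a direct construction.

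\textbf{Circuits $\Rightarrow$ WL.} Let $C$ be a multi-output symmetric algebraic circuit of size $s = n^{O(k)}$ whose outputs span a separating module $T$ for $G$ and $H$. We plan to invoke the support theorem underlying \cite{Dawar-Wilsenach20}. In a symmetric circuit of size $s$, every gate $g$ has a support $\mathrm{sp}(g) \subseteq [n]$ of size $O(\log s/\log n) = O(k)$. The pointwise stabilizer of $\mathrm{sp}(g)$ fixes $g$, and $\pi\cdot g = g_{\pi}$ is again a gate. For output gates the same holds, with $S_n$ permuting the set of outputs because the module is closed under the action.

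Next we show by induction over the circuit that the value $g(\pi^{-1}\cdot A_G)$ is determined by the $O(k)$-WL colour of the tuple $\pi(\mathrm{sp}(g))$ in $G$. In particular, gates whose supports map to tuples of the same stable colour evaluate identically. The inductive step handles a gate $g$ whose children form orbits under $\stab(\mathrm{sp}(g))$. Each child $c$ has support of size $O(k)$, so the multiset of values over an orbit is a counting quantifier over extensions of the tuple. This is exactly what $(2\cdot O(k))$-WL, i.e.\ $C^{O(k)}$ counting logic, captures.

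Now suppose $G$ and $H$ are $O(k)$-WL equivalent. There is a colour-preserving bijection between the tuples of $G$ and those of $H$. Hence the multiset of output values $\{f(A_G)\}$ over $f$ in the output orbit equals the multiset $\{f(A_H)\}$. If every output of $C$ vanishes on $A_G$, then every output vanishes on $A_H$, and by linearity every element of $T$ does too. So $T$ cannot separate, and the contrapositive gives the claim. The case of a single invariant polynomial is the special case of one output.

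\textbf{WL $\Rightarrow$ circuits.} Suppose $k$-WL distinguishes $G$ from $H$. Then some formula $\varphi$ of $C^{k+1}$, of quantifier depth at most $n^{k}$, holds on $G$ and fails on $H$. We build, for each subformula $\psi(\bar y)$ and each assignment $\bar a\in[n]^{k+1}$, a gate computing the $\{0,1\}$-indicator of $\psi(\bar a)$ on $\{0,1\}$ adjacency matrices:
\begin{itemize}
\item Atoms become $x_{a_ia_j}$ or the constants $0,1$.
\item Negation becomes $1-p$, and conjunction becomes a product.
\item The counting quantifier $\exists^{\ge m} z$ becomes a symmetric polynomial of the sum $\sum_{b} p_{\bar a[z\mapsto b]}$. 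Specifically, we use the Lagrange interpolant on $\{0,\dots,n\}$ that is $1$ exactly at values $\ge m$, computed by a small univariate circuit.
\end{itemize}
The gates indexed by tuples are permuted by $S_n$, so the circuit is symmetric, with size $n^{O(k)}$ times the formula size.

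One subtlety is that WL needs only $n^{k}$ refinement rounds, so the formula size could blow up. We will avoid this by building the circuit from the iterated colour-refinement itself: for each round $r$ and each colour class $c$, one gate per tuple indicates whether the tuple has colour $c$. Colours are identified by fixing a canonical set drawn from the colourings of $G$ and $H$, so there are polynomially many gates per round. The final invariant is $\sum_{\bar a}\mathbf 1[\text{colour}(\bar a)=c] - N_c(G)$, where $c$ is a colour whose class sizes differ between $G$ and $H$ and $N_c(G)$ is the size of that class in $G$. This is an invariant that vanishes on $A_G$ and not on $A_H$, so it is also a singleton separating module.

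The main obstacle is the first direction. The support theorem of Dawar--Wilsenach is stated for single-output circuits computing invariant polynomials, over fields where the circuit is rigid. Extending it to multi-output circuits whose output set is only permuted by $S_n$, and tracking the parameter loss to obtain the explicit $\Theta(k)$ constants (for instance the factor $2$ for $k\le\sqrt n/2$, where the support bound $|\mathrm{sp}|\le 2k$ requires $s<\binom{n}{\sqrt n}$-type hypotheses), is where I expect most of the work. I would first rigidify the circuit by merging gates with identical children and labels, then rerun their orbit-counting argument with the output gates treated as an additional $S_n$-set.
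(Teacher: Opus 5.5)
Your proposal is correct. For WL $\Rightarrow$ circuits it is essentially the paper's argument. The paper also uses a layered colour-refinement circuit with the colours of the given graph hard-wired in. It obtains this by modifying the Grohe--Verbitsky threshold circuit and then replacing threshold gates by Lagrange-interpolation gadgets (Lemma~\ref{lem:bool-to-alg}); you write the algebraic circuit down directly. One detail you should make explicit: to keep $n^{O(k)}$ gates per round, test a refined colour $(c,M)$ only on the at most $n$ neighbour-colour tuples in the support of $M$, using that the counts sum to $n$. Enumerating all $k$-tuples of colours would instead cost $n^{O(k^2)}$.

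For circuits $\Rightarrow$ WL your route genuinely differs. The paper proceeds in four steps:
\begin{itemize}
\item it rigidifies $C$ and makes it constant-free (Lemma~\ref{lem:constant-free}; this is why the corollary is stated in terms of bit-size);
\item it passes to a Boolean threshold circuit with the same orbit sizes via Theorem~\ref{thm:DW} with $B=\{0\}$;
\item it applies the threshold-circuit support theorem (Theorem~\ref{thm:DW-support} with Lemma~\ref{lem:binomial});
\item it finishes with the Anderson--Dawar translation into counting logic (Theorem~\ref{thm:ADW-WL}).
\end{itemize}
You instead apply the support theorem to the rigidified algebraic circuit itself. You then show semantically, by induction, that each gate's field value is a function of the WL colour of its support tuple, and conclude by comparing multisets of output values over each output orbit.

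Your route avoids the Boolean and logical detours and works with gate count and arbitrary constants over any field. It also handles a multi-output spanning set transparently, whereas the paper's proof must implicitly collapse the outputs of $C_{bool}$ to a single bit, which it only does explicitly later via an OR gate. The paper's route uses only cited black boxes, and through Anderson--Dawar it also yields round bounds from circuit depth.

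Accordingly, the step you actually need to justify is not the multi-output extension you flag as the main obstacle; the support theorem concerns orbits of gates, and outputs are just gates. What needs justifying is that the support theorem holds for rigid symmetric \emph{algebraic} circuits, since the paper quotes it only for threshold circuits. It does hold, because its proof uses only rigidity and the group action on the circuit DAG, not what the gates compute.
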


Given that WL is essentially a proof system for graph non-isomorphism, that is equivalent to various algebraic proof systems such as monomial-PC \cite{Berkholz-Grohe}, it is natural to wonder about algebraic proof systems based on symmetric algebraic circuits. This was studied recently by Dawar, Grädel, Kullmann, and Pago \cite{DawarIPS}. Similar to previous works, their starting point is the system of equations encoding all possible isomorphisms, rather than in our setting where our polynomials encode properties of individual (isomorphism classes of) graphs. 

As a side benefit, we also relate the notion of symmetric algebraic circuit size of a test module to other notions of complexity on the symmetric group, such as sensitivity, block-sensitivity, and decision tree complexity \cite{DFLLV21}; see Prop.~\ref{prop:alg-circuit}.

\paragraph{Multiplicity obstructions.}
In the first paper on GCT \cite{MS01}, they propose separating complexity classes by using only multiplicities instead of separating modules or polynomials. To explain this approach in our setting, we consider two separating modules to be ``abstractly isomorphic'' if $S_n$ ``acts on them in the same way,'' regardless of which actual polynomials they contain. More specifically, two test modules $T,T'$ are said to be \emph{equivalent} if there is a bijective linear map $f\colon T \to T'$ such that $f(\pi u) = \pi f(u)$ for all $u \in T, \pi \in S_n$. For any given test module $T$, one can then ask how many test modules \emph{equivalent to $T$} vanish on a given graph $G$, and this is called the multiplicity of $T$ in the vanishing ideal of $G$'s orbit. If there is some abstract equivalence class of test modules that has a different multiplicity for $G$ than for $H$, then certainly they are not in the same orbit, hence are not isomorphic. Such an equivalence class of test modules is called a \emph{multiplicity obstruction}, as its existence obstructs the existence of an isomorphism. The multiplicity approach to separating graphs thus asks whether the number of vanishing separating modules of each type is equivalent between two graphs. 

In this setting, we get a clean intrinsic/combinatorial characterization of what these representation-theoretic multiplicities tell us about graph isomorphism. This is the first characterization of this kind we are aware of in a GCT-style multiplicity approach to separating orbit( closure)s. 

\begin{theorem}[=Cor.~\ref{cor:multiplicity}]
Two $n$-vertex graphs $G$ and $H$ admit a multiplicity obstruction if and only if the counts of cycle types of elements in $\Aut(G) \leq S_n$ and $\Aut(H) \leq S_n$ are distinct.
\end{theorem}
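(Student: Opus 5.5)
The plan is to compute, for each irreducible $S_n$-representation $V_\lambda$, the multiplicity of $V_\lambda$ inside the vanishing ideal $I_G$ of the orbit of $A_G$. We work in the quotient ring $R=\F[x_{ij}]/\langle x_{ij}^2-x_{ij}\rangle$, which is exactly the ring of all $\F$-valued functions on $\{0,1\}^{n\times n}$. As an $S_n$-module, $R$ is the permutation representation on the set $\{0,1\}^{n\times n}$ of all adjacency matrices.

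\textbf{Reduction to orbit functions.} Restriction of functions to the orbit $O_G=S_n\cdot A_G$ is an $S_n$-equivariant surjection $R\to \F[O_G]$, since every function on a finite set is the restriction of a polynomial. Its kernel is exactly $I_G$. Working in characteristic zero, Maschke's theorem gives
\[
R\cong I_G\oplus \F[O_G].
\]
So for every $\lambda$,
\[
\mult_\lambda(I_G)=\mult_\lambda(R)-\mult_\lambda(\F[O_G]).
\]
The first term on the right does not depend on $G$. Hence the multiplicity data of $I_G$ and $I_H$ agree iff $\F[O_G]\cong\F[O_H]$ as $S_n$-modules.

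It remains to interpret the multiplicity notion the way the paper defines it. The number of test modules equivalent to $T$ that vanish on $G$ should be read as the multiplicity of each isotypic type, which is well defined by complete reducibility. For a reducible $T$, the count is determined by the irreducible multiplicities. I would state this reduction explicitly.

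\textbf{Orbit module to cycle index.} Since $O_G\cong S_n/\Aut(G)$, we have $\F[O_G]\cong \mathrm{Ind}_{\Aut(G)}^{S_n}\mathbf 1$. In characteristic zero, two such modules are isomorphic iff their characters agree. The character of this permutation module at $\sigma$ is the number of fixed cosets:
\[
\chi_G(\sigma)=\frac{|\{\tau\in S_n : \tau^{-1}\sigma\tau\in \Aut(G)\}|}{|\Aut(G)|}=\frac{|C_{S_n}(\sigma)|\cdot|\mathrm{cl}(\sigma)\cap \Aut(G)|}{|\Aut(G)|}.
\]
Conjugacy classes of $S_n$ are exactly cycle types, so $\chi_G$ is determined by the function $\mu\mapsto c_G(\mu)/|\Aut(G)|$, where $c_G(\mu)$ counts elements of $\Aut(G)$ of cycle type $\mu$. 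Evaluating at the identity recovers $n!/|\Aut(G)|$ and hence $|\Aut(G)|$. Therefore $\chi_G=\chi_H$ iff $c_G(\mu)=c_H(\mu)$ for every cycle type $\mu$. This is exactly equality of the cycle-type counts, i.e.\ of the cycle index. Combining this with the previous step gives both directions of the corollary.

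\textbf{Main obstacle.} The group theory is routine. The delicate step is matching the paper's definition of ``number of test modules equivalent to $T$ vanishing on $G$'' with isotypic multiplicities. One concern is that for multiplicity at least $2$ there are infinitely many equivalent submodules, so ``number'' must mean multiplicity (the dimension of $\mathrm{Hom}_{S_n}(V_\lambda,I_G)$). The other is justifying that it suffices to consider irreducible $T$. I would handle this by defining multiplicity as $\dim\mathrm{Hom}_{S_n}(T,I_G)$ for irreducible $T$ and noting that reducible obstructions decompose into irreducible ones. I would also check that allowing directed or loop variables $x_{ii}$ in $R$ changes nothing, since the first term is independent of $G$ in any case.
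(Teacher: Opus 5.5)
Your argument is correct. Its first two steps are the paper's: the equivariant splitting $R\cong I_G\oplus\F[O_G]$ shows that multiplicities in $I_G$ and in $\F[O_G]$ carry the same information, and orbit--stabilizer gives $\F[O_G]\cong\F[S_n/\Aut(G)]=\mathrm{Ind}_{\Aut(G)}^{S_n}\mathbf{1}$.

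The difference is in the last step, which the paper isolates as Theorem~\ref{thm:cycle-index}. There, each multiplicity is computed separately by Frobenius reciprocity: $\mult_\chi(\F[S_n/\Gamma])=\frac{1}{|\Gamma|}\sum_{g\in\Gamma}\chi(g)=\frac{1}{|\Gamma|}(Mu)_\chi$, where $M$ is the character table of $S_n$ and $u$ is the vector of cycle-type counts. The converse then comes from invertibility of $M$. You never compute individual multiplicities. Instead you use that in characteristic zero a module is determined by its character, and you compute the permutation character directly by counting fixed cosets. The passage from cycle-type counts to character values then becomes a diagonal rescaling by the nonzero factors $|C_{S_n}(\sigma)|/|\Aut(G)|$.

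The two routes are dual forms of the same fact: ``characters determine modules'' is what invertibility of $M$ encodes. Yours makes the inversion trivial and avoids the character table. The paper's yields an explicit formula for every multiplicity, which identifies \emph{which} irreps are obstructions and is what the Sage code in the appendix evaluates.

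One small thing to write out. For the direction from equal cycle-type counts to equal characters, you also need $|\Aut(G)|=|\Aut(H)|$. This follows from $|\Aut(G)|=\sum_\mu c_G(\mu)$. Reading off $|\Aut(G)|$ from $\chi_G(\mathrm{id})$ only serves the other direction.

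Your treatment of ``number of equivalent test modules'' as $\dim\mathrm{Hom}_{S_n}(V_\lambda,I_G)$, with reducible $T$ reduced to irreducibles, is consistent with the paper's use of $\mult_\chi(I_G)$. It is also slightly more explicit than the paper.
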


In particular, since most graphs have trivial automorphism group, most graphs are not distinguished by multiplicity obstructions (Obs.~\ref{obs:multiplicity-nogo}). More generally, if two $n$-vertex graphs have automorphism groups that are conjugate in $S_n$, then they are indistinguishable by multiplicity obstructions. It is an interesting open question whether this is the only obstacle to being distinguished by multiplicity obstructions. We say more about this question in Section~\ref{sec:future} below.

In the absence of an answer to the preceding question, we find it natural to ask whether there are interesting families of graphs---with interesting automorphism groups---that are distinguished by multiplicity obstructions. One such possible family of graphs is suggested again by analogy with GCT. Part of the philosophy behind GCT is that the representation-theoretic, and especially multiplicity-based, approach is only expected to work for objects (polynomials, tensors, graphs, etc.) that are \emph{characterized by their symmetries}, a notion introduced in \cite[Sec.~4]{MS01}. If $V$ is a representation of a group $G$, a (nonzero) point $v \in V$ is said to be \emph{characterized by its symmetries} if for all points $u \in V$, if $\stab_G(v) \subseteq \stab_G(u)$---that is, for all $g \in G$, if $g v = v$ then $gu = u$---then $u$ must be a scalar multiple of $v$. In GCT, notable examples of symmetry-characterized polynomials include the determinant and permanent \cite[Props.~4.2 and 4.5]{MS01}, and since then many other polynomials of complexity-theoretic interest have been shown to be characterized by their symmetries (e.\,g., \cite{CKW, GrochowThesis, GrochowLieAlgebraIso, RRR}). 

But a graph and its complement have the same automorphism group, so for a graph $G$ to be ``characterized by its symmetries'' we mean that for any graph $H$ on the same number of vertices, if $\Aut(G) \leq \Aut(H)$ (as subgroups of $S_n$), then $H$ must be isomorphic to either $G$ or its complement (or the empty or complete graphs, since every permutation group is a subgroup of their automorphism groups; these play a role much like the zero vector in a vector space).\footnote{Let $V$ be the vector space of adjacency matrices, and $J$ be the adjacency matrix of the complete graph, and $U := V / J$ the quotient vector space. Then in $U$, a graph and its complement are scalar multiples of one another: $A_{G^c} = J - A_G \equiv -A_G \pmod{J}$. And among simple graphs, a graph is only a nonzero scalar multiple of itself and its complement. Furthermore, the empty graph and the complete graph both correspond to the zero vector in $U$. So in the space $U$, we can use the now-standard definition of symmetry-characterization mentioned above, and see that it agrees with the natural notion that one might want for graphs.}
 It is an interesting question whether multiplicity information alone can distinguish all pairs of non-isomorphic symmetry-characterized graphs. In Prop.~\ref{prop:FI} we make a small inroad into this question: using the theory of representation stability and FI-modules \cite{ChurchFarb,CEF}, we show that \emph{constant-degree} multiplicity obstructions can distinguish at most constantly many isomorphism types. The more general question of unbounded-degree multiplicity obstructions remains open; see Section~\ref{sec:future} for more on this.

\paragraph{Occurrences obstructions.} 
In \cite{MS01}, one of the original conjectures (since disproven in the context of the Permanent vs Determinant Conjecture by Bürgisser, Ikenmeyer, and Panova \cite{BIP}), was to not only use multiplicity obstructions to separate complexity classes, but actually to use so-called ``occurrence'' obstructions, which are multiplicity obstructions in which the multiplicity is \emph{zero} for one class and \emph{non-zero} for the other. In the context of GCT, Dörfler, Ikenmeyer and Panova \cite{DIP} showed that there are settings in which multiplicity obstructions are strictly more powerful than occurrence obstructions for separations. We show that the same is true for graphs:

\begin{proposition}[Simplified version of Proposition~\ref{prop:mult-vs-occurrence}]
There are infinitely many pairs of graphs that are distinguished by multiplicity obstructions but not by occurence obstructions.
\end{proposition}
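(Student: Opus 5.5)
Our plan is to reduce everything to the characterization of multiplicities in Cor.~\ref{cor:multiplicity}, made explicit. The vanishing ideal $I_G$ of the orbit of $G$ in the multilinear quotient ring $R$ sits in the exact sequence $0 \to I_G \to R \to \F[S_n\cdot G] \to 0$. Here $\F[S_n \cdot G] \cong \F[S_n/\Aut(G)] = \mathrm{Ind}_{\Aut(G)}^{S_n}\mathbf{1}$ is the permutation module on the orbit, and by Maschke the sequence splits. Hence for each irreducible $S_\lambda$,
\[
\mult_\lambda(I_G) = \mult_\lambda(R) - \langle \chi_\lambda, \mathrm{Ind}_{\Aut(G)}^{S_n}\mathbf{1}\rangle = \mult_\lambda(R) - \langle \mathrm{Res}\,\chi_\lambda, \mathbf{1}\rangle_{\Aut(G)} .
\]
A multiplicity obstruction for $G,H$ is then a $\lambda$ with $\dim S_\lambda^{\Aut(G)} \neq \dim S_\lambda^{\Aut(H)}$. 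An \emph{occurrence} obstruction is such a $\lambda$ where, in addition, one of $\mult_\lambda(I_G), \mult_\lambda(I_H)$ is zero, i.e.\ $\dim S_\lambda^{\Aut(G)} = \mult_\lambda(R)$ (so $I_G$ misses $\lambda$ entirely) while the corresponding quantity for $H$ is strictly smaller.

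We therefore want an infinite family of pairs $G,H$ whose permutation characters differ, but for which no $\lambda$ is "saturated", meaning $\dim S_\lambda^{\Aut(G)}$ and $\dim S_\lambda^{\Aut(H)}$ are never equal to $\mult_\lambda(R)$ unless they are equal to each other. The key observation is that $\mult_\lambda(R)$ is huge. Indeed, $R \cong \F[\{0,1\}^{n\times n}]$ is the permutation module on all labeled (directed, or simple if we restrict) graphs, so $\mult_\lambda(R) = \sum_{\text{orbits } O} \dim S_\lambda^{\stab(O)}$, summed over all isomorphism classes. This sum includes the orbit of $G$ itself together with the empty and complete graphs, whose stabilizer is all of $S_n$ and which contribute $1$ exactly for $\lambda = (n)$. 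Consequently $\mult_\lambda(I_G) \geq \sum_{O \neq S_nG} \dim S_\lambda^{\stab(O)}$.

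The plan is to show this lower bound is positive for every $\lambda$ once $n \geq 2$ or so. For $\lambda=(n)$ it is positive thanks to the empty graph, as long as $G$ is not empty. For general $\lambda$, we pick a graph with trivial automorphism group (an asymmetric graph, which exists for $n\ge 6$) not isomorphic to $G$. Its orbit is the regular representation and contains every $S_\lambda$ with multiplicity $\dim S_\lambda \geq 1$. Hence for $n \geq 6$, whenever $G$ is not asymmetric or there are at least two asymmetric isomorphism classes, every $\lambda$ occurs in $I_G$ with positive multiplicity. For large $n$ there are many asymmetric classes, so in fact $\mult_\lambda(I_G) > 0$ for all $\lambda$ and all $G$. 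Thus for $n$ large \emph{no} pair of graphs has an occurrence obstruction at all.

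It then suffices to exhibit infinitely many pairs with different cycle-type counts of their automorphism groups; by Cor.~\ref{cor:multiplicity} these have multiplicity obstructions. The simplest choice is $G_n$ the empty graph minus nothing, or more safely $G_n = K_{1,n-1}$ (automorphism group $S_{n-1}$) and $H_n$ an asymmetric graph (trivial group). These differ in the number of transpositions. Alternatively, take $G_n = K_2 \sqcup \overline{K_{n-2}}$ versus $H_n = P_3\sqcup \overline{K_{n-3}}$, whose groups $S_2\times S_{n-2}$ and $S_2 \times S_{n-3}$ have different orders. The main obstacle is making sure the conventions (simple versus directed graphs, and whether $R$ is the full quotient ring or a graded piece) match those of the formal Proposition~\ref{prop:mult-vs-occurrence}. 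If occurrence obstructions are meant degree-by-degree, then the argument must be redone within a fixed degree $d$. There I would replace the asymmetric graph by the regular-representation-containing piece of degree $d$, which exists once $d$ is moderately large; small degrees would be handled separately by checking that the two permutation characters agree in low degree.
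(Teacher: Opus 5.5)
There is a genuine gap: you are using a different notion of occurrence obstruction from the one in the proposition. The formal definition in Section~\ref{sec:multiplicity} says an irrep $V$ is an occurrence obstruction for $G,H$ when $\mult_V(\F[S_n/\Aut(G)])=0\neq\mult_V(\F[S_n/\Aut(H)])$ (or vice versa). That is zero versus nonzero multiplicity in the coordinate ring of the \emph{orbit}, as in GCT, not in the vanishing ideal $I_G$. The introduction's informal wording is looser, but the proposition refers to this definition. By Frobenius reciprocity, the occurrence information of $G$ is the set of $\lambda$ for which $V_\lambda$ has a nonzero $\Aut(G)$-fixed vector.

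Your computation that, for large $n$, every irrep occurs in every $I_G$ is correct. But it only shows that your variant of the notion is vacuous; it says nothing about the intended one. Under the intended definition both of your families fail:
\begin{itemize}
\item $\F[S_n/S_{n-1}]\cong V_{(n)}\oplus V_{(n-1,1)}$, while the regular representation $\F[S_n]$ of an asymmetric graph's orbit contains every irrep. So $V_{(n-2,2)}$ is an occurrence obstruction between $K_{1,n-1}$ and an asymmetric graph.
\item $\F[S_n/(S_2\times S_{n-2})]$ (the permutation module on $2$-subsets) contains only $V_{(n)},V_{(n-1,1)},V_{(n-2,2)}$. But $\Aut(P_3\sqcup\overline{K_{n-3}})$ is the Young subgroup of type $(n-3,2,1)$, and its permutation module contains $V_{(n-3,2,1)}$.
\end{itemize}

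What is really needed is a pair of automorphism groups with different cycle indices but exactly the same set $\{\lambda : V_\lambda^{\Gamma}\neq 0\}$, and producing one is the whole content of the proposition. The paper does this with nested groups. It takes $G_n=P_4\sqcup\overline{K_{n-4}}$ and the ``dandelion'' $H_n$, with $\Aut(H_n)=S_{n-4}\leq\Aut(G_n)=S_{n-4}\times\langle\sigma\rangle$, where $\sigma=(n-3,n)(n-2,n-1)$. The orders differ, so the cycle indices differ and Cor.~\ref{cor:multiplicity} gives a multiplicity obstruction.

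Nesting gives for free that every $\lambda$ with an $\Aut(G_n)$-invariant also has an $S_{n-4}$-invariant. For the converse, branching shows $V_\lambda$ has an $S_{n-4}$-invariant iff $\lambda_1\geq n-4$, which gives twelve shapes. For each such $\lambda$ one finds an $S_{n-4}$-invariant $v$ with $\sigma v\neq -v$. Then $v+\sigma v$ is a nonzero $\Aut(G_n)$-invariant, because $\sigma$ commutes with $S_{n-4}$. This is verified via leading tableau terms for $n\geq 8$ and by computer for $n=6,7$. Alternatively, Lemma~\ref{lem:occurrence-inheritance} lets one pad a single small example by $S_{n-k}$ to obtain infinitely many.

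Your proposal contains no step of this kind, so as it stands it does not prove the statement.
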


\subsection{Related work} \label{sec:related}
As far as we are aware, ours is the first study using separating modules for graph isomorphism. The only other paper we are aware of that even uses representation theory in the context of GI is Derksen \cite{Derksen}, which has quite a different flavor to what we pursue here. Indeed, as with the other prior work, its starting point is a system of equations whose solutions encode all possible isomorphisms, rather than looking at (sets of) polynomials on a single graph at a time.

\paragraph{Related work on invariants.}
We are aware of three lines of research using what we would call separating (polynomial) \emph{invariants} for graph isomorphism. The first of these three goes back implicitly to Kocay \cite{kocay}, who considered an ``algebra of subgraphs'' based on the counts of induced or general subgraphs. However, Kocay seems not to have formulated his work in terms of invariant polynomials, and the identities between such numerical invariants he considers are all linear.

The second, and closer to our work, is that of Pouzet \cite{pouzet1,pouzet2}, continued more recently with Thi{\'e}ry \cite{thiery, PouzetThieryInvariant}. In their work they consider polynomial invariants of \emph{weighted} adjacency matrices, that is, arbitrary matrices over a field of characteristic zero (rather than just $\{0,1\}$ matrices or unweighted graphs as we consider). Effectively, they are working in the (infinte-dimensional) ring $\F[x_{ij}]$ whereas we (mostly) work in the (finite-dimensional) ring $\F[x_{ij}] / \langle x_{ij}^2 - x_{ij} \rangle$. The invariant ring in that setting is, expectedly, more complicated, having a generating set in terms of counting multi-graphs (with non-negative integer weights on the edges) rather than in terms of counting ordinary graphs (as in our setting). The focus on weighted graphs makes their work connected to weighted/algebraic generalizations of the Reconstruction Conjectures, but not (as far as we are aware) to the original (unweighted) Reconstruction Conjectures.

The third work on invariants, and closest to our work, is that of Forman. Forman \cite{forman} introduced what he called ``graph invariants of (edge- or vertex-)finite type'', which had an ``order'' $k$, and used these to prove a number of results on the Graph Reconstruction Conjectures. Forman's ``invariants of finite type'' are isomorphism-invariant functions on the (infinite) set of all isomorphism types of finite simple graphs satisfying certain inclusion-exclusion-like identities. By our Observation~\ref{obs:invariants} and Forman's Theorem 3.3, Forman's invariants of edge-finite-type and order $k$, applied to $n$-vertex graphs, are precisely given by invariants (in the usual polynomial sense studied in our paper) of degree $k$. Similarly, by Observation~\ref{obs:invariants} and Forman's Theorem 10.2, Forman's invariants of vertex-finite-type and order $k$, applied to $n$-vertex graphs, are precisely given by invariants in our sense of what we call ``support-degree'' $k$ (which counts the number of vertices involved in a monomial rather than the number of variables=edges). Forman's Theorems 4.6 and 11.4 are essentially equivalent, via the preceding correspondence, to our connections between invariant polynomials and the Reconstruction Conjectures. Once we get to the point of Observation~\ref{obs:invariants} and Forman's Theorems 3.3 and 10.2, our proofs of the connections with the Reconstruction Conjectures are essentially the same as his, though we came to them independently. 

We also remark that some results in Forman's paper may have shorter proofs in the language of invariant polynomials (indeed, Observation~\ref{obs:invariants} is a pretty trivial observation, whereas Forman's Theorems 3.3 and 10.2 take some work), while the opposite may be true for other results. 

Technically, Forman also deals with invariants on the infinite set of all graphs---similar to Kocay (\emph{ibid.})---whereas we focus on invariant polynomials on graphs with $n$ vertices (asymptotically as $n \to \infty$, but for each $n$ we have a finitely generated algebra). Translating Forman's results more faithfully into the language of polynomials seems to require something like invariant polynomials on graphs with infinitely many vertices (analogous to the ring of symmetric functions on infinitely many variables) or FI-rings \cite{ChurchFarb, CEF}; we hope to pursue such development in future work.

\paragraph{Related work on Weisfeiler--Leman, symmetric algebraic circuits, and proof complexity.}
Finally, we discuss several works related to WL and proof complexity that are somewhat related but quite distinct from our work. 

Berkholz and Grohe \cite{Berkholz-Grohe} introduce an algebraic proof system they name monomial-PC, and show that its distinguishing power is equivalent to WL. In particular, they start by setting up a system of polynomials which encode all possible isomorphisms between two graphs and set those polynomials to zero. They then use polynomial calculus proof system on that system of polynomials while restricting certain intermediate steps. They show that if a refutation exists, then one exists where the maximum degree of any equation derived in that refutation is equal to the WL dimension required to distinguish the two graphs. There are several other algebraic and semi-algebraic proof systems that are equivalent to WL when applied to this same system of polynomials (or closely related systems of inequations) \cite{SSC, Atserias-Maneva, Berkholz-Grohe2}.

Toran and Wörz \cite{Toran-Worz} work in the setting of Boolean proof complexity. They use the Boolean version of essentially the same system of equations as in \cite{Berkholz-Grohe}, and apply the Narrow Resolution proof system to refute the system of equations. They show that this proof system, under the complexity measure of narrow resolution depth and width, is equivalent in distinguishing power to WL. It is possible that there is a direct connection between symmetric circuit size and support size in our setting and narrow resolution depth and width. Indeed, one can develop a Boolean analogue of separating modules in the setting of CNFs, consisting of a set of Boolean formulas that is set-wise invariant under the action of $S_n$. It would be potentially interesting to explore this potential connection, but we do not do so in our paper. 

Anderson \& Dawar \cite{Anderson-Dawar} show that any first-order structure on which properties can be computed by Boolean (threshold) circuits can be simulated by first-order formulas. Their proof preserves circuit size as number of variables and circuit depth as number of quantifiers. This allowed us to show equivalence between symmetric algebraic circuits and the first order formulas for graph isomorphism.

Dawar \& Wilsenach \cite{Dawar-Wilsenach20} show that WL can be simulated by symmetric algebraic circuits. In their paper, they rely on multiple background results proved thoroughly in \cite{Dawar-Wilsenach22} and \cite{Anderson-Dawar}. The former of those contains results for rank logic which we do not explore in this paper. We prove the converse of their results for symmetric algebraic circuits.

Theorem~\ref{thm:equivalence} (=Corollary~\ref{col:equivalence}) is connected to several other results in the literature on subgraph counts. Say that a function $f$ on isomorphism types of graphs is $k$-WL invariant if whenever $G$ and $G'$ are $k$-WL indistinguishable, $f(G)=f(G')$. Nueun \cite{neuen} proves that the counts of $H$ as a subgraph is $k$-WL-invariant if and only if $k \geq hdtw(H)$, where $hdtw$ denotes the hereditary treewidth (we refer to \emph{ibid.} for definitions). This count is captured by an invariant polynomial we denote $S_H$ (see Section~\ref{sec:invariants1}). Consistent with Corollary~\ref{col:equivalence} and Neuen's aforementioned result, Dawar, Pago, and Seppelt \cite{DPS26} proved that the invariant polynomial $S_H$ can be computed by symmetric algebraic circuits of size $n^{O(hdtw(H))}$ (combine Theorems~1.1 and 7.1 of the arXiv version of their paper).\footnote{Technically their results are for bipartite graphs and bipartition-respecting homomorphisms/subgraphs, but they directly transfer to the non-bipartite case.}

\subsection{Future directions and open questions} \label{sec:future}
The connections we develop in this paper also raise a number of intriguing new questions, some of which we raise here.

We have already mentioned this question above, but we highlight it here and discuss it a little further:

\begin{open} \label{open:conj}
Is it the case that for all pairs of $n$-vertex graphs $(G,H)$ such that $\Aut(G)$ and $\Aut(H)$ are not conjugate in $S_n$, $G$ and $H$ are distinguished by multiplicity obstructions?\footnote{This question is interesting in any of the usual graph variants: looped or loopless, directed or undirected, bipartite (with the action of $S_n \times S_m$), etc. In the case of multi-graphs---where each edge can occur with multiplicity $\geq 1$---one should consider multiplicities in the un-quotiented polynomial ring $\F[x_{ij} : i,j \in [n]]$.}
\end{open}

There exist pairs of subgroups of $S_n$ that are not conjugate but have the same cycle index;\footnote{By direct calculation, we found that the smallest $n$ with such an example in $S_n$ is $n=8$, where $\langle (1,8,5,4)(2,3,7,6), (2,7)(3,6), (1,6)(2,8)(3,5)(4,7)\rangle \cong C_4 \circ D_4$, where $\circ$ denotes a central product, and $\langle (2,7)(3,6), (1,7,4,6)(2,8,3,5)\rangle \cong C_2^2 \rtimes C_4$ are not even abstractly isomorphic---hence, \emph{a fortiori}, not conjugate---but have the same cycle indices.} however, we do not know if there exist such examples that are also the automorphism groups of (simple, uncolored) graphs. If not, that would imply a positive answer to Question~\ref{open:conj}.

Next we highlight a related question from above:

\begin{open}
Is every pair of graphs that are characterized by their symmetries distinguished by multiplicity obstructions? Is every graph characterized by its symmetries also characterized by its multiplicities?
\end{open}

 If a symmetry-characterized graph is not trivial (complete or empty graph) nor a star graph (complete bipartite graph $K_{n,1}$) nor its complement, then its automorphism group must have ``permutation rank 3'' (we thank Babai for this observation): the set of orbitals (=orbits on the pairs $V \times V$) must be precisely the diagonal $\{(v,v) : v \in V\}$, the edges, and the non-edges. As the rank-3 permutation groups have a complete classification \cite{Liebeck}, perhaps that classification would inspire a positive answer to the following question:

\begin{open}
Can isomorphism of symmetry-characterized graphs (to one another and/or to arbitrary graphs) be decided in polynomial time? \end{open}

The above question would be interesting to resolve even under the promise that (one or both of) the graphs are symmetry-characterized. Which also raises the question of: What is the complexity of deciding whether a graph is symmetry-characterized? (The analogous question for polynomials was raised in \cite[Open Question~3.4.8]{GrochowThesis}, and is still open as far as we are aware.)

We show, using one algorithm, that finding separating modules and computing multiplicities up to degree $d$ can be done in $n^{O(d)}$ time (Theorems~\ref{thm:algorithm} and \ref{thm:algorithm-multiplicities}). 

\begin{open}
Is there an algorithm to find a separating module in degree $\leq d$ faster than $n^{O(d)}$ time? Is there an algorithm to find a multiplicity obstruction in degree $\leq d$ faster than $n^{O(d)}$ time?
\end{open}

It is not clear to us which of these problems (if either) should be easier: on the one hand, finding a separating module requires just finding a single separating module, not computing how many of that equivalence class vanish. On the other hand, perhaps there are representation-theoretic tricks that allow one to compute a multiplicity obstruction without explicitly finding the underlying separating modules involved. It would also be interesting if there were efficient reductions in either direction between these two problems.

In addition to the result of Dawar, Pago, and Seppelt mentioned above \cite{DPS26}---that the count of $H$-subgraphs of $G$ (the value of $S_H(G)$) can be computed by symmetric algebraic circuits of size $n^{O(hdtw(H))}$---this same count can be computed in time $f(H)|G|^{O(tw(H))}$, where $tw$ denotes the treewidth and $f$ is some function that depends only on $H$ \cite{ColorCoding}. (Marx \cite{marx} showed that this essentially cannot be improved to $f(H)|G|^{o(tw(H)/\log tw(H))}$ unless the Exponential Time Hypothesis is false.) We ask whether the same holds for symmetric algebraic circuits:

\begin{open}
Can $S_H$ be computed symbolically by (symmetric) algebraic circuits of size or bit-size $f(H)|G|^{O(tw(H))}$?
\end{open}

Finally, we hope to make it the subject of future work to apply these ideas in the setting of \textsc{Tensor Isomorphism} (e.\,g., \cite{GrochowQiaoTI1}). There, as in the case of GCT more generally, orbits need no longer be closed, so we are not guaranteed that they are separated by invariants at all, and the use of separating modules in that setting, rather than merely invariants, may be more of a necessity than for \textsc{Graph Isomorphism}. As with our study of \textsc{GI}, applying these ideas to \textsc{Tensor Isomorphism} has the possibility of revealing new insights both about \textsc{Tensor Isomorphism} itself, and about the representation-theoretic approach to separating orbits (and orbit closures) suggested by GCT.  

\subsection{Paper outline}
In Sec.~\ref{sec:prelim} we give preliminaries. In Sec.~\ref{sec:degree} we prove that separating modules of degree $d$ can be computed in $n^{O(d)}$ time, their equivalence with invariants of low-degree, and their equivalence with the subgraph census of small subgraphs (=the initial coloring of low-dimensional WL). In Sec.~\ref{sec:support} we connect symmetric circuit size of $S_n$-modules to other complexity measures on the symmetric group \cite{DFLLV21}.
In Sec.~\ref{sec:sym-ckt-size} we prove the equivalence between separating modules of symmetric circuit size $n^{O(k)}$ and $O(k)$-WL. In Sec.~\ref{sec:multiplicity} we prove our results on the multiplicity approach---including limitations of multiplicity and occurrence obstrcuction, and the proof that multiplicity obstructions are in general stronger than occurrence obstructions. Finally, in Sec.~\ref{sec:invariant} we give the connections between invariant polynomials and the Reconstruction Conjectures.

\section{Preliminaries} \label{sec:prelim}

We will need the following basic lemma.
\begin{lemma}\label{lem:binomial}
    For any $1/\log_2 n \leq \varepsilon < 1$ and $s<\varepsilon n^{1-\varepsilon}$, we have $ n^s <\binom{n}{r}$ whenever $s/\varepsilon \leq r \leq n/2$.
\end{lemma}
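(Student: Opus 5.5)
Since $\binom{n}{r}$ is increasing in $r$ for $r\le n/2$, it suffices to prove the inequality at the smallest admissible value $r_0=\lceil s/\varepsilon\rceil$; every larger $r\le n/2$ then only increases the right-hand side. (If $s/\varepsilon>n/2$ there is nothing to prove.) So the plan is to bound $\binom{n}{r_0}$ from below and compare it with $n^s$.

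For the lower bound we use the elementary estimate $\binom{n}{r}\ge (n/r)^r$, valid for $1\le r\le n$. It comes from the product $\binom{n}{r}=\prod_{i=0}^{r-1}\frac{n-i}{r-i}$, in which each factor is at least $n/r$. This gives
\[
\binom{n}{r_0}\ \ge\ (n/r_0)^{r_0}.
\]
The hypothesis $s<\varepsilon n^{1-\varepsilon}$ gives $s/\varepsilon<n^{1-\varepsilon}$, so $r_0\le n^{1-\varepsilon}$ up to the ceiling. Hence $n/r_0\ge n^{\varepsilon}$, and therefore
\[
(n/r_0)^{r_0}\ \ge\ n^{\varepsilon r_0}\ \ge\ n^{s},
\]
since $r_0\ge s/\varepsilon$.

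The main obstacle is that all of these inequalities are non-strict, and the ceiling in $r_0$ has to be handled. Both issues come down to where the hypotheses give strictness.
\begin{itemize}
\item \emph{When $s/\varepsilon$ is not an integer.} Then $r_0>s/\varepsilon$, so $n^{\varepsilon r_0}>n^s$ strictly (using $n\ge 2$, which follows from $1/\log_2 n\le\varepsilon<1$). We still need $r_0\le n^{1-\varepsilon}$. For the estimate $n/r_0\ge n^\varepsilon$, I would instead apply the bound with $r=\lfloor n^{1-\varepsilon}\rfloor$ if needed, or simply note that $\binom{n}{r}\ge (n/r)^r$ is monotone enough near $r\approx n^{1-\varepsilon}$.
\item \emph{When $s/\varepsilon$ is an integer.} Then $r_0=s/\varepsilon<n^{1-\varepsilon}$ strictly, so $n/r_0>n^{\varepsilon}$ and the inequality is again strict.
\item \emph{Degenerate cases.} If $s\le 0$, then $n^s\le 1<\binom{n}{r}$ whenever $1\le r\le n-1$.
\end{itemize}

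The role of the hypothesis $\varepsilon\ge 1/\log_2 n$ is to give $n^{\varepsilon}\ge 2$, so that the ratio $n/r$ is genuinely at least $2$ and the bound $(n/r)^r$ grows. I would check carefully that this hypothesis is what makes the ceiling slack harmless: it is the only remaining delicate point, and the rest of the argument is routine.
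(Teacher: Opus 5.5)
Your route is the paper's: bound $\binom{n}{r}\ge (n/r)^r$, use monotonicity of $\binom{n}{r}$ for $r\le n/2$ to pass to the smallest admissible $r$, and use that $r<n^{1-\varepsilon}$ is equivalent to $n/r>n^{\varepsilon}$, which gives $(n/r)^r>n^{\varepsilon r}\ge n^s$. The paper simply evaluates at $r=s/\varepsilon$ as though it were an admissible integer. So you are right that rounding is the one delicate point. However, your proposal does not actually resolve it.

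Write $m=n^{1-\varepsilon}$. Your chain is fine whenever $\lceil s/\varepsilon\rceil\le m$. The uncovered case is $\lfloor m\rfloor<s/\varepsilon<m$ with $m\notin\mathbb{Z}$. Then $r_0=\lceil m\rceil>m$, so $n/r_0<n^{\varepsilon}$, and the step ``$r_0\le n^{1-\varepsilon}$ up to the ceiling, hence $n/r_0\ge n^{\varepsilon}$'' is false.

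\emph{First fallback.} Using $r=\lfloor n^{1-\varepsilon}\rfloor$ fails: here $\lfloor m\rfloor<s/\varepsilon$, so that bound yields only $n^{\varepsilon\lfloor m\rfloor}<n^s$.

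\emph{Second fallback.} ``Monotone enough'' is valid only when $r_0\le n/e$. The function $x\mapsto (n/x)^x$ increases exactly on $(0,n/e]$, and in that range it gives $(n/r_0)^{r_0}\ge (n/m)^m=n^{\varepsilon m}>n^s$.

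\emph{When $r_0>n/e$.} This can happen once $m$ is near or above $n/e$, i.e.\ $n^{\varepsilon}$ near or below $e$. In that regime the bound $(n/r)^r$ cannot finish the job at all. Take $n=100$ and $\varepsilon=1/\ln 100$, so $m=100/e\approx 36.788$, and choose $s$ with $s/\varepsilon=s\ln 100\in\bigl(37\ln\tfrac{100}{37},\,\tfrac{100}{e}\bigr)\approx(36.7873,\,36.7879)$. Then $r_0=37$ and $(100/37)^{37}<n^s$, even though $\binom{100}{37}>10^{27}$ is vastly larger.

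So in this regime you need a lower bound on $\binom{n}{r_0}$ sharper than $(n/r_0)^{r_0}$. For instance,
\[
\binom{n}{r_0}\ge\binom{n}{\lceil n/e\rceil}>e^{n/e}=\max_{x>0}(n/x)^x\ge (n/m)^m>n^s .
\]
The first inequality is monotonicity. The second can be proved via the binary-entropy bound $\binom{n}{r}\ge 2^{nH(r/n)}/(n+1)$, plus a direct check for small $n$.

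Two smaller points. The hypothesis $\varepsilon\ge 1/\log_2 n$ does not tame the ceiling; it only guarantees $m\le n/2$. Also, for $s=0$ and $r=0$ the strict inequality fails, so $s>0$ must be assumed, as the paper implicitly does when taking $s$-th roots.
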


By setting $\varepsilon$ as an appropriate function of $n$, this lets us extend the range of $s$ to which the lemma applies almost all the way up to $\Theta(n)$: the best possible is when $\varepsilon = 1/\log_2 n$, for then the lemma applies for $s$ up to $n / (2\log_2 n)$, and $s/\varepsilon = n/2$, so it works only for exactly one value of $r$, namely $s/\varepsilon = r = n/2$. (And this explains the lower bound on $\varepsilon$: when $\varepsilon < 1/\log_2 n$, the range of $r$ to which the lemma applies becomes empty.)

\begin{proof}
    When $r\leq \frac{n}{2}$, $(\frac{n}{r})^r\le\binom{n}{r}$. Thus, it suffices to show that $n^s<(\frac{n}{r})^r$. Since $\binom{n}{r}$ is an increasing function of $r$ for $r \leq n/2$, we prove the result for $r=s/\varepsilon$:
    \begin{align*}
        n^s<\left(\frac{n}{s/\varepsilon}\right)^{s/\varepsilon} & \iff (s/\varepsilon)^{s/\varepsilon} <n^{s(1/\varepsilon - 1)} \\
        & \iff (s/\varepsilon)^{1/\varepsilon}<n^{(1-\varepsilon)/\varepsilon}, 
    \end{align*}
    and the latter holds iff $s < \varepsilon n^{1-\varepsilon}$, completing the proof of the lemma.
\end{proof}

We will use the following group theoretic definition throughout this paper.
\begin{definition}
    The stabilizer subgroup of $G$ with respect to $x$ is the set of all elements in $G$ that fix $x$, i.\,e. $\stab_G(x)=\{g\in G:gx=x\}$.
\end{definition}

\subsection{Algebraic circuits}
By an algebraic circuit over a ring $R$ (which, in this paper, we will typically take to be either the integers or a field) we mean a directed acyclic graph where each source node is labeled either by a variable or by a constant from $R$, and each non-source node is labeled either by $+$ or $\times$, which we refer to as addition gates and multiplication gates, respectively. Each gate in an algebraic circuit naturally computes a polynomial over $R$ in the variables occurring on its source nodes, which we refer to as input gates.

By the \emph{size} of an algebraic circuit we mean its number of gates. By the bit-size of $C$, we mean the total bit-size of a natural description of $C$. For example, when $R$ has a natural notion of bit-size, such as $R \in \{\mathbb{Z}, \mathbb{Q}, \mathbb{Q}(i)\}$, we mean the sum of all the bit-sizes of the constants in $C$, plus a description of the DAG (say, by an adjacency matrix or adjacency list), plus the labels of all the gates.

A circuit is \emph{constant-free} if the only constants it uses on its source nodes are from $\{0,1,-1\}$ (other constants can effectively be used by using arithmetic gates to build them up from $\{\pm 1\}$). Over the integers, circuits of total bit-size $s$ and constant-free circuits of size $s$ are essentially equivalent: 

\begin{lemma} \label{lem:constant-free}
Given a (multi-output) algebraic circuit over $\mathbb{Z}$, of total bit-size $s$ and depth $\Delta$, there is a constant-free algebraic circuit of size $O(s \log s)$ and depth $\Delta + O(\log s)$ computing the same set of polynomials.

Conversely, a constant-free algebraic circuit of size $O(s)$ is already an algebraic circuit of total bit-size $O(s^2)$.
\end{lemma}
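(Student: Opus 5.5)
The plan is to replace each integer constant by a small constant-free subcircuit that builds it from $\pm 1$ by repeated doubling. Let the constants appearing on the source nodes of $C$ be $c_1,\dots,c_m$, with bit-lengths $b_1,\dots,b_m$. Since each constant's bits are counted in the total bit-size, we have $\sum_i b_i \le s$ and $m \le s$.

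For each constant $c$ of bit-length $b$, write $|c| = \sum_{j<b} \beta_j 2^j$ with $\beta_j \in \{0,1\}$. We first build the powers $2^0, 2^1, \dots, 2^{b-1}$ by a chain of addition gates: $2^{j+1} = 2^j + 2^j$, starting from $1$. This uses $b$ gates, but the chain has depth $b$, which is too deep. To fix the depth we instead build the powers by repeated squaring combined with a balanced multiplication tree. Namely, compute $2, 2^2, 2^4, \dots, 2^{2^{\lceil\log b\rceil}}$ by squaring, which costs $O(\log b)$ gates and depth. Each $2^j$ is then a product of at most $\log b$ of these, which costs $O(\log b)$ gates per power and depth $O(\log\log b)$. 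Next we sum the needed powers $\beta_j 2^j$ with a balanced binary tree of addition gates, costing $O(b)$ gates and depth $O(\log b)$. Finally we multiply by $-1$ if $c<0$.

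The cost per constant is then $O(b\log b)$ gates and depth $O(\log b) \le O(\log s)$. The powers of two can be shared globally, at most $s$ of them, so the total size is $O(s\log s)$ plus the original $O(s)$ gates. Each constant is available at depth $O(\log s)$, and feeding these subcircuits in place of the source nodes increases the depth by at most that amount, which gives $\Delta + O(\log s)$. Correctness is immediate because each subcircuit computes exactly $c_i$ as a constant polynomial. The one point to check is the bookkeeping that $\sum_i b_i\log b_i \le s\log s$.

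For the converse, a constant-free circuit with $O(s)$ gates can be described by listing each gate, its label (one of $+$, $\times$, a variable, or one of $0,\pm1$), and its inputs. A naive adjacency matrix has $O(s^2)$ bits, and all constants have $O(1)$ bits, so the total bit-size is $O(s^2)$. If variable indices need $O(\log s)$ bits each, this is still within the bound.

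The only delicate step is the depth bound in the first direction, which is why the plan uses repeated squaring and balanced trees rather than the naive doubling chain.
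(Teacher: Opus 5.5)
Your proposal is correct and follows essentially the same route as the paper. Both replace each constant by a gadget that builds $2,2^2,2^4,\dotsc$ by repeated squaring, forms the needed powers $2^j$ as products of these, and sums them, at a cost of $O(b\log b)$ gates and $O(\log b)$ extra depth per constant; the converse is just the $O(s^2)$ bits needed to describe the DAG. The bookkeeping step you flag is immediate, since $b_i \le s$ gives $\sum_i b_i\log b_i \le (\log s)\sum_i b_i \le s\log s$ (the paper uses an entropy argument instead), and with your global sharing of the powers of two it is not even needed.
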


\begin{proof}
Let $C$ be a multi-output integer circuit of total bit-size $s$ and depth $\Delta$. We will build a new constant-free circuit $C'$ that computes the same set of functions as $C$. 

If one of the source gates of $C$ is labeled by the integer constant $c$, then in $C'$ that source gate gets replaced by a constant-free gadget computing $c$ from its binary representation. If $c = \sum_{i=1}^k c_i 2^i$ where all $c_i$ are in $\{0,1\}$, then the gadget first computes $2$ using an addition gate $1+1$; from $2$, it then uses repeated squaring to compute $\{1,2,4,2^3,\dotsc,2^k\}$. To keep the depth of this gadget small, it first computes $2,2^2,2^4,2^8,2^{16},\dotsc,2^{2^i},\dotsc,2^k$ by repeated squaring in depth $\log k$. From these, it then computes the remaining powers $2^i$ (such that $c_i =1$) following the usual repeated-squaring-and-multiply scheme. (For example, it computes $2^7$ by taking the product $2 \times 2^2 \times 2^4$.) Finally, it replaces the gate labeled by $c$ with the sum of $\{2^i : c_i = 1\}$. The total size of this circuit is $O(k \log k)$ and its depth is $O(\log k)$.

The rest of $C'$ is the same as $C$. 

Thus, the depth of $C'$ might be $O(\log k)$ more (additively) than that of $C$, where $k$ is the maximum bit-size of any constant used in $C$. Since $C$ had total bit-size $s$, we have $k \leq s$.

The total size of $C'$ is larger than $C$ by a factor of $\sum b_i \log b_i$, where $b_i$ are the bit-sizes of the constants used on the source nodes of $C$. Since the total bit-size of $C$ is at most $s$, the sum $\sum b_i \log b_i$ is at most $s \log s$. For let $s_0 = \sum b_i \leq s$; then we have:
\begin{align*}
\sum b_i \log b_i & = s_0 \cdot \sum (b_i/s_0) \log b_i \\
 & = s_0 \cdot \left(\log s_0 + \sum (b_i/s_0) \log (b_i/s_0) \right) \\
 & \leq s_0 \log s_0 \\
 & \leq s \log s
\end{align*}
where the third line follows because $\{b_i / s_0\}$ is a probability distribution, so $\sum (b_i/s_0) \log (b_i/s_0)$ is the negative of the entropy, hence is non-positive.

For the converse, the $O(s^2)$ is just the number of bits needed to describe a DAG on $s$ vertices (plus $O(s)$ to describe the type of each gate). 
\end{proof}

\subsection{Representation theory} \label{sec:prelim:rep}
We use some basic definitions from any standard representation theory textbook, e.\,g., \cite{Fulton-Harris,Dummit-Foote,James-Kerber,Sagan}.

    Given a group $G$ and a vector space $V$ over a field $\F$, a \emph{representation} of $G$ on $V$ is a homomorphism $\rho:G\to \GL(V)$, namely, such that $\rho(g_1g_2)=\rho(g_1)\rho(g_2)$ for all $g_1,g_2\in G$. When we think of the vector space $V$ 
    as equipped with such an action of $G$, we call $V$ a \emph{$G$-module}. If $\dim_\F V=k<\infty$, by choosing a basis for $V$ we can use the 
    alternate notation $\rho:G\to \GL_k(\F)$. We sometimes refer to the vector space $V$ as the $G$-module if the homomorphism $\rho$ is obvious.
    Two representations $\rho,\tau$ are \emph{equivalent} if their homomorphisms are conjugate, i.\,e., there exists some invertible matrix $A$ such that $\rho(g)=A\tau(g)A^{-1}$ for all $g\in G$. In other words, representations are equivalent up to change of basis in the vector space. Given a representation $(\rho,V)$, a \emph{sub-representation} is a subspace $W$ of $V$ that is invariant under the action of $G$, viz., $\rho(g)W\subseteq W$ for all $g\in G$. 
    An \emph{irreducible representation} or \emph{irrep} is a representation that has no nonzero proper sub-representation. 
    
    Maschke's Theorem tells us that for a finite group $G$, every representation over a field of characteristic zero (or more generally coprime to $|G|$) is equivalent to a direct sum of irreps. The \emph{multiplicity} of an irrep $\rho$ in a representation $\sigma$ is the number of times representations equivalent to $\rho$ occur when writing $\sigma$ as a direct sum of irreps (this number is well-defined).

    A linear function between two representations $f\colon U \to V$ of the same group $G$ is called \emph{$G$-equivariant} if $f(g\cdot u)=g\cdot f(u)$ for all $u\in U$. We denote the set of all $G$-equivariant linear functions $U \to V$ by $Hom_G(U,V)$.

If $G$ is a group, a $G$-set is a set $X$ together with an action of $G$ on $X$. If $H \leq G$ is a subgroup, then $G$ acts on the set of cosets $G/H := \{gH : g \in G\}$ by left multiplication (every $G$-set is a disjoint union of such $G$-sets). Given a $G$-set $X$, there is an associated representation denoted $\F[X]$, which has a basis $\{e_x : x \in X\}$ that is in bijection with the set $X$, and on which $g$ acts by $g \cdot e_x := e_{g^{-1} \cdot x}$ (as in the case of the action on functions, the inverse is needed to make it a left action).

\begin{observation} \label{obs:perm-to-linear}
Suppose $V$ is a representation of $G$ over a field $\F$ and $H \leq G$. Then 
\[
Hom_G(\F[G/H], V) \cong \{v \in V: H \leq \stab_G(v)\},
\]
where $\cong$ denotes a bijection of sets. 
\end{observation}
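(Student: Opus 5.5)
The plan is to write down the two maps explicitly and check that they are mutually inverse. The forward map is evaluation at the identity coset. Given $\phi \in Hom_G(\F[G/H], V)$, set $v_\phi := \phi(e_{H})$, the image of the basis vector indexed by the trivial coset $H$. For $h \in H$ we have $h^{-1}\cdot H = H$, hence $h \cdot e_H = e_{h^{-1}H} = e_H$. Equivariance then gives $h\cdot v_\phi = \phi(h\cdot e_H) = \phi(e_H) = v_\phi$. So $H \leq \stab_G(v_\phi)$, and the forward map lands in the claimed set.

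The inverse map goes the other way. Given $v$ with $H \leq \stab_G(v)$, define $\phi_v$ on basis vectors by
\[
\phi_v(e_{gH}) := g^{-1}\cdot v ,
\]
and extend linearly. The inverse is forced by the convention $g\cdot e_x = e_{g^{-1}x}$: we need $e_{gH} = g^{-1}\cdot e_H$, so $\phi(e_{gH}) = g^{-1}\phi(e_H)$.

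The main obstacle, such as it is, is checking that $\phi_v$ is well defined. If $gH = g'H$, then $g' = gh$ for some $h\in H$. Then $g'^{-1}v = h^{-1}g^{-1}v$, which is not obviously equal to $g^{-1}v$. This is where the inverse convention must be handled carefully.

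So I will re-verify it. With $g\cdot e_x = e_{g^{-1}x}$ to be a left action, we need $(g_1g_2)\cdot e_x = g_1\cdot(g_2\cdot e_x) = e_{g_2^{-1}g_1^{-1}x}$, which matches $e_{(g_1g_2)^{-1}x}$. This is a left action of $G$ on the indices via $x\mapsto g^{-1}x$, but it is a right action in the index variable. The cleanest fix is to define $\phi_v(e_{gH})$ so that it depends only on the coset. Writing $e_{gH} = g^{-1}\cdot e_H$ (as $g^{-1}\cdot e_H = e_{gH}$), we must set $\phi_v(e_{gH}) = g^{-1}\cdot v$. Changing the representative to $gh$ gives $(gh)^{-1} v = h^{-1}g^{-1}v$.

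This suggests the convention is only consistent if one reads $G/H$ as right cosets or uses $g\cdot e_x = e_{gx}$. I will therefore state at the outset that we use the standard permutation action $g\cdot e_{xH} = e_{gxH}$. Equivalently, with the paper's inverse convention, we index by right cosets $Hg$. Under this convention $\phi_v(e_{gH}) := g\cdot v$ is well defined, since $ghv = gv$ for $h\in H$.

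Equivariance is then immediate on basis vectors: $\phi_v(g'\cdot e_{gH}) = \phi_v(e_{g'gH}) = g'g v = g'\phi_v(e_{gH})$. It extends to all of $\F[G/H]$ by linearity. Either convention gives a bijection, and only the bookkeeping differs.

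Finally, I check that the two maps are inverse to each other. First, $\phi_v(e_H) = v$, so $v_{\phi_v} = v$. Second, any equivariant $\phi$ is determined by $\phi(e_H)$, because the $e_{gH}$ form a single $G$-orbit spanning $\F[G/H]$. Hence $\phi = \phi_{v_\phi}$. This is just the universal property of the induced/permutation module, i.e.\ Frobenius reciprocity $Hom_G(\mathrm{Ind}_H^G \mathbf{1}, V)\cong Hom_H(\mathbf{1}, V) = V^H$.

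I also note that the bijection is in fact $\F$-linear, which is stronger than what the statement asks.
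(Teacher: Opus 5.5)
Your proof is correct and is essentially the paper's: evaluation at the trivial coset in one direction, $gH \mapsto g\cdot v$ in the other, with well-definedness coming from $g^{-1}g' \in H$; the paper's own proof also silently uses the standard action $h\cdot e_H = e_{hH}$ (through $f(hH) = h f(H)$), not the inverse convention stated in its preliminaries. One slip in your detour: $g_1\cdot(g_2\cdot e_x) = e_{g_1^{-1}g_2^{-1}x}$, not $e_{g_2^{-1}g_1^{-1}x}$, so $g\cdot e_x = e_{g^{-1}x}$ on left cosets is not a left action at all, and that is exactly why switching to $g\cdot e_{xH} = e_{gxH}$ is the right fix.
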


\begin{proof}
Given $f \in Hom_G(\F[G/H], V)$, let $v = f(H)$. Then for $h \in H$ we have $hv = h f(H) = f(hH) = f(H) = v$, where the middle equality follows by equivariance, and thus $H \leq \stab_G(v)$. 

Conversely, given $v \in V$ such that $H \leq \stab_G(v)$, we claim there is a unique $G$-equivariant linear map $\F[G/H] \to V$ defined by $gH \mapsto g \cdot v$. Since the cosets of $H$ form a basis of $\F[G/H]$ by construction, the main thing to check is that this function is well-defined. And for the latter, we need to show that if $gH = g' H$ then $g v = g'v$. We have $gH = g'H \Leftrightarrow H = g^{-1} g' H \Leftrightarrow g^{-1} g' \in H$. And in that case, we have $g^{-1} g'v = v$; multiplying both sides by $g$ we get $g'v = gv$ as desired.
\end{proof}

We will also use the following standard facts about representations of direct products. Given groups $G,H$ and representations $\rho \colon G \to \GL(V)$ and $\sigma \colon H \to \GL(U)$, we get a representation $(\rho \boxtimes \sigma) \colon G \times H \to \GL(V \otimes U)$ defined by
\[
(\rho \boxtimes \sigma)((g,h)) = \rho(g) \otimes \sigma(h).
\]
If the characteristic of $\F$ is coprime to $|G \times H|$, then the set of irreps of $G \times H$, up to equivalence, is precisely $\{\rho \boxtimes \sigma : \rho \text{ an irrep of } G, \sigma \text{ an irrep of } H\}$.

Lastly, given a group $G$ and a subgroup $G_1 \leq G$, if $V$ is a $G$-module with associated representation $\rho \colon G \to \GL(V)$, then $\Res^G_{G_1} V$ denotes the same vector space $V$, considered as a $G_1$-module ``by restriction,'' that is, the $G_1$-representation associated to $\Res^G_{G_1} V$ is precisely $\rho|_{G_1} \colon G_1 \to \GL(V)$. A straightforward exercise is that if $H_1 \leq H$, $V$ is a $G$-module and $U$ is an $H$-module, then $\Res^{G \times H}_{G_1 \times H_1} (V \boxtimes U) = (\Res^G_{G_1} V) \boxtimes (Res^H_{H_1} U)$.

\subsection{Representations of the symmetric group}
We use $S_n$ to denote the symmetric group, consisting of the $n!$ permutations of the set $[n] = \{1,\dotsc,n\}$. We refer the reader to standard texts \cite{Fulton-Harris,Fulton,James-Kerber,Sagan} for more details on the representation theory of the symmetric groups $S_n$.

A partition $\lambda = (\lambda_1,\dotsc,\lambda_n)$ of $n$ is a list of non-negative integers $\lambda_1 \geq \lambda_2 \geq \dotsb \geq \lambda_n$ such that $\sum_{i=1}^n \lambda_i = n$. The irreducible representations of $S_n$ are indexed in a natural way by the partitions of $n$---there is exactly one irreducible representation associated to each partition $\lambda$ that we will describe next, and all irreducible representations of $S_n$ (in characteristic zero) arise this way.

We now partially describe the irreducible representation associated to $\lambda$, known as a \emph{Specht module (of shape $\lambda$)}. We first describe a simpler, but non-irreducible representation. A \emph{row tabloid} of shape $\lambda$ is a list of subsets of $[n]$ that partition $[n]$, and such that the $i$-th subset has size exactly $\lambda_i$. The symmetric group acts transitively on the row tabloids of shape $\lambda$, and we define the row tabloid representation as the associated permutation representation.

The Specht module of shape $\lambda$ is a quotient of the row tabloid representation of shape $\lambda$; we will mostly not need the details of this construction, so we leave the details to the standard texts cited above.

The \emph{character} of a representation $\rho \colon G \to \GL(V)$ is the function $\chi_\rho \colon G \to \F$ given by $\chi_{\rho}(g) := \text{tr}(\rho(g))$.

\begin{fact} \label{fact:small-characters}
The character values of $S_n$ are integers of magnitude at most $\sqrt{n!}$.
\end{fact}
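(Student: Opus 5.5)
The plan is to prove the two parts of Fact~\ref{fact:small-characters} separately: first that each character value $\chi_\lambda(\pi)$ of $S_n$ is an integer, then that $|\chi_\lambda(\pi)| \leq \sqrt{n!}$.

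\textbf{Integrality.} I would show that every character value is both rational and an algebraic integer.
\begin{itemize}
\item It is an algebraic integer because $\rho(\pi)$ has finite order. Hence its eigenvalues are roots of unity, and $\chi_\lambda(\pi)$ is a sum of roots of unity.
\item It is rational because $\pi$ is conjugate in $S_n$ to $\pi^j$ for every $j$ coprime to the order of $\pi$, since $\pi^j$ has the same cycle type. So $\chi_\lambda(\pi) = \chi_\lambda(\pi^j)$. Applying the Galois automorphism $\zeta \mapsto \zeta^j$ of the relevant cyclotomic field sends $\chi_\lambda(\pi)$ to $\chi_\lambda(\pi^j)$. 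Therefore $\chi_\lambda(\pi)$ is fixed by the whole Galois group and lies in $\mathbb{Q}$.
\end{itemize}
A rational algebraic integer is an integer. Alternatively, one could cite that Specht modules are defined over $\mathbb{Q}$, or the Murnaghan--Nakayama rule, which computes the character as a signed count of rim-hook tableaux.

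\textbf{Magnitude bound.} Since $\chi_\lambda(\pi)$ is a sum of $\dim V_\lambda$ roots of unity, the triangle inequality gives $|\chi_\lambda(\pi)| \leq \chi_\lambda(1) = \dim V_\lambda$. To bound the dimension, I would use the standard identity $\sum_\lambda (\dim V_\lambda)^2 = |S_n| = n!$, which comes from decomposing the regular representation. This gives $(\dim V_\lambda)^2 \leq n!$, and hence $|\chi_\lambda(\pi)| \leq \sqrt{n!}$. The same bound also follows from the column orthogonality relation at the identity element.

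\textbf{Main obstacle.} The only step needing a little care is rationality. I would either run the Galois-conjugation argument above or cite the classical fact that all irreducible representations of $S_n$ are realizable over $\mathbb{Q}$.
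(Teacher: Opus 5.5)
Your proof is correct, but it takes a different route from the paper's in both halves.

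\textbf{Integrality.} The paper simply cites the Frobenius character formula, which expresses $\chi_\lambda(\pi)$ as a coefficient of a polynomial with integer coefficients. You instead give a self-contained argument. Character values are algebraic integers, and they are fixed by $\mathrm{Gal}(\mathbb{Q}(\zeta)/\mathbb{Q})$ because $\pi$ is conjugate to $\pi^j$ for every $j$ coprime to its order. This uses nothing about $S_n$ beyond "conjugacy classes are cycle types." It therefore works verbatim for any group in which each element is conjugate to all of its coprime powers.

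\textbf{Magnitude bound.} The paper uses row orthogonality directly. Since $\langle \chi_\lambda, \chi_\lambda \rangle = 1$, we get $\sum_{g \in S_n} |\chi_\lambda(g)|^2 = n!$, so each single term satisfies $|\chi_\lambda(g)|^2 \le n!$. You go through the dimension instead: first $|\chi_\lambda(\pi)| \le \chi_\lambda(1)$ by the triangle inequality on roots of unity, then $\chi_\lambda(1)^2 \le \sum_\mu \chi_\mu(1)^2 = n!$.

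\textbf{Comparison.} Your route costs one extra step, but it yields the sharper intermediate bound $|\chi_\lambda(\pi)| \le \dim V_\lambda$. The paper's is a one-liner that needs only the norm-one property of irreducible characters.
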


\begin{proof}
The character values of $S_n$ are all integers (e.\,g., follows directly from the Frobenius character formula \cite[Frobenius Formula~4.10]{Fulton-Harris}), and since the dot product of a character with itself is $1/n!$ times the sum of the squared norms of the character values, and the dot product of an irreducible character with itself is $1$, every character value is strictly less than $\sqrt{n!}$.
\end{proof}

\begin{fact} \label{fact:Sn-reps-by-degree}
If $V$ is the defining representation of $S_n$, then $V^{\otimes k}$ contains only representations of $S_n$ corresponding to partitions $\lambda$ with $\lambda_1 \geq n-k$.
\end{fact}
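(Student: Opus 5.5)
We prove Fact~\ref{fact:Sn-reps-by-degree} by induction on $k$, using the branching rule together with the tensor identity $W \otimes V \cong \mathrm{Ind}_{S_{n-1}}^{S_n}\Res^{S_n}_{S_{n-1}} W$. This identity holds because $V = \F[S_n/S_{n-1}]$ is the permutation representation on $[n]$, so $V \cong \mathrm{Ind}_{S_{n-1}}^{S_n}\mathbf{1}$, and tensoring with an induced module satisfies the projection formula.

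For the base case $k=0$, $V^{\otimes 0}$ is the trivial representation, which is the Specht module of shape $(n)$. It satisfies $\lambda_1 = n \geq n-0$.

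For the inductive step, assume every irreducible constituent $S^\lambda$ of $V^{\otimes k}$ has $\lambda_1 \geq n-k$. Write $V^{\otimes (k+1)} = V^{\otimes k} \otimes V$ and decompose $V^{\otimes k}$ into irreps by Maschke's theorem. It then suffices to show that for each irrep $S^\lambda$ with $\lambda_1 \geq n-k$, every constituent $S^\mu$ of $S^\lambda \otimes V$ has $\mu_1 \geq n-k-1$.

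By the identity above, $S^\lambda\otimes V \cong \mathrm{Ind}\,\Res\, S^\lambda$. By the branching rule, $\Res^{S_n}_{S_{n-1}} S^\lambda$ is the sum of $S^{\nu}$ over partitions $\nu$ obtained from $\lambda$ by removing one box. Each such $\nu$ has $\nu_1 \geq \lambda_1 - 1$. By Frobenius reciprocity together with the branching rule, $\mathrm{Ind}^{S_n}_{S_{n-1}} S^\nu$ is the sum of $S^\mu$ over partitions $\mu$ obtained from $\nu$ by adding one box, so $\mu_1 \geq \nu_1$. Combining these gives $\mu_1 \geq \lambda_1 - 1 \geq n-k-1$, which completes the induction.

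The only substantive input is the branching rule, which is standard in the cited texts \cite{Sagan,Fulton-Harris}, together with the projection formula. The projection formula is verified directly: the map $W \otimes \F[S_n/S_{n-1}] \to \mathrm{Ind}\,\Res\, W$ sending $w\otimes e_{gS_{n-1}} \mapsto g\otimes g^{-1}w$ is an isomorphism.

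An alternative argument avoids the branching rule. $V^{\otimes k}$ is a permutation representation on $[n]^k$. Each orbit of $S_n$ on $[n]^k$ has a point stabilizer of the form $S_{n-j}$ with $j \leq k$, where $j$ is the number of distinct entries in a representative tuple. Hence $V^{\otimes k}$ is a sum of modules $\mathrm{Ind}_{S_{n-j}}^{S_n}\mathbf{1}$. Each of these is a quotient of the Young permutation module $M^{(n-j,1^j)}$; more precisely, it is isomorphic to $M^{(n-j,1^j)}$, the tabloid module. By Young's rule, $M^{(n-j,1^j)}$ contains only $S^\lambda$ with $\lambda \trianglerighteq (n-j,1^j)$, and dominance forces $\lambda_1 \geq n-j \geq n-k$. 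Either route suffices. The only potential obstacle is which standard theorem (branching rule or Young's rule) to cite, and both are textbook results.
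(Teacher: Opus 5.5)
Your proof is correct, and both of your routes differ from the paper's.

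The paper splits $V = \mathbf{1} \oplus S^{(n-1,1)}$ and expands $V^{\otimes k}$ as a sum of tensor powers $(S^{(n-1,1)})^{\otimes k'}$ with $k' \le k$. It then appeals to an explicit combinatorial multiplicity formula for tensor powers of the standard representation \cite[Prop.~1]{GC05}, from which the bound $\lambda_1 \ge n-k'$ is read off.

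Your main argument replaces that citation with a self-contained induction. The projection formula $W \otimes V \cong \mathrm{Ind}_{S_{n-1}}^{S_n}\Res^{S_n}_{S_{n-1}} W$, combined with the branching rule in both directions, shows that tensoring with $V$ lowers the first row by at most one box. That is exactly the needed statement, and it uses only textbook facts. You give up the exact multiplicities that the Goupil--Chauve formula provides, but the Fact never needs them. You also avoid the slightly opaque step of extracting the bound from their formula.

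Your second route is also correct. You decompose $[n]^k$ into $S_n$-orbits whose stabilizers are conjugate to Young subgroups $S_{(n-j,1^j)}$ with $j \le k$, then apply Young's rule to $M^{(n-j,1^j)}$. This has an extra payoff: it exhibits $V^{\otimes k}$ as a direct sum of the modules $\F[S_n/S_{n-j}]$. These are precisely the modules to which the paper later applies Ellis--Friedgut--Pilpel \cite{EFP11} in Proposition~\ref{prop:alg-circuit}. So this route would let a single standard fact (Young's rule plus dominance) cover both places where the paper currently cites separate external results.
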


\begin{proof}
It is standard that the defining representation is the direct sum of a trivial representation and the so-called standard representation, corresponding to $\lambda = (n-1,1)$. Since tensor distributes over direct sum, the trivial representations here have the effect of including (not necessarily with the same multiplicity) the irreps from $V^{\otimes k'}$ with $k' \leq k$ among those in $V^{\otimes k}$. But otherwise they do not generate any ``new'' irreps that aren't already present in some tensor power of the standard representation.

For the tensor powers of the standard representation, see, for example, \cite[Prop.~1]{GC05}, which gives a fairly explicit combinatorial formula for the multiplicities of the representations appearing. It is apparent from that formula that only representations with $\lambda_1 \geq n-k$ can appear.
\end{proof}

\begin{corollary} \label{cor:Sn-reps-by-degree}
Let $\F[X]$ be the polynomial ring in $n^2$ variables $x_{ij}$, arranged into an $n \times n$ matrix $X$, acted on by $S_n$ by conjugation by ordinary permutation matrices (equivalently, $\pi \cdot x_{ij} = x_{\pi^{-1}(i), \pi^{-1}(j)}$). Then among the degree-$d$ polynomials in this ring, the $S_n$ representations that occur all correspond to partitions $\lambda$ with $\lambda_1 \geq n-2d$.
\end{corollary}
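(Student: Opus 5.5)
The plan is to realize the space of degree-$d$ polynomials as a quotient (indeed a direct summand) of a tensor power of the defining representation $V=\F^n$ of $S_n$, and then invoke Fact~\ref{fact:Sn-reps-by-degree}. Let $W$ be the span of the variables $x_{ij}$. The action $\pi\cdot x_{ij}=x_{\pi^{-1}(i),\pi^{-1}(j)}$ makes $W\cong V\otimes V$ as $S_n$-modules, via $x_{ij}\mapsto e_i\otimes e_j$. The inverse convention is harmless: it is just the dual of the permutation representation, and permutation representations are self-dual. Consequently $W^{\otimes d}\cong V^{\otimes 2d}$.

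The degree-$d$ homogeneous component $\F[X]_d$ is the symmetric power $\mathrm{Sym}^d(W)$. This is the image of $W^{\otimes d}$ under the $S_n$-equivariant multiplication map $w_1\otimes\dotsb\otimes w_d\mapsto w_1\dotsm w_d$. Equivariance holds because $S_n$ acts on $\F[X]$ by ring automorphisms. In characteristic zero, Maschke's Theorem makes the surjection $W^{\otimes d}\to \mathrm{Sym}^d W$ split, so every irrep occurring in $\mathrm{Sym}^d W$ occurs in $W^{\otimes d}\cong V^{\otimes 2d}$. By Fact~\ref{fact:Sn-reps-by-degree}, every such irrep corresponds to a $\lambda$ with $\lambda_1\ge n-2d$.

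If ``degree-$d$ polynomials'' is meant to include all polynomials of degree at most $d$, or the image in the multilinear quotient $\F[X]/\langle x_{ij}^2-x_{ij}\rangle$, the same conclusion follows. The space of polynomials of degree $\le d$ is $\bigoplus_{d'\le d}\mathrm{Sym}^{d'}W$, and each summand satisfies the bound with $n-2d'\ge n-2d$. The quotient map to the multilinear ring is $S_n$-equivariant, since the ideal is $S_n$-stable; its image is therefore a quotient of a module whose irreps already satisfy the bound, and by complete reducibility no new irreps appear.

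The only step needing care is the identification $W\cong V\otimes V$ together with its compatibility with the action, including diagonal variables $x_{ii}$ (they correspond to $e_i\otimes e_i$, so nothing changes). Once that identification is in place, the rest is formal.
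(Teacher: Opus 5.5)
Your proposal is correct and is essentially the paper's proof: identify the degree-1 polynomials with $V \otimes V$, realize $\mathrm{Sym}^d(V\otimes V)$ inside $V^{\otimes 2d}$ (the paper embeds it as a subrepresentation, you use a split quotient, which is equivalent in characteristic zero), and apply Fact~\ref{fact:Sn-reps-by-degree}. Your extra remarks on degree $\le d$ and the multilinear quotient are also correct, and they match how the corollary is actually used in Lemma~\ref{lem:spanning}.
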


\begin{proof}
Let $V$ denote the defining $n$-dimensional representation of $S_n$; then as an $S_n$-representation, the degree-1 polynomials are $V \otimes V$. Thus the degree-$d$ polynomials are the $d$-th symmetric power $\text{Sym}^d(V \otimes V)$, which is contained in $V^{\otimes 2d}$. The result now follows from Fact~\ref{fact:Sn-reps-by-degree}.
\end{proof}

The following result will only be used in a kind of ``side result'' at the end of Section~\ref{sec:WL-simulates-circuits}.

\begin{lemma} \label{lem:test-modules-of-high-degree}
Let $\F[X]$ be the polynomial ring in $n^2$ variables $x_{ij}$ arranged into an $n \times n$ matrix, acted on by $S_n$ as above. Let $I_{bool} := \langle x_{ij}^2 - x_{ij} : i,j \in [n] \rangle$. Let $\pi \colon \F[X] \to \F[X] / I_{bool}$ be the natural quotient map.

If $V$ is an irreducible representation contained in $\F[X]$ that contains some $f \in V$ such that there exists a $\{0,1\}$-matrix $A$ with $f(A) \neq 0$,  then $V$ is not contained in $I_{bool}$ and $\pi(V)$ is equivalent to $V$ as $S_n$-representations.
\end{lemma}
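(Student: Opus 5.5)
The plan rests on two facts. First, $I_{bool}$ is itself an $S_n$-stable subspace. Second, the quotient map $\pi$ is $S_n$-equivariant.

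For the first fact, the generators $x_{ij}^2 - x_{ij}$ are permuted among themselves by the action $\sigma \cdot x_{ij} = x_{\sigma^{-1}(i),\sigma^{-1}(j)}$. Since $\sigma$ acts by ring automorphisms, it therefore maps the ideal $I_{bool}$ to itself. Consequently $S_n$ acts on $\F[X]/I_{bool}$ by $\sigma\cdot(f + I_{bool}) = \sigma f + I_{bool}$, and $\pi$ is an $S_n$-equivariant linear map. In particular its restriction $\pi|_V \colon V \to \F[X]/I_{bool}$ lies in $Hom_{S_n}(V, \F[X]/I_{bool})$.

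Next I show that $V \not\subseteq I_{bool}$. Every element of $I_{bool}$ has the form $\sum_{i,j} g_{ij}\,(x_{ij}^2 - x_{ij})$, and each generator vanishes at every $\{0,1\}$-matrix. Hence every polynomial in $I_{bool}$ vanishes on all $\{0,1\}$-matrices. By hypothesis, $f \in V$ has $f(A) \neq 0$ for some $\{0,1\}$-matrix $A$, so $f \notin I_{bool}$. This gives $V \not\subseteq I_{bool}$, and equivalently $\pi(f) \neq 0$.

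Finally I apply irreducibility. The kernel of $\pi|_V$ is $V \cap I_{bool}$, which is the intersection of two $S_n$-stable subspaces. It is therefore a sub-representation of the irreducible $V$, so it is either $0$ or $V$. The previous step rules out $V$, because $f$ is not in the kernel. Hence $\pi|_V$ is injective. Being equivariant, it is an isomorphism of $S_n$-representations onto its image $\pi(V)$, which is what the lemma claims. This is essentially Schur's lemma, and no real obstacle is expected.

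The only point requiring care is checking that the action on $\F[X]$ is by algebra automorphisms, so that the ideal is stable rather than merely its generating set. That check is immediate, since the action is defined by substituting variables.
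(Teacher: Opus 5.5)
Your proof is correct. It has the same skeleton as the paper's: $I_{bool}$ is $S_n$-stable, so $V\cap I_{bool}$ is a subrepresentation of the irreducible $V$, hence is $0$ or $V$. Where you differ is in ruling out $V\subseteq I_{bool}$. The paper shows $I_{bool}$ is radical: by the Chinese Remainder Theorem, $\F[X]/I_{bool}$ is a product of $2^{n^2}$ copies of $\F$ and so has no nilpotents. It then invokes the Nullstellensatz to conclude that a polynomial vanishes on all $\{0,1\}$-matrices if and only if it lies in $I_{bool}$. You observe that only the easy direction of that equivalence is needed: every element of $I_{bool}$ vanishes at every $\{0,1\}$-matrix, so $f(A)\neq 0$ already forces $f\notin I_{bool}$. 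Your route is shorter and entirely elementary. The paper's detour buys a stronger fact, that $I_{bool}$ is the full vanishing ideal of the Boolean cube. Hence $\F[X]/I_{bool}$ is faithfully the ring of functions on $\{0,1\}$-matrices, and the converse ``$V\subseteq I_{bool}$ iff $V$ vanishes on all $\{0,1\}$-matrices'' holds. However, neither this lemma nor the remark after it uses that converse. You also make explicit that $\pi$ is $S_n$-equivariant, which is what turns injectivity of $\pi|_V$ into an equivalence of representations; the paper leaves that step implicit.
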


It follows that if $V \leq \F[X]$ is any test module (not necessarily irreducible) that is not identically zero on all $\{0,1\}$ matrices, then $\pi(V) \leq \F[X] / I_{bool}$ is a non-zero test module.

\begin{proof}[Proof of Lemma~\ref{lem:test-modules-of-high-degree}]
First, $I_{bool}$ is a sub-representation of $S_n$: if $f \in I_{bool}$, then $f = \sum_{i,j} f_{ij} \cdot (x_{ij}^2 - x_{ij})$ for some polynomials $f_{ij}$ (these $f_{ij}$ will not in general be uniquely determined by $f$, but that non-uniqueness will not be an issue here). Then for $\sigma \in S_n$, since $\sigma \cdot (x_{ij}^2 - x_{ij})$ is another polynomial of the same form, namely $x_{k\ell}^2 - x_{k\ell}$ where $\sigma(k)=i$ and $\sigma(\ell)=j$, we have $\sigma \cdot f = \sum_{i,j} f_{ij} \cdot (x_{\sigma^{-1}(i), \sigma^{-1}(j)}^2 - x_{\sigma^{-1}(i), \sigma^{-1}(j)})$, so $\sigma f$ is also in $I_{bool}$.

Thus $V \cap I_{bool}$ is also a representation of $S_n$. Since $V$ is irreducible, we must have $V \cap I_{bool}$ is either $0$ or all of $V$. 

Finally, it remains to show that if the polynomials in $V$ do not all vanish identically on all $\{0,1\}$ matrices, then $V$ is not contained in $I_{bool}$. For then it follows from the above that $V \cap I_{bool} = 0$, and thus $\pi$ maps $V$ isomorphically onto its image $\pi(V)$ (isomorphism of $S_n$-representations). 

To show the latter, it suffices to show that $I_{bool}$ is a radical ideal. Recall that an ideal $I$ is \emph{radical} if $f^k \in I$ for some $k \geq 1$ implies that $f \in I$. Hilbert's Nullstellensatz then tells us that if $f$ vanishes on all $\{0,1\}$ matrices, then some power of $f$ must be in $I_{bool}$, so if we show that $I_{bool}$ is radical, it follows that any $f$ vanishes on all $\{0,1\}$ matrices if and only if it is in $I_{bool}$.

This fact follows from some standard commutative algebra, but we include the proof here for completeness.

Since each generator of $I_{bool}$ is on a different variable, we have $\F[X]/I_{bool}\cong\bigotimes_{i,j} \F[x_{ij}]/ \langle x_{ij}^2 - x_{ij} \rangle$. From the (generalized) Chinese Remainder Theorem, we have that $\F[x] / \langle x^2 - x \rangle \cong (\F[x] / \langle x \rangle) \oplus (\F[x] / \langle x-1 \rangle) \cong \F \times \F$. Thus $\F[X] / I_{bool}$ is a tensor product over $\F$ of direct products of copies of $\F$; the result is a direct product of $2^{n^2}$ copies of $\F$. 

But if $f^k \in I_{bool}$, then the image of $f$ modulo $I_{bool}$ is nilpotent $f^k \equiv 0 \pmod{I_{bool}}$. But since $\F[X] / I_{bool}$ is a direct product of fields, the only nilpotent element there is $0$ itself, thus $f \equiv 0 \pmod{I_{bool}}$, in other words $f \in I_{bool}$. This completes the proof that $I_{bool}$ is radical, and thus the proof of the lemma.
\end{proof}

\section{Degree of separating modules, invariants, and subgraph census} \label{sec:degree}
In this section, we show that degree $d$ separating modules can be found in $n^{O(d)}$ time, and are equivalent in power to degree-$\Theta(d)$ invariants. We introduce the notion of ``support-degree'' which counts how many vertices are ``involved'' in a monomial (rather than degree, which counts edges), and show that separating modules of support-degree $\Theta(d)$, invariants of support-degree $\Theta(d)$, and the census of subgraphs on $\Theta(d)$ vertices (the initial coloring of $\Theta(d)$-WL) are all equivalent. Since the algorithmic result helps motivate the exploration of degree in the first place, we start there.

\subsection{Algorithms for finding low-degree separating modules}
\begin{theorem} \label{thm:algorithm}
There is an algorithm that, given two $n$-vertex graphs $G,H$ and an integer $d$ between $0$ and $n^2$, finds a separating module of degree $\leq d$ for $G,H$ or reports that none exists, in time $n^{O(d)}$.
\end{theorem}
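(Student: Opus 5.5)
The key reduction is that a separating module of degree $\leq d$ exists if and only if the largest such test module, namely the space of all multilinear polynomials of degree $\leq d$ vanishing on the orbit of $G$, fails to vanish on $A_H$. Let $P_{\leq d}$ be the space of multilinear polynomials of degree $\leq d$ in the $x_{ij}$, i.e.\ the degree-$\leq d$ part of $\F[x_{ij}]/\langle x_{ij}^2-x_{ij}\rangle$. It has dimension $\sum_{e\leq d}\binom{n^2}{e}=n^{O(d)}$ and is closed under the $S_n$ action, since the action preserves degree and multilinearity. Let $T_G := \{f \in P_{\leq d} : f(\pi\cdot A_G)=0 \text{ for all } \pi\in S_n\}$. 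This space is $S_n$-closed, because $(\sigma f)(\pi\cdot A_G)=f(\sigma^{-1}\pi\cdot A_G)$. Every degree-$\leq d$ test module vanishing on $A_G$ vanishes on the whole orbit, by the argument in the introduction, so it is contained in $T_G$. Hence a separating module of degree $\leq d$ exists iff $T_G$ itself separates, that is, iff some $f\in T_G$ has $f(A_H)\neq 0$. The algorithm therefore only needs to compute $T_G$ and test it against $A_H$.

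To avoid enumerating the $n!$ orbit elements, work dually. A multilinear monomial of degree $\leq d$ is $x_S=\prod_{(i,j)\in S}x_{ij}$ for an edge set $S$ with $|S|\leq d$, and $x_S(A)=1$ iff $S\subseteq E(A)$. Evaluating $f=\sum_S c_S x_S$ on a matrix therefore depends only on which small edge sets $S$ are contained in that graph. Let $v_A\in\{0,1\}^{\mathcal{S}}$, with $\mathcal{S}$ the family of edge sets of size $\leq d$, be the vector with $(v_A)_S=[S\subseteq E(A)]$. Then $f(A)=\langle c,v_A\rangle$. So $T_G$ is the annihilator of $W_G:=\mathrm{span}\{v_{\pi A_G}:\pi\in S_n\}$, and separation occurs iff $v_{A_H}\notin W_G$.

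The main obstacle is computing $W_G$ (or a spanning set of it) in $n^{O(d)}$ time without ranging over all $\pi$. I would first try replacing the span by the symmetrized vector: the $S_n$-average $\bar v_G=\frac{1}{n!}\sum_\pi v_{\pi A_G}$, together with its translates under the group algebra. More concretely, I would compute $W_G$ as the cyclic $\F[S_n]$-submodule of $\F^{\mathcal S}$ generated by $v_{A_G}$. The coordinate space $\F^{\mathcal S}$ is itself a permutation representation of dimension $N=n^{O(d)}$.

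The cyclic submodule can be computed by a closure procedure. Start with $\{v_{A_G}\}$ and repeatedly apply the two generators of $S_n$, a transposition and an $n$-cycle, to the current basis. After each round, add any images not yet in the span, using Gaussian elimination over $\mathbb{Q}$. The span grows at most $N$ times, each round costs $\mathrm{poly}(N)$, and since the generators generate $S_n$ the final span is closed under all of $S_n$. It therefore equals $W_G$, all in time $n^{O(d)}$. Entries stay in $\{0,1\}$ before elimination, so bit sizes remain polynomial in $N$.

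Finally, test whether $v_{A_H}\in W_G$ by one more Gaussian elimination. If $v_{A_H}\notin W_G$, solve for a vector $c$ with $c\perp W_G$ and $\langle c,v_{A_H}\rangle=1$, again a linear system of size $N$. Output the test module spanned by the $S_n$-orbit of $f=\sum_S c_Sx_S$; it is described implicitly by the generator $f$, or explicitly by running the same closure procedure on $f$, which gives a basis of dimension at most $N$. This module lies in $T_G$ and does not vanish at $A_H$, so it separates $G$ from $H$. If $v_{A_H}\in W_G$, every element of $T_G$ vanishes at $A_H$, so no degree-$\leq d$ separating module exists and the algorithm reports this.
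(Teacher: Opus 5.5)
Your argument is correct, and it takes a genuinely different and more elementary route than the paper.

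\textbf{The paper's route.} It computes the permutation matrices of two generators on $R_{\leq d}$. It splits $R_{\leq d}$ into irreducibles with the Chistov--Ivanyos--Karpinski algorithm, using that $S_n$-irreps are absolutely irreducible in characteristic zero. It matches isomorphic copies via Chistov--Ivanyos--Karpinski or Brooksbank--Luks. Then, inside each isotypic component, it takes the kernel of the $k \times m$ matrix of evaluations at $A_G$ to list every copy of that irrep vanishing on $G$, and does likewise for $H$.

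\textbf{Your route.} You observe that there is a unique maximal degree-$\leq d$ test module $T_G$ vanishing on $G$. You identify it as the annihilator of the cyclic module $W_G$ generated by the containment vector $v_{A_G}$. You compute $W_G$ by closing under two generators with Gaussian elimination. This reduces the whole question to the single membership test $v_{A_H} \notin W_G$.

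\textbf{What each buys.} Your approach uses nothing beyond linear algebra and avoids the absolute-irreducibility considerations. It also gives the $n^{O(d)}$ bit-size bound for free: $W_G$ has a $\{0,1\}$ spanning set of orbit vectors, and the witness $c$ solves a linear system with $\{0,1\}$ coefficients. The paper's heavier route buys structure: an explicit decomposition into irreducibles and a description of all vanishing copies in each isotypic component. The paper reuses this in Lemma~\ref{lem:spanning}, which needs an \emph{irreducible} separating module with controlled integer coefficients, and in Theorem~\ref{thm:algorithm-multiplicities}, which computes multiplicities. Your $T_G$ does not directly supply either.

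\textbf{Two small remarks.} First, your opening suggestion about the $S_n$-average $\bar v_G$ and its translates is a red herring. That span is at most one-dimensional and only sees invariant information; the cyclic module generated by $v_{A_G}$ itself is what you actually use. Second, separation is directional in the paper's definition, so your procedure answers exactly the question of separating $G$ from $H$. The paper also runs the check with $G$ and $H$ swapped, which you can do by rerunning your procedure with the roles exchanged.
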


When we say ``computes a separating module'', we mean that the algorithm outputs a list of integer polynomials, each specified by the list of monomials and nonzero coefficients, such that those polynomials span a separating module.

We note here that even if one is working with polynomials over the complex numbers, the algorithm works in the usual Turing machine model: A consequence of the proof is that, if there is such a separating module, then there is one that has a basis consisting of polynomials with integer coefficients of bit-size at most $n^{O(d)}$.

(The same result holds with the bound $2^{O(d^2)} n^{O(d)}$ when using the notion of ``support-degree'' that we introduce in Defintition~\ref{def:support-degree} below. The proof is the same, \emph{mutatis mutandis}.)

As a warm-up that involves essentially no representation theory, we start by proving the statement for invariants only:

\begin{proposition} \label{prop:compute-invariants}
There is an algorithm that, given two $n$-vertex graphs $G,H$ and an integer $d$ between $0$ and $n^2$, finds a separating \emph{invariant} of degree $\leq d$ for $G,H$ or reports that none exists, in time $n^{O(d)}$.
\end{proposition}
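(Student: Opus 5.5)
The plan is to use the fact that the invariants of degree $\leq d$ in $\F[x_{ij}]/\langle x_{ij}^2-x_{ij}\rangle$ have an explicit basis of orbit sums, and that a separating invariant exists if and only if some basis element takes different values on $A_G$ and $A_H$.

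\textbf{Step 1: an explicit spanning set.} In the quotient ring, every polynomial is uniquely a linear combination of multilinear monomials $x^S=\prod_{(i,j)\in S} x_{ij}$, where $S\subseteq[n]\times[n]$. The action of $S_n$ permutes these monomials and preserves degree $|S|$. Hence every invariant $f$ of degree $\leq d$ has constant coefficients on each $S_n$-orbit of monomials. So $f$ is a linear combination of the orbit sums
\[
m_{\mathcal O}=\sum_{S\in\mathcal O} x^S,
\]
where $\mathcal O$ ranges over orbits of edge-sets with $|S|\leq d$. Such an orbit is exactly an isomorphism type of a (directed, possibly looped) graph with at most $d$ edges. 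The number of monomials of degree $\leq d$ is at most $\sum_{k\leq d}\binom{n^2}{k}=n^{O(d)}$, so there are at most $n^{O(d)}$ orbits. We can enumerate them in time $n^{O(d)}$ by listing all edge-sets and bucketing them by orbit, for example by computing a canonical form of the at most $2d$-vertex graph each set spans, which costs $(2d)!\leq n^{O(d)}$.

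\textbf{Step 2: reduce separation to a difference of values.} For a single invariant $f$ to separate $G$ from $H$ means $f(A_G)=0\neq f(A_H)$. Since constants are invariants of degree $0$, such an $f$ exists if and only if some degree-$\leq d$ invariant $g$ has $g(A_G)\neq g(A_H)$; then $f=g-g(A_G)$ works. By linearity and Step 1, this happens if and only if some orbit sum has $m_{\mathcal O}(A_G)\neq m_{\mathcal O}(A_H)$. The value $m_{\mathcal O}(A_G)$ counts the edge-sets in $\mathcal O$ contained in $E(G)$, which is a (non-induced) subgraph count. The algorithm therefore evaluates every orbit sum on both graphs. This is a direct sum over at most $n^{O(d)}$ monomials, each evaluated in $O(d)$ time.

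\textbf{Step 3: output.} If some $\mathcal O$ gives differing values, the algorithm outputs $f=m_{\mathcal O}-m_{\mathcal O}(A_G)$. This polynomial has integer coefficients, namely $1$ and a count bounded by $n^{2d}$, so its bit-size is $n^{O(d)}$ in the monomial-list representation. If no orbit gives differing values, the algorithm reports that no separating invariant exists. The total running time is $n^{O(d)}$.

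\textbf{Main obstacle.} The only point needing care is the completeness claim in Step 1: that orbit sums span all invariants of degree $\leq d$ in the \emph{quotient} ring. This relies on the multilinear monomials forming a basis of the quotient that is permuted, not merely mixed, by $S_n$, and on degree being well-defined there. I would verify this by noting that reduction modulo $x_{ij}^2-x_{ij}$ commutes with the $S_n$-action.
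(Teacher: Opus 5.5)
Your proposal is correct and follows essentially the same route as the paper: form the $S_n$-orbit sums of the monomials of degree $\leq d$, evaluate each on $A_G$ and $A_H$, and output $m_{\mathcal O}-m_{\mathcal O}(A_G)$ for any orbit where the values differ, with a brute force of cost $(2d)!\leq n^{O(d)}$ over the at most $2d$ support vertices keeping the orbit computations within budget. The differences are minor: you group monomials into orbits via canonical forms rather than generating each orbit by choosing its support vertices and permuting them, and you explicitly justify the ``no separating invariant exists'' answer via the orbit-sum spanning set, which the paper leaves implicit here (it is the paper's later Observation~\ref{obs:invariants}).
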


\begin{proof}
Let $R$ be the polynomial ring modulo the Boolean axioms, that is, $R = \F[x_{ij} : i,j \in [n]] / \langle x_{ij}^2 - x_{ij} : \forall i,j\rangle$, and let $R_{\leq d}$ be the vector subspace consisting of all (multilinear) polynomials of degree at most $d$. (In fact, our proof will work even if we replace $R$ with the un-quotiented polynomial ring $\F[x_{ij}]$, but we will not need that here.) Let $N = \binom{n}{2}$ be the total number of variables (the proof works for other variants of graphs \emph{mutatis mutandis} by changing $N$).

For each of the $\leq N^{O(d)} = n^{O(d)}$ monomials of degree $\leq d$, let $f$ be the invariant polynomial which is the sum over the $S_n$-orbit of that monomial. If $f(G) \neq f(H)$, then let $\hat{f}(X) := f(X) - f(G)$. We then have that $\hat{f}$ is an invariant of degree $\leq d$ such that $\hat{f}(G)=0$ and $\hat{f}(H) \neq 0$. 

We finish with the runtime analysis. Enumerating the monomials takes $n^{O(d)}$ time, and evaluating $f$ on $G$ and $H$ takes $n^{O(d)}$ time. The naive way of computing $f$ by summing over its $S_n$ orbit would take $n!$ time, so we employ a different approach: each degree-$d$ monomial $x_{i_1 j_1} \dotsb x_{i_d j_d}$ corresponds to a $d$-edge graph $M$ with vertex set $[n]$ and edge set $\{(i_1,j_1), \dotsc, (i_d,j_d)\}$. The sum over monomials can be computed by enumerating all labeled graphs isomorphic to $M$ and summing over the monomials corresponding to those graphs. Let $s = |\{i_1,\dotsc,i_d,j_1,\dotsc,j_d\}|$, which is in between $d$ and $2d$. Then $M$ has at least $n-s \geq n-2d$ isolated vertices. We enumerate the labeled graphs isomorphic to $M$ by first picking the $s$ non-isolated vertices---of which there are $\binom{n}{s} \leq n^{O(d)}$ many---and then, on those $s$ vertices, brute-force enumerating all $s!$ permutations applied to the vertices of $M$, maintaining the set of labeled graphs that occur. (This may seem like too much time, but if $s!$ is approaching $n!$, then $d$ must be approaching $n$, so the goal of $n^{O(d)}$ is a very generous bound.) The total runtime of this procedure is then $s! n^{O(d)} \leq 2^{s \log s} n^{O(d)} \leq 2^{2d \log(2d)} n^{O(d)} \leq n^{2d \frac{\log 2d}{\log n}} n^{O(d)} \leq n^{4d(1 + o(1))+ O(d)} \leq n^{O(d)}$, as claimed.
\end{proof}

We now come to the proof of the more general theorem for separating modules, which starts from the same basic idea but needs more (algorithmic) representation theory.

\begin{proof}[Proof of Theoerem~\ref{thm:algorithm}]
Let $R$ be the polynomial ring modulo the Boolean axioms, that is, $R = \F[x_{ij} : i,j \in [n]] / \langle x_{ij}^2 - x_{ij} : \forall i,j\rangle$, and let $R_{\leq d}$ be the vector subspace consisting of all (multilinear) polynomials of degree at most $d$. (In fact, our proof will work even if we replace $R$ with the un-quotiented polynomial ring $\F[x_{ij}]$, but we will not need that here.)

Let $\pi_1, \pi_2$ be two generators of the symmetric group $S_n$, say $\pi_1 = (12), \pi_2 = (12\dotsb n)$. We can compute the action of $\pi_i$ on (multilinear) of monomials of degree $\leq d$ to get a matrix $M_i$ where for a monomial $m$, we have $M_i(m) = \pi_i(m)$. Since $S_n$ merely permutes monomials, the matrices $M_i$ are $\{0,1\}$-valued (in fact, permutation matrices), and they have dimension at most $\dim R_{\leq d} \leq \binom{n^2 + d}{d}$. So this can call be computed in time at most $\binom{n^2 + d}{d} \cdot n^c \leq n^{O(d)}$ (for some absolute constant $c$, to account for bookkeeping overhead; here we also use the assumption that $d \leq n^2$ to get that $\binom{n^2 + d}{d} \leq \binom{2n^2}{d} \leq (2n^2)^d \leq n^{d(2 + \log_n 2)} \leq n^{O(d)}$). 

Now we apply the algorithm of Chistov, Ivanyos, and Karpinski \cite[Theorem~6]{CIK} to the pair $M_1, M_2$, to break up $R_{\leq d}$ into irreps. That algorithm breaks a representation up into a direct sum of absolutely irreducible modules over the algebraic closure of the ground field; however, in the case of $S_n$, irreducible representations of $S_n$ over any field of characteristic zero are absolutely irreducible (meaning, they remain irreducible over the algebraic closure), thus their algorithm indeed returns a decomposition into irreps. 
That means we get a basis for $R_{\leq d}$ that is partitioned into bases for irreps. We can also use Chistov--Ivanyos--Karpinski \cite[Corollary~3]{CIK} or Brooksbank--Luks \cite[Theorem~3.5]{BL} to determine which irreps appearing are isomorphic to which other ones, as well as find an isomorphism between them. Thus, for each isomorphism type of irrep, at this point we have constructed a basis $v_1, \dotsc, v_k \in R_{\leq d}$ of one copy of that irrep, and linear isomorphisms between the copies of the irreps. So now focus on an isotypic component---the direct sum of all irreps that are equivalent to a given irrep---say it contains $m$ copies of a single irrep. Let $v_{i,j}$ be the image of $v_i$ in the $j$-th isomorphic copy, that is, for all $j,j'$, the map $v_{i,j'} \mapsto v_{i,j}$ for all $i=1,\dotsc,k$ is an equivalence of $S_n$-representations.

Now, for each of the $m$ copies in our isotypic component, $j=1,\dotsc,m$, consider the vector $u_j := (v_{1,j}(G), \dotsc, v_{k,j}(G))$ (here remember our vectors $v_{ij}$ are actually polynomials in $R$, so by $v_{i,j}(G)$ we mean apply that polynomial to the adjacency matrix of $G$). Arrange those vectors $u_j$ as the columns of a matrix $U$, of size $k \times m$, where $k$ is the dimension of the irrep we are focusing on and $m$ is its multiplicity in $R_{\leq d}$. Then the kernel of $U$, $\{x : Ux = 0\}$, consists of linear combinations of the columns that are zero on $G$. Each such linear combination corresponds to one copy of the irrep in that isotypic component that vanishes on $G$. 

By iterating over all the isotypic components that occur (of which there can be at most $\dim R_{\leq d} \leq n^{O(d)}$), we can then find all test modules that vanish on $G$. By doing similarly for $H$, we can then find all separating modules in $R_{\leq d}$. 

Finally, to turn the rational polynomials into integer ones we can clear denominators.
\end{proof}

\subsection{Relationship between separating modules and separating invariants}

With that motivating theorem in hand, we now pursue the equivalence between bounded-degree separating modules, bounded-degree separating invariants, and the counts of bounded-order subgraphs (the initial coloring of $O(1)$-WL).

Because our polynomials have variables for the edges, but WL works on vertices, in our work we have often run into a sort of ``quadratic mismatch'' between the two (a monomial of degree $d$ can involve up to $2d$ edges, and conversely a $d$-tuple of vertices can have up to $\Theta(d^2)$ edges, which would be encoded by a monomial of degree $\Theta(d^2)$). We introduce here a notion of ``support-degree'' for monomials that counts the number of vertices involved in a monomial:

\newcommand{\sdeg}{\mathrm{supp\text{-}deg}}

\begin{definition}[Support-degree] \label{def:support-degree}
The \emph{support-degree} of a monomial $x_{i_1 j_1} \dotsb x_{i_d j_d}$ is 
\[
\sdeg(x_{i_1 j_1} \dotsb x_{i_d j_d}) := |\{i_1,\dotsc,i_d,j_1,\dotsc,j_d\}|.
\]
The support-degree of a polynomial $f$, denoted $\sdeg(f)$, is the maximum of the support-degree of any of its monomials. If $T$ is a set of polynomials (in particular, a test module), we define its support-degree to be $\sdeg(T) := \max_{f \in T} \sdeg(f)$.
\end{definition}

We always have
\[
\sdeg(f) \leq 2\deg(f) \leq O(\sdeg(f)^2).
\]

\begin{observation}
For all polynomials $f,g$, 
\[
\sdeg(fg) \leq \sdeg(f) + \sdeg(g),
\]
and strict inequality is possible. That is, support-degree gives a filtration, but not a grading, on the polynomial ring in variables $x_{ij}$, as well as its quotient by the Boolean equations $x_{ij}^2 - x_{ij}$.
\end{observation}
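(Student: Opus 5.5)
The inequality $\sdeg(fg) \leq \sdeg(f) + \sdeg(g)$ reduces to monomials. Write $V(m)$ for the set of vertex indices appearing in the subscripts of a monomial $m$, so that $\sdeg(m) = |V(m)|$. For monomials $m, m'$, every subscript of $mm'$ comes from $m$ or from $m'$, so $V(mm') = V(m) \cup V(m')$. Hence $|V(mm')| \leq |V(m)| + |V(m')|$. In the Boolean quotient, $mm'$ is reduced by replacing $x_{ij}^2$ with $x_{ij}$. This deletes repeated variables but never removes a vertex index from the set, so the same equality of vertex sets holds.

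For general polynomials $f = \sum_a c_a m_a$ and $g = \sum_b d_b m'_b$, every monomial appearing in $fg$ after collecting terms occurs among the products $m_a m'_b$. Cancellation can only remove monomials, never add them. So every monomial of $fg$ has support-degree at most
\[
\sdeg(m_a) + \sdeg(m'_b) \leq \sdeg(f) + \sdeg(g).
\]

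One subtlety in the quotient ring is that $\sdeg$ should be defined on the unique multilinear representative. This is well-defined because the multilinear monomials form a basis of $\F[x_{ij}]/\langle x_{ij}^2 - x_{ij}\rangle$. Multiplying two multilinear polynomials and then multilinearizing keeps every resulting monomial's vertex set inside the union, by the observation above.

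For strictness, take $f = g = x_{12}$. Then $\sdeg(fg) = \sdeg(x_{12}^2) = 2$ (or $\sdeg(x_{12}) = 2$ in the quotient), while $\sdeg(f) + \sdeg(g) = 4$. Overlapping supports, such as $x_{12} \cdot x_{23}$ with support-degree $3 < 4$, give other examples.

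For the filtration claim, set $F_k = \{f : \sdeg(f) \leq k\}$. Each $F_k$ is a vector subspace, since the monomials of $f + g$ lie among those of $f$ and $g$. The subspaces are nested, $F_k \subseteq F_{k+1}$, and $F_k F_\ell \subseteq F_{k+\ell}$ by the inequality. Their union is the whole ring, and in the quotient ring $F_{n}$ already equals the whole ring.

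This is not a grading, because the product of two "homogeneous" pieces need not be homogeneous of the summed support-degree; the strict example $x_{12}\cdot x_{12}$ already shows this. There is no real obstacle here. The only point needing care is the well-definedness of $\sdeg$ on the quotient, which is handled by the multilinear-basis remark.
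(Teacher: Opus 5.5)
Your proof is correct and follows the same route as the paper: both reduce to monomials via $V(mm') = V(m) \cup V(m')$ and show strictness using overlapping vertex sets (the paper uses $x_{12}\cdot x_{13}$; you use $x_{12}\cdot x_{12}$ and $x_{12}\cdot x_{23}$). You also spell out what the paper leaves implicit: the passage from monomials to arbitrary polynomials, well-definedness on the Boolean quotient via the multilinear basis, and the filtration axioms. One small remark: $F_n$ is already the whole ring before quotienting, since every index lies in $[n]$.
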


\begin{proof}
For two monomials $m=x_{i_1 j_1} x_{i_2 j_2} \dotsb x_{i_d j_d}$ and $m'=x_{\ell_1 k_1} \dotsb x_{\ell_e k_e}$ supported on vertex sets $V = \{i_1,\dotsc,i_d,j_1,\dotsc,j_d\}$ and $V' = \{\ell_1,\dotsc,\ell_e,k_1,\dotsc,k_e\}$, respectively, we have
\begin{align*}
\sdeg(mm') & = |V \cup V'| = |V| + |V'| - |V \cap V'| \\
 & = \sdeg(m) + \sdeg(m') - |V \cap V'| \\
 & \leq \sdeg(m) + \sdeg(m')
\end{align*}
Since $V \cap V'$ need not be empty (e.\,g. $m = x_{12}, m' = x_{13}$), the final inequality can be strict.
\end{proof}

We next recall some key definitions from \cite{Dawar-Wilsenach20}.

\begin{definition}
    If $G$ is a permutation group acting on the variables of a circuit $C$, we say $C$ is $G$-symmetric if every $g \in G$ extends to at least one automorphism of $C$. 

    When $G \leq S_{n^2}$ is the action of $S_n$ on ordered pairs $\pi \cdot x_{i,j} = x_{\pi^{-1}(i), \pi^{-1}(j)}$ we call $C$ \emph{square-symmetric}. 
\end{definition}

\begin{remark} \label{rmk:symmetric}
We believe all the results in our paper continue to work \emph{mutatis mutandis} in the settings of (a) undirected graphs, where $x_{ij}$ and $x_{ji}$ are now considered the same variable, and the action is of $S_n$ by $\pi \cdot x_{ij} = x_{\pi^{-1}(i) \pi^{-1}(j)}$, (b) bipartite graphs where $S_n \times S_m$ acts on the variables by $(\pi,\rho) \cdot x_{ij} = x_{\pi^{-1}(i) \rho^{-1}(j)}$, (c) with or without self-loops in any of these situations; and many other standard variants of graphs.
\end{remark}

\begin{definition}
    If $G$ is a permutation group acting on the variables of a circuit $C$, we say $C$ is {\it ($G$-)rigid} if any permutation of the inputs that comes from $G$ extends to at most one automorphism of $C$.
\end{definition}

Our next lemma, which has several uses, is to show that irreducible separating modules $V$ of degree $d$ can be computed by symmetric algebraic circuits of size $n^{O(d)}$, in the sense that there is such a circuit with multiple outputs, whose outputs form a spanning set for $V$. 

We note that, although it may seem ``obvious'' that given a separating module of degree $d$, it should have a symmetric circuit of size $n^{O(d)}$---since, after all, there are only $n^{O(d)}$ monomials total of that degree---there are in fact several obstacles to proving this. One challenge is that we really want to bound the total bit-size of the circuits and not just the number of gates---later this will be helpful in order to translate from algebraic circuits back to Boolean threshold circuits. Another obstacle is that, even if we can find a single $f$ with small coefficients inside a given separating module, and compute $f$ by a (non-symmetric) depth-2 algebraic circuit of size $n^{O(d)}$, to get a \emph{symmetric} circuit, the natural approach is then to take the orbit of $f$. But if we did not choose $f$ carefully, its orbit could have size $n!$ (in fact, ``most'' $f$ will have orbits of that, maximum, size). When $d$ is small, the elements of that orbit cannot be linearly independent, but the naive way of getting a symmetric circuit from $f$ cannot avoid computing all the elements in the orbit of $f$. So we must also identify an element of the separating module whose orbit has size $n^{O(d)}$ (and not merely spans a set of dimension $n^{O(d)}$, which is always true in degree $d$). To overcome all of these obstacles simultaneously we use quite a bit of information from the representation theory of the symmetric group (albeit mostly in a black-box manner), in the proof of the following key lemma. 

\begin{lemma} \label{lem:spanning}
Suppose two graphs $G,H$ are separated by separating modules in degree $\leq d$. Then there is an irreducible separating module $V$ in degree $\leq d$ such that there is a rigid square-symmetric multi-output circuit $C$ of total bit-size $n^{O(d)}$ and depth 2 whose outputs are a spanning set for $V$.

If we replace degree with support-degree, the same result holds with total bit-size $2^{O(d^2)}n^{O(d)}$.
\end{lemma}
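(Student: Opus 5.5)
The plan is to pick, inside a separating module, a single polynomial $f$ whose $S_n$-orbit is small (size $n^{O(d)}$) and whose integer coefficients have small bit-size. We then build the obvious depth-2 circuit, with one multiplication gate per monomial and one addition gate per element of the orbit of $f$.

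\textbf{Step 1: reduce to an irreducible separating module.} Take a separating module $T$ in degree $\le d$, and decompose it by Maschke's Theorem into irreps $T=\bigoplus V_i$. Every $V_i$ vanishes on $A_G$. Since some $f\in T$ has $f(A_H)\neq 0$, some $V_i$ also fails to vanish on $A_H$. Call this irreducible separating module $V$; it is a copy of some Specht module $S^\lambda$. By Corollary~\ref{cor:Sn-reps-by-degree}, $\lambda_1\ge n-2d$.

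\textbf{Step 2: find a vector with a large stabilizer.} Let $Y\cong S_{n-2d}$ be the subgroup of $S_n$ fixing $\{1,\dots,2d\}$ pointwise. Since $\lambda_1\ge n-2d$, the restriction $\Res^{S_n}_{Y}S^\lambda$ contains the trivial representation. This follows from Young's rule: $S^\lambda$ occurs in the row-tabloid module $M^{(n-2d,1^{2d})}=\F[S_n/Y]$ because the relevant Kostka number is positive, and then Observation~\ref{obs:perm-to-linear} applies. Hence there is a nonzero $f\in V$ fixed by $Y$. Its orbit has size at most $[S_n:Y]\le n^{2d}$, and that orbit spans $V$ because $V$ is irreducible.

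For bit-size, I would describe $V$ as the kernel of explicit rational linear conditions on $R_{\le d}$, as produced in the proof of Theorem~\ref{thm:algorithm}. To these I add the linear conditions $\sigma f=f$ for the two generators $\sigma$ of $Y$, and solve the resulting system by Cramer's rule. The system has $n^{O(d)}$ unknowns and small integer (or Fact~\ref{fact:small-characters}-bounded) entries, so after clearing denominators $f$ has integer coefficients of bit-size $n^{O(d)}$. The part I expect to be the main obstacle is making this rigorous: one has to control the bit-sizes of the basis of the irrep returned by the Chistov--Ivanyos--Karpinski decomposition. If that is awkward, I would instead define $V$ intrinsically via an isotypic projector $\frac{\dim\lambda}{n!}\sum_g\chi_\lambda(g)g$ restricted to $Y$-invariants. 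Its integrality is controlled by Fact~\ref{fact:small-characters}, and on $Y$-invariant polynomials it can be computed through double cosets $Y\backslash S_n/Y$, of which there are only $n^{O(d)}$.

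\textbf{Step 3: build the circuit and check rigidity.} The bottom layer has one product gate for each multilinear monomial of degree $\le d$. These $n^{O(d)}$ monomials form an $S_n$-closed set. The top layer has one addition gate for each distinct polynomial $g\cdot f$ in the orbit of $f$, with constant input gates supplying the coefficients. Since constants are input nodes, this is implemented with one constant-times-monomial multiplication gate per (orbit element, monomial) pair, so the depth stays $2$. Each $\pi\in S_n$ acts on gates by permuting monomials and orbit elements, which gives square-symmetry. Rigidity follows because each gate is determined by the polynomial it computes, and these are pairwise distinct (merge duplicate gates if necessary), so an automorphism extending a given input permutation is forced. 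The total bit-size is $n^{O(d)}\cdot n^{O(d)}$.

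\textbf{Support-degree version.} The same argument applies. Here $\sdeg\le k$ means monomials live on at most $k$ vertices, so $\lambda_1\ge n-k$. The number of monomials is at most $\binom{n}{k}2^{k^2}=2^{O(k^2)}n^{O(k)}$, which gives the stated bound.
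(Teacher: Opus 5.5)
Your proposal is correct and follows the paper's proof: take a rational irreducible separating module from the algorithm of Theorem~\ref{thm:algorithm}, find a nonzero vector in it fixed by a large Young-type subgroup so that its orbit has size at most $n!/(n-2d)! \le n^{2d}$, and take the merged union of the orbit of its depth-2 sum-of-monomials circuit to get rigidity and square-symmetry. The differences are minor: you use the pointwise stabilizer $S_{n-2d}$ with Young's rule and Cramer's rule, where the paper uses the Young subgroup $S_\lambda$ and the averaging operator $\sum_{\sigma\in S_\lambda}\sigma$. You are also more explicit than the paper's ``mutatis mutandis'' that for support-degree $k$ one gets $\lambda_1 \ge n-k$.
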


\begin{proof}
The proof for support-degree is essentially the same as that for ordinary degree, \emph{mutatis mutandis}; the main difference is that there are $2^{O(d^2)}\binom{n}{d}$ monomials of support-degree $d$ rather than $n^{O(d)}$ monomials of degree $d$. 
Since the proof for support-degree is otherwise the same, we only give the details of the proof in the case of ordinary degree.

In the first half of the proof we find a polynomial $v_0$ whose orbit separates $G$ from $H$, has orbit size $n^{O(d)}$, and whose coefficients (in the monomial basis) have bit-size $n^{O(d)}$. In the second half we then use that polynomial $v_0$ to build a circuit satisfying the statement of the lemma, whose outputs span $V$.

\paragraph{Finding $v_0$.} Towards that end, suppose there is a separating module in degree $d$ that vanishes on $G$ but not on $H$. By the proof of Theorem~\ref{thm:algorithm}, we can compute bases for all the separating modules in $\F[X]_{\leq d}$, consisting of polynomials with integer coefficients of bit-size at most $n^{O(d)}$. (Note that this is not yet enough since it is possible that all of those polynomials have orbits as large as $n!$, which is too large for our use.)

Let $V$ be one such irreducible separating module, and let $\lambda$ be the partition indexing the isomorphism type of $V$. Let $S_{\lambda} := S_{\lambda_1} \times \dotsb \times S_{\lambda_n}$ be the corresponding Young subgroup. 
By, e.\,g., \cite[Sec.~7.2]{Fulton}, $V$ has a surjective $S_n$-equivariant linear map from $\F[S_n / S_{\lambda}]$, and thus by Obs.~\ref{obs:perm-to-linear} $V$ contains a non-zero vector that is stabilized by $S_{\lambda}$ (its stabilizer may be larger, but it is at least stabilized by $S_{\lambda}$). Let $\pi_{inv} = \sum_{\sigma \in S_{\lambda}} \sigma$ be the (scaled) projection operator onto the $S_{\lambda}$-invariants. Since $V$ contains $S_{\lambda}$-invariants, at least one of its basis vectors $v$ as computed above must have non-zero projection $\pi_{inv}(v)$; we may check all of them, since there are at most $n^{O(d)}$ of them. Let $v_0 = \pi_{inv}(v)$ for some such $v$. Since $v_0$ is a sum of at most $|S_{\lambda}| \leq n!$ many permuted copies of $v$, it has integer coefficients of magnitude at most $n! \cdot 2^{n^{O(d)}} \lesssim 2^{n^{O(d)} + n \log n} \leq 2^{n^{O(d)}}$, hence bit-size at most $n^{O(d)}$.
We have thus found the $v_0$ we sought at the outset.

\paragraph{Building the circuit.}
The remainder of the proof is now to build our circuit based on the polynomial $v_0$.

Let $C_0$ be a depth-2 circuit computing $v_0$ as a sum of monomials with integer coefficients. For each $\tau \in S_n$, let $\tau \cdot C_0$ be the circuit $C_0$ applied to $\tau$ applied to the input variables $x_{ij}$. For $\tau \in S_{\lambda}$, since $v_0$ is fixed by $S_{\lambda}$, we have that $\tau$ must permute the gates of $C_0$ (which correspond to monomials of $v_0$) amongst themselves, giving back the same circuit $C_0$. The orbit of $C_0$ thus consists of at most (remember that the stabilizer of $v_0$ could be larger than $S_{\lambda}$) $|S_n| / |S_\lambda| = n! / \prod_{i=1}^{n} \lambda_i!$ distinct circuits; let $C$ be the union of these circuits. (Conceptually, we may start with $C$ as a disjoint union, but if two gates compute the same monomial we may merge them together into a single gate.) Then $C$ is a multi-output, square-symmetric circuit whose outputs are precisely the polynomials in the $S_n$-orbit of $v_0$. 

Since $V$ is irreducible, the orbit of any nonzero vector in $V$ is a spanning set for $V$, and thus the outputs of $C$ are a spanning set for $V$.

Next we show that $C$ is rigid: suppose $\alpha$ is an automorphism of $C$. If it moves one of the product gates at depth 1, then $\alpha$ moves the corresponding monomials. Since they were distinct monomials, it must also swap the variables that go into those monomials, and thus $\alpha$ cannot act trivially on the variables. Next, suppose instead that $\alpha$ moves some (output) gates at depth 2. Even if two output gates are supported on the same set of monomials, they cannot have the same coefficients, and thus $\alpha$ must also move some of the gates at depth 1 (moving the monomials) or depth 0 (which also has the effect of swapping which monomial each depth-1 gate computes). In either case, $\alpha$ does not fix the variables, and thus $C$ is rigid.

We will now make use of the fact that by Cor.~\ref{cor:Sn-reps-by-degree}, $\lambda$ has $\lambda_1 \geq n-2d$. Thus $|S_{\lambda}| \geq (n-2d)!$, so the number of outputs of $C$ (the orbit of $v_0$) is at most $n! / (n-2d)! < n^{2d}$. Since $v_0$ has degree $d$, the size of $C_0$ is at most $n^{O(d)}$, and thus the size of $C$ is at most $n^{2d} \cdot n^{O(d)} \leq n^{O(d)}$. Furthermore, the coefficients appearing in $C$ are the same as those appearing in $C_0$, so each coefficient has magnitude at most $2^{O(dn^d \log n)}$, and thus bit-size at most $O(d n^d \log n)$. Since there are most $n^{O(d)}$ such coefficients, the total bit-size of $C$ is at most $n^{O(d)}$. 
\end{proof}

\begin{proposition} \label{prop:module-to-invariant}
\begin{enumerate}
\item    If two graphs $G,H$ are separated by separating modules in degree $\le d$, then they are separated by an invariant of degree $\le 2d$. Moreover, this invariant can be computed by a rigid square symmetric algebraic circuit of total bit-size $n^{O(d)}$ and depth 2.

\item    The same holds with degree replaced everywhere with support-degree, with the resulting total bit-size $2^{O(d^2)}n^{O(d)}$.

\item If $G,H$ are separated by a separating module with a spanning set computed by a (rigid) square-symmetric circuit with $s$ gates of total bit-size $b$, then they are separated by an invariant computed by a (rigid) square-symmetric circuit of with $\leq 2s+1$ gates of total bit-size $O(b + s\log s)$. 
\end{enumerate}
\end{proposition}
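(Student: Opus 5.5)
The natural construction is a sum of squares. Let $V$ be a separating module, vanishing on $A_G$ but not identically on $A_H$, and let $f_1,\dotsc,f_m$ be a spanning set of $V$ that is permuted by $S_n$ (up to the symmetric-circuit structure). We take the invariant
\[
F := \sum_{k=1}^m f_k^2 .
\]
Three checks are needed.

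\begin{itemize}
\item \emph{Invariance.} If the outputs $f_k$ are permuted among themselves by $S_n$, then $F$ is $S_n$-invariant. This holds for an orbit of a single polynomial, and for the outputs of a square-symmetric circuit, since automorphisms extending $\pi$ permute output gates.
\item \emph{Vanishing on $G$.} Each $f_k$ vanishes on $A_G$, so $F(A_G)=0$.
\item \emph{Non-vanishing on $H$.} Some $f_k(A_H)\neq 0$. Because the spanning set can be chosen with integer (hence real) coefficients and $A_H$ is $\{0,1\}$, each $f_k(A_H)$ is real, so $F(A_H)>0$.
\end{itemize}

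Degree doubles under squaring, and passing to the multilinear quotient does not increase degree. For support-degree, the subadditivity observation gives $\sdeg(f^2)\le 2\sdeg(f)$; in fact $\sdeg(f^2)\le\sdeg(f)$ up to cross terms, and either bound suffices for part 2.

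For parts 1 and 2, we apply Lemma~\ref{lem:spanning} to get an irreducible separating module $V$ with a rigid square-symmetric depth-2 circuit $C$ of total bit-size $n^{O(d)}$ (respectively $2^{O(d^2)}n^{O(d)}$). In that proof, the outputs of $C$ are exactly the $S_n$-orbit of $v_0$, which has integer coefficients, so $F$ is invariant with real values. For depth 2, we expand each $f_k^2$ as a sum of monomials, multilinearizing as allowed in $R$, and sum over $k$. The result is a depth-2 circuit: a sum of monomials with integer coefficients. There are $n^{O(d)}$ monomials of degree $\le 2d$, and the coefficients are bounded by (number of products) $\times$ (maximum coefficient)$^2$, so the bit-size stays $n^{O(d)}$.

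Symmetry holds because we take one gate per monomial and the coefficient of a monomial in $F$ is constant on $S_n$-orbits. Rigidity follows as in Lemma~\ref{lem:spanning}: distinct monomial gates are determined by their variable sets, and there is a single output. The main subtlety is keeping depth 2 rather than depth 3, and bounding the new coefficients; both are routine counting.

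For part 3, we keep the given circuit $C$ with outputs $f_1,\dotsc,f_m$, add one multiplication gate $f_k\cdot f_k$ for each output (at most $s$ new gates), and add a single top addition gate. This gives at most $2s+1$ gates.

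\begin{itemize}
\item \emph{Symmetry.} Any automorphism of $C$ extending $\pi$ permutes the output gates, so it extends to the squaring gates and fixes the top gate.
\item \emph{Rigidity.} Any automorphism of the new circuit must fix the top gate, permute the squaring gates, and restrict to an automorphism of $C$, so uniqueness is inherited from $C$.
\item \emph{Bit-size.} The $O(s\log s)$ term accounts for describing the new gates and their wiring.
\end{itemize}

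The remaining obstacle is realness. If $C$ is over $\mathbb{C}$, $\sum f_k^2$ could cancel at $A_H$. We handle this by assuming integer constants, as is the convention throughout for bit-size. Alternatively, if $\mathbb{Q}(i)$ constants are present, we replace $C$ by its real and imaginary parts; but the stated gate count suggests the integer/real setting is intended.
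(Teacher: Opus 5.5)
Your proposal is correct and follows the paper's proof: sum the squares of the outputs of the rigid square-symmetric circuit from Lemma~\ref{lem:spanning}, whose outputs are the integer-coefficient orbit of $v_0$; use realness of the values on $\{0,1\}$-matrices to get non-vanishing on $H$; flatten to depth 2 by expanding the squares into monomials; and, for part 3, add one squaring gate per output plus a single top addition gate. Your extra care makes explicit two points the paper's brief proof leaves implicit: collecting coefficients so that there is one gate per monomial (so no twins appear), and noting that part 3 needs real (e.g.\ integer or rational) constants, since over $\mathbb{C}$ the sum $\sum f_k^2$ can vanish at $A_H$ even when some $f_k(A_H)\neq 0$.
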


Since separating invariants are a special case of separating modules, this shows that whether measured by degree or by circuit-size, asymptotically there is little difference in the distinguishing power of separating modules and separating invariants for \textsc{Graph Isomorphism}.

\begin{proof}
1. If $G,H$ admit a separating module in degree $d$, let $V$ be an irreducible separating module in degree $d$. To make the separation of ideas clear, we first prove the existence of an invariant of low degree that also separates $G$ from $H$, then we show how to reduce its circuit complexity.

Take any nonzero polynomial $f \in V$ with integer coefficients; let $f_1,\dotsc,f_m$ be its orbit.  Let $\hat{f} = \sum_{i=1}^m f_i^2$. Since $S_n$ acts on these irreps by rational matrices, all the $f_i$ have rational coefficients. Thus their squares are polynomials that only take non-negative values on $\{0,1\}$-adjacency matrices (or indeed on any real-valued matrices). Thus the output of $\hat{f}$ has the property that it is $0$ on an adjacency matrix $A$ if and only if every polynomial $f_i$ vanishes on $A$, since when any term is non-zero, it must be positive, so the terms can never cancel one another. Since $V$ is irreducible, any nonzero orbit spans $V$, and thus $\hat{f}(A)=0$ iff $h(A)=0$ for all $h \in V$. Thus, if $V$ vanishes on $G$ but not on $H$, the same is true of $\hat{f}$.

However, the size of $f$'s orbit could be as large as $n!$, far too large for our circuit size bound. We know show how to choose $f$ so that we get a small circuit. 

Let $C$ be the rigid square-symmetric circuit of bit-size $n^{O(d)}$ and depth 2 computing a spanning set for $V$ from Lemma~\ref{lem:spanning}. Let $C'$ be the circuit that starts the same as $C$, but then takes each output of $C$, squares it, and adds them all together. Then $size(C') \leq 2size(C)+1$, and the degree of the polynomial computed by $C'$ is $2 \deg(V)$. Furthermore, since $C$ was (square-)symmetric, summing the squares of its outputs produces an invariant polynomial. The same argument as above shows that $C'$ vanishes on an adjacency matrix $A$ if and only if every polynomial in $V$ does. 

Finally, to get $C'$ down to depth 2, we note that we can replace each output gate $g$ of $C$ with a depth-2 circuit computing $g^2$ as a sum of monomials, and of size at most the square of the size of the subcircuit rooted at $g$. Once this is done, the final addition gate of $C'$ can be merged into the addition gates that are the outputs of $C$, resulting overall in a circuit whose size is still $n^{O(d)}$ but whose depth is now only 2, that computes a separating \emph{invariant} for $G,H$ of degree at most $2d$, as claimed.

2. The same proof works for support-degree, \emph{mutatis mutandis}.

3. Use the same ``sum the squares of the outputs'' trick from the first paragraph above. (The extra $\log s$ factor in the bit size is just that the connections on the new squaring gates each take $O(\log s)$ bits to specify.)
\end{proof}

\subsection{Equivalence between support-degree-$d$ separating invariants and $d$-vertex subgraph counts} \label{sec:invariants1}
In order to characterize the power of invariants of degree $d$ (and hence, by Proposition~\ref{prop:module-to-invariant}, also separating modules of degree $O(d)$), we give an explicit basis for such invariants. If $f$ is an invariant polynomial on the adjacency matrices of $n$-vertex graphs, consider one of its monomials $\alpha m$, where $\alpha \in \mathbb{F}$ and $m$ is a product of variables. Let $H$ be the $n$-vertex graph whose edge set is precisely $E(H) := \{(i,j) : x_{ij} \text{ divides } m\}$. Then we see that for any graph $G$, $m(G)=1$ if and only if $H$ occurs in $G$ as a (not necessarily induced) subgraph in the position specified by $H$. Furthermore, since $f$ is invariant, every monomial of the form $\sigma(m)$ ($\sigma \in S_n$) must also occur in $f$ with the same coefficient $\alpha$. This leads us to define:

\[
S_H :=\frac{1}{|\Aut(H)|}\sum_{\sigma\in S_n}\sigma\left(\prod_{(i,j)\in E(H)}x_{ij}\right) = \sum\left\{\sigma\left(\prod_{(i,j)\in E(H)}x_{ij}\right) : \sigma \in S_n\right\}
\]
The normalizing factor here is set so that $S_H$ is precisely the sum over the monomials $\sigma(m)$, with each one occuring with coefficient $1$. Put another way: since we are summing over the whole symmetric group, if the graph $H$ associated to $m$ has some automorphisms, we would get each such monomial with coefficient $|\Aut(H)|$. The degree of $S_H$ is $|E(H)|$ and the support-degree of $S_H$ is the number of non-isolated vertices of $H$.

The preceding reasoning leads quickly to:

\begin{observation}\label{obs:invariants}
    Every (degree $d$ homogeneous, resp., support-degree $d$ homogeneous) invariant polynomial is a linear combination of the $S_H$ polynomials (where $|E(H)|=d$, resp., $|V(H)|=d$).
\end{observation}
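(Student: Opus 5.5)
The plan is to show that the monomials of an invariant polynomial split into $S_n$-orbits and that $f$ has a constant coefficient on each orbit; the sum over each orbit is then an $S_H$. Work in the multilinear quotient ring $R = \F[x_{ij}]/\langle x_{ij}^2 - x_{ij}\rangle$, or in the un-quotiented ring with multilinearity assumed. There, multilinear monomials form a basis, and $S_n$ acts on this basis by permutation: $\sigma \cdot \prod_{(i,j)\in E} x_{ij} = \prod_{(i,j)\in E} x_{\sigma^{-1}(i)\sigma^{-1}(j)}$. Identifying a monomial $m$ with the labeled graph $H_m$ on vertex set $[n]$ whose edges are the variables dividing $m$, this action is exactly relabeling of $H_m$. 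Hence the orbit of $m$ is the set of monomials of labeled graphs isomorphic to $H_m$, and by definition $S_{H_m}$ is the sum of that orbit, each monomial with coefficient $1$. The normalization $1/|\Aut(H)|$ is precisely what makes this true, by orbit--stabilizer.

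Now write an invariant $f = \sum_m \alpha_m m$ in the monomial basis. Invariance gives $\sigma\cdot f = f$ for all $\sigma$, and since $\sigma$ permutes the basis, comparing coefficients gives $\alpha_{\sigma\cdot m} = \alpha_m$. So $\alpha$ is constant on orbits, and grouping terms by orbit gives
\[
f = \sum_{[H]} \alpha_{H}\, S_H ,
\]
where the sum runs over isomorphism types of graphs on $[n]$ (equivalently, over orbits of monomials) and $\alpha_H$ is the common coefficient on that orbit.

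It remains to check the grading conditions. Degree is preserved by the action, and so is support-degree, because a permutation maps the support set $\{i_1,\dots,j_d\}$ bijectively onto another set of the same size. Thus each orbit consists of monomials of a single degree $|E(H)|$ and a single support-degree, namely the number of non-isolated vertices of $H$. If $f$ is homogeneous of degree $d$, resp.\ of support-degree $d$, then only orbits with $|E(H)| = d$, resp.\ with exactly $d$ non-isolated vertices, occur. In the support-degree case, $|V(H)| = d$ should be read as $H$ considered up to isolated vertices, i.e.\ the $d$-vertex graph without isolated vertices.

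The only subtle point is the basis claim in the quotient ring, namely that the multilinear monomials are linearly independent there, so that coefficients are well defined. This is standard, and it follows from the structure of $R$ as a product of copies of $\F$ (as in the proof of Lemma~\ref{lem:test-modules-of-high-degree}), or from a direct dimension count: $\dim R = 2^{n^2}$, which equals the number of multilinear monomials. With that in place, the observation follows immediately.
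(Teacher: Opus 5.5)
Your proof is correct and follows the same route as the paper. Both decompose an invariant into $S_n$-orbits of monomials, identify each orbit sum with $S_{H_m}$, and observe that degree and support-degree are constant on orbits. You are more explicit than the paper about why coefficients are constant on orbits (the action permutes the basis of multilinear monomials), and your reading of $|V(H)|=d$ as the number of non-isolated vertices matches the paper's choice of $H_m$ on vertex set $\{i_1,\dotsc,j_k\}$.
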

\begin{proof}
    If $f$ is an invariant polynomial, then it is a linear combination of orbits of monomials. To each monomial $m = x_{i_1 j_1} x_{i_2 j_2} \dotsb x_{i_k j_k}$, we associate the graph $H_m$ on vertices $\{i_1,i_2,\dotsc,i_k, j_1, \dotsc, j_k\}$ with edge set $\{(i_1, j_1), \dotsc, (i_k, j_k)\}$. Then the sum over the orbit of $m$ is a scalar multiple of $S_H$. Moreover, if $m$ has degree $d$, then $H_m$ has exactly $d$ edges, and if $m$ has support-degree $d$, then $H_m$ has exactly $d$ vertices.
\end{proof}

We now have all the ingredients in place for the main complexity theorem of this section. To state it more succinctly, we introduce the notation:
\newcommand{\WLDim}{\textsf{WLDim}}
\newcommand{\SepDeg}{\textsf{SepDeg}}
\newcommand{\InvDeg}{\textsf{InvDeg}}
\newcommand{\SepSuppDeg}{\textsf{SepSuppDeg}}
\newcommand{\InvSuppDeg}{\textsf{InvSuppDeg}}
\begin{itemize}
\item $\WLDim_0(G,H)$ is the minimum $k$ such that $k$-vertex subgraph census (the 0th round of $k$-WL) distinguishes $G$ and $H$

\item $\SepDeg(G,H)$ is the minimum degree of any separating module for $G,H$; $\SepSuppDeg(G,H)$ is the minimum support-degree of any separating module for $G,H$.

\item $\InvDeg(G,H)$ is the minimum degree of any separating invariant for $G,H$; $\InvSuppDeg(G,H)$ is the minimum support-degree of any separating invariant for $G,H$.
\end{itemize}

\begin{theorem} \label{thm:degree-new}
For all graphs $G,H$ we have
\[
\WLDim_0(G,H) \leq \InvSuppDeg(G,H) \leq 2\SepSuppDeg(G,H) \leq 2\WLDim_0(G,H)
\]
and
\[
\WLDim_0(G,H) \leq 2\InvDeg(G,H) \leq 4\SepDeg(G,H) \leq O(\WLDim_0(G,H))^2).
\]
Consequently, $\WLDim_0 = \Theta(\InvSuppDeg) = \Theta(\SepSuppDeg)$, and $\WLDim_0$ is $O(1)$ if and only if $\InvDeg$ and $\SepDeg$ are.
\end{theorem}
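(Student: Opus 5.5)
I will prove the two chains inequality by inequality. Each link is either a direct translation between invariants and subgraph counts, or an instance of Proposition~\ref{prop:module-to-invariant}. The basic tool is the span of the polynomials $S_H$. For a graph $G$, the value $S_H(A_G)$ is the number of (not necessarily induced) copies of $H$ in $G$, where $H$ is regarded as its set of non-isolated vertices and edges.

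\textbf{First link: $\WLDim_0 \le \InvSuppDeg$.} Suppose $f$ is a separating invariant of support-degree $k$. By Observation~\ref{obs:invariants}, $f$ is a linear combination of polynomials $S_H$ whose graphs $H$ have at most $k$ non-isolated vertices. Each such count $S_H(G)$ is determined by the census of $k$-vertex induced subgraphs of $G$. Indeed, every copy of $H$ lies inside some $k$-subset $U$ of vertices. Counting copies of $H$ spanning exactly the vertex set $W$, for each $W\subseteq U$ with $|W|$ equal to the number of non-isolated vertices of $H$, is a function of $G[U]$. Summing over $k$-sets $U$ overcounts each copy by a fixed factor $\binom{n-|W|}{k-|W|}$. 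So if $G,H$ have the same $k$-census, then every $S_H$ with at most $k$ vertices agrees on them, and hence so does $f$. This is a contradiction, so $\WLDim_0 \le k$.

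For the degree chain, the same argument applies because a degree-$k$ monomial has support-degree at most $2k$. This gives $\WLDim_0 \le \InvSuppDeg \le 2\InvDeg$.

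\textbf{Second link: invariants versus modules.} The middle inequalities $\InvSuppDeg\le 2\SepSuppDeg$ and $2\InvDeg\le 4\SepDeg$ are exactly Proposition~\ref{prop:module-to-invariant}, parts 1 and 2, via the sum-of-squares construction. That construction at most doubles degree, and since $\sdeg(f^2)\le 2\sdeg(f)$, it at most doubles support-degree as well.

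\textbf{Last link: modules from the census.} This step needs the census to be expressible by low-degree polynomials. For $k=\WLDim_0(G,H)$, some isomorphism type $K$ on $k$ vertices has different induced counts in $G$ and $H$. The induced count of $K$ is a polynomial $I_K$. It is obtained by summing over placements of $K$ on $k$-subsets the product $\prod_{e\in E(K)} x_e\prod_{e\notin E(K)}(1-x_e)$, with the products taken over pairs inside the placed vertex set. Every monomial of $I_K$ involves only the $k$ placed vertices, so its support-degree is at most $k$ (isolated vertices only reduce support). Its degree is at most the number of pairs, $O(k^2)$, since in the directed, looped setting the relevant variables are the $k^2$ entries on the subset.

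The invariant $I_K - I_K(G)$ is then a separating invariant, and a one-dimensional trivial module is a separating module. This gives $\SepSuppDeg\le k$ and $\SepDeg \le O(k^2)$, which closes both chains. The $\Theta$ statements follow immediately by combining the chains.

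The main point of care is the first link. One has to check that non-induced counts of graphs with at most $k$ non-isolated vertices are genuinely determined by the $k$-vertex induced census, including the case where $H$ has fewer than $k$ vertices; the fixed overcount factor above handles this. One must also handle the constant term, which has support-degree $0$ and is harmless.
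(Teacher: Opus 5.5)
Your proposal is correct and takes essentially the same route as the paper. The middle inequalities come from Proposition~\ref{prop:module-to-invariant}. The first inequality comes from Observation~\ref{obs:invariants} together with the fact that the $S_K$ values are determined by the small induced-subgraph census. The last inequality comes from expressing that census by invariants of support-degree at most $k$ (degree $O(k^2)$). You supply details the paper leaves implicit: the fixed overcount factor $\binom{n-|W|}{k-|W|}$, and the induced-count polynomial used directly where the paper tacitly uses M\"obius inversion from the $S_K$.
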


\begin{proof}
By Proposition~\ref{prop:module-to-invariant}, if $G,H$ admit a separating module of support-degree (resp. degree) $d$, then they admit a separating invariant of support-degree (resp. degree) $2d$. This gives the middle inequality of both sequences of inequalities.

By Observation~\ref{obs:invariants}, invariants of support-degree (resp., degree) $d$ are linear combinations of the polynomials $S_K$ for graphs $K$ on $\leq d$ vertices (resp., edges). The value of those invariants on a graph $G$ are completely determined by the counts of induced subgraphs on $\leq d$ (resp., $2d$) vertices, giving the first inequality of both sequences.

Conversely, the initial coloring of $d$-WL is determined by the invariants $S_K$ for the graphs $K$ on $d$ vertices, hence is determined by invariants of support-degree $\leq d$ (resp., degree $\leq O(d^2)$), giving the final inequality of both sequences.
\end{proof}

\section{Support size and complexity measures on the symmetric group} \label{sec:support}
A key concept that will be useful in our proofs is that of ``support'' of an element of a set that is acted on by $S_n$. In this section we recall the definition of support and some of its basic properties, and we relate it to other complexity measures on the symmetric groups. The notion of support will be used crucially in the next section as well.

\begin{definition}[{Support, cf. \cite{BGS99, Dawar-Wilsenach22}\footnote{Dawar and Wilsenach \cite{Dawar-Wilsenach22} define the support of a subgroup $G \leq S_n$ rather than of an element $\omega \in \Omega$, but these are equivalent concepts: For $\omega \in \Omega$, if we define $G := \stab_{S_n}(\omega)$ then supports of $G$ in their sense are the same as supports of $\omega$ in our sense. And conversely, given $G \leq S_n$, we may take $\Omega = S_n / G$ to be the coset space, and $\omega = G$ to be the trivial coset, and then supports of $\omega$ in our sense are the same as supports of $G$ in their sense.}}]\label{def:support}
    Suppose $S_n$ acts on a set $\Omega$. A \emph{support} of $\omega \in \Omega$ is a subset $X \subseteq [n]$ (sic!) such that every element in the pointwise stabilizer of $X$  also stabilizes $\omega$, that is
    \[
    \bigcap_{x \in X} \stab_{S_n}(x) \subseteq \stab_{S_n}(\omega). 
    \]
\end{definition}

For concreteness, it may help the reader to note that if $X = \{1,\dotsc,k\}$, then the pointwise stabilizer on the left-hand side of the above definition is merely a copy of $S_{n-k}$, and the defining condition becomes $S_{n-k} \subseteq \stab_{S_n}(\omega)$ (though, despite the latter notation, it often matters \emph{which} copy of $S_{n-k}$ is being referred to).

Note that we have (at least) two actions of $S_n$ here: the action on $\Omega$, and the defining action on $[n]$. And supports are defined to always be subsets of $[n]$. A typical application of this in our paper will have $\Omega = [n] \times[n]$ the possible edge set of directed graphs on vertex set $[n]$ (or $\Omega = \binom{[n]}{2}$ for undirected graphs), and in that case supports are (by definition) sets of vertices. 

Although we don't technically need the following lemma, we find it useful for gaining some intuition about supports, which may aid the reader in following our subsequent proofs.

\begin{lemma}[\cite{BGS99, Dawar-Wilsenach22}] \label{lem:min-support-unique}
    Suppose $\omega\in\Omega$ has a support of size less than $n/2$. Then there exists a unique minimum-size support of $\omega$. Moreover, this support is a subset of each support of $\omega$ of size $n/2$.
\end{lemma}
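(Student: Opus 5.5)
The key tool is the behaviour of intersections of supports. I would first establish the following claim: if $X$ and $Y$ are supports of $\omega$ with $|X \cup Y| < n$, then $X \cap Y$ is also a support of $\omega$. Both statements of the lemma follow from it.

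To prove the claim, let $P_Z$ denote the pointwise stabilizer $\bigcap_{z \in Z}\stab_{S_n}(z)$. Since $\stab_{S_n}(\omega)$ is a group containing $P_X$ and $P_Y$, it contains $\langle P_X, P_Y\rangle$. So it suffices to show $P_{X\cap Y} = \langle P_X, P_Y\rangle$. Note that $P_X$ is the symmetric group on $[n]\setminus X$, and $P_Y$ is the symmetric group on $[n]\setminus Y$. The complements $A = [n]\setminus X$ and $B = [n]\setminus Y$ have union $[n]\setminus(X\cap Y)$ and intersection $[n]\setminus(X\cup Y)$, which is nonempty by hypothesis. It is a standard fact that two symmetric groups $\mathrm{Sym}(A)$ and $\mathrm{Sym}(B)$ with $A\cap B\neq\emptyset$ generate $\mathrm{Sym}(A\cup B)$. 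To see this, the transpositions of $A$ and of $B$ form a connected transposition graph on $A\cup B$, since the two cliques share a vertex. A connected set of transpositions generates the full symmetric group on its vertex set, because $(a\,c)(a\,b)(a\,c) = (b\,c)$ propagates along paths. Hence $P_{X\cap Y} = \mathrm{Sym}(A\cup B) \subseteq \stab_{S_n}(\omega)$.

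For uniqueness, suppose $X$ and $Y$ are two minimum-size supports, of size $k < n/2$. Then $|X\cup Y| \le 2k < n$, so $X\cap Y$ is a support. Minimality forces $|X\cap Y| = k$, so $X = Y$.

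For the ``moreover'' part, let $X$ be the minimum support and $Y$ any support with $|Y| \le n/2$. Then $|X\cup Y| < n/2 + n/2 = n$, so $X\cap Y$ is a support. Its size is at most $|X|$, so minimality gives $X\cap Y = X$, that is, $X\subseteq Y$. I read ``of size $n/2$'' as ``of size at most $n/2$''. The strict bound $|X| < n/2$ is exactly what keeps the union below $n$ even when $|Y| = n/2$.

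The only real obstacle is the generation fact for overlapping symmetric groups, and in particular noticing that the hypothesis $|X\cup Y|<n$ is needed to guarantee a common point. Without a common point, the generated group is only $\mathrm{Sym}(A)\times\mathrm{Sym}(B)$, and the intersection claim fails. Everything else is bookkeeping.
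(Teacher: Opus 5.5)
Your proof is correct. The intersection claim is that two supports $X,Y$ with $|X\cup Y|<n$ have $X\cap Y$ as a support, because $\mathrm{Sym}([n]\setminus X)$ and $\mathrm{Sym}([n]\setminus Y)$ generate $\mathrm{Sym}([n]\setminus(X\cap Y))$ once they share a point. It yields both uniqueness and the containment, and reading ``size $n/2$'' as ``size at most $n/2$'' is the right interpretation. The paper gives no proof of its own here and simply cites \cite{BGS99, Dawar-Wilsenach22}; your argument, via this intersection property of supports, is essentially the standard one from that literature.
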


\begin{definition}[Support size] \label{def:support-size}
Suppose $S_n$ acts on $\Omega$; the \emph{support size} of $\omega \in \Omega$ is the minimum size of any support of $\omega$. The \emph{support size} of the $S_n$-set $\Omega$ is $\max_{\omega \in \Omega} \text{support-size}(\omega)$. 
\end{definition}

We will frequently apply this to the case where $\Omega$ is the set of gates of a circuit that is acted on by $S_n$. 

Note that if $m$ is a monomial then $\sdeg(m)$ as in the previous section agrees with the size of a minimal support of $m$ in the sense of Definition~\ref{def:support}; however, for a more general polynomial $f$, $\sdeg(f)$ may not agree with the minimum size of a support for $f$ (it is, instead, the maximum over monomials of the minimum support size of those monomials).

Using ideas from the proof of Lemma~\ref{lem:spanning}, we now show that the support size of a circuit computing a spanning set for a representation inside a polynomial ring $R$ on which $S_n$ acts is closely related to other complexity measures on the symmetric group. In particular, \cite{DFLLV21} show that the number $n - \lambda_1$ associated to a partition $\lambda$ indexing an irrep is closely related to a number of other measures of complexity on the symmetric group, including sensitivity, block-sensitivity, and decision tree complexity. In their setting, they consider irreps inside the coordinate ring (or group algebra) of $S_n$ directly; Proposition~\ref{prop:alg-circuit} generalizes the connection to the action of $S_n$ on other polynomial rings.

\begin{proposition}\label{prop:alg-circuit}
Suppose $S_n$ acts on a set $\Omega$, and let $R$ be the polynomial ring on variable set $\Omega$, and let $s_\Omega$ denote the support size of $\Omega$ (Definition~\ref{def:support-size}).

If $V \subseteq R$ is an irreducible representation of shape $\lambda$ with $k \geq n - \lambda_1$, occurring in $R_{\leq d}$ then there is a rigid square-symmetric algebraic circuit $C$ of support size at most $\max\{k,d \cdot s_\Omega\}$ computing a spanning set for $V$.

Conversely, if $C$ is a rigid, symmetric algebraic circuit of support size $k$ computing polynomials in $R$, then the outputs of $C$ span a representation of $S_n$, and that representation is a direct sum of irreducible representations of shapes $\lambda$ with $k \geq n - \lambda_1$.
\end{proposition}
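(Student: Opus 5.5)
For the forward direction I will reuse the construction from Lemma~\ref{lem:spanning}. The Young subgroup $S_\lambda = S_{\lambda_1}\times\dotsb\times S_{\lambda_n}$ has nonzero invariants in $V$ (via the surjection $\F[S_n/S_\lambda]\to V$ and Obs.~\ref{obs:perm-to-linear}). So, starting from a basis vector $v$ of $V$, the projection $v_0 = \sum_{\sigma\in S_\lambda}\sigma v$ is a nonzero element of $V$ fixed by $S_\lambda$. Let $C_0$ be the depth-2 circuit writing $v_0$ as a sum of monomials, and let $C$ be the union of its $S_n$-translates, with equal gates merged. Exactly as in Lemma~\ref{lem:spanning}, $C$ is symmetric and rigid, and since $V$ is irreducible its outputs span $V$.

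It remains to bound the support size of $C$, gate by gate.
\begin{itemize}
\item \emph{Output gates.} An output gate computes some $\tau v_0$, whose stabilizer contains $\tau S_\lambda \tau^{-1}$. The subgroup $S_{\lambda_1}\times\{1\}$ fixes the complement $Y$ of the first row, and $|Y| = n-\lambda_1 \le k$. Hence the pointwise stabilizer of $Y$ lies in $S_\lambda$, which lies in $\stab(v_0)$, so $Y$ is a support of $v_0$. Translating by $\tau$ handles every output gate.
\item \emph{Inner gates.} These compute monomials of degree $\le d$ in the variables $\Omega$. Each variable $\omega$ has a support $X_\omega$ with $|X_\omega|\le s_\Omega$. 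The union of the supports of its variables is a support of the monomial, of size $\le d\,s_\Omega$.
\item \emph{Input gates.} These are variables, whose support size is at most $s_\Omega$.
\end{itemize}
One subtlety: the automorphism of $C$ induced by $\sigma$ must act on gates as $\sigma$ acts on monomials, so that gate supports coincide with monomial supports. Rigidity guarantees that this induced automorphism is unique. Altogether the support size is $\le \max\{k, d\,s_\Omega\}$.

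For the converse, let $C$ be rigid and symmetric with support size $k$. Rigidity makes $\sigma\mapsto$ (the unique automorphism extending $\sigma$) a homomorphism, hence an action of $S_n$ on the gates. The polynomial computed at $\sigma g$ is $\sigma$ applied to the polynomial computed at $g$, by induction on depth. So the outputs form a union of $S_n$-orbits, and their span $W$ is a representation.

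Each output $f$ at a gate $g$ with support $X$, $|X|\le k$, is fixed by the pointwise stabilizer of $X$, a copy of $S_{n-k}$ up to conjugation. By Obs.~\ref{obs:perm-to-linear}, this yields a nonzero equivariant map $\F[S_n/S_{n-k}]\to W$, sending the trivial coset to $f$. Its image contains the orbit of $f$, so $W$ is a quotient of a direct sum of such permutation modules. By Maschke, $W$ is then a direct sum of irreducible constituents of $\F[S_n/S_{n-k}]$. Now $\F[S_n/S_{n-k}]$ is isomorphic to the span of injective $k$-tuples, which sits inside $V^{\otimes k}$ for the defining representation $V$. So by Fact~\ref{fact:Sn-reps-by-degree} its constituents satisfy $\lambda_1\ge n-k$.

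The main obstacle I expect is the bookkeeping in the forward direction: checking that the group action on gates really is the action on monomials, and that merging equal gates preserves rigidity. The rigidity argument of Lemma~\ref{lem:spanning} should carry over verbatim, with $S_n$ acting on $\Omega$ in place of the square action.
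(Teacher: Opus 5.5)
Your proposal is correct and follows the paper's proof essentially step for step: the forward direction reuses the $S_\lambda$-invariant vector $v_0$ and the orbit-union circuit from Lemma~\ref{lem:spanning}, with the same gate-by-gate support bounds ($n-\lambda_1$ for outputs, $d\,s_\Omega$ for monomial gates), and the converse uses Obs.~\ref{obs:perm-to-linear} to obtain an equivariant surjection from permutation modules $\F[S_n/S_{n-k}]$. The only differences are minor improvements. First, you correctly allow the outputs to form several $S_n$-orbits, whereas the paper tacitly assumes the orbit of $f_1$ spans the representation. Second, you bound the constituents of $\F[S_n/S_{n-k}]$ by realizing it as the span of injective $k$-tuples in the $k$-th tensor power of the defining representation and applying Fact~\ref{fact:Sn-reps-by-degree}, where the paper instead cites Ellis--Friedgut--Pilpel.
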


\begin{proof}
($\Rightarrow$) (The proof of this direction is very similar to the proof of Lemma~\ref{lem:spanning}, except now we need not worry about the bit-size of our coefficients any more.) As in Lemma~\ref{lem:spanning}, $V$ contains a vector $v_0$ that is stabilized by the Young subgroup $S_{\lambda}$. As there, let $C_0$ be a depth-2 circuit computing $v_0$ as a sum of monomials, and let $C$ be the union of the circuits $\tau C_0$ for all $\tau \in S_n$ (omitting repeats). Now, while the number of copies of $C_0$ is still at most $|S_n| / |S_{\lambda}| \leq n! / (n-k)! < n^{k}$, unlike the above proof we no longer know that $C_0$ itself has size $n^{O(k)}$, so we instead must argue more directly to bound the support size of $C$.

Now, $S_{\lambda}$ contains $S_{\lambda_1}$, and since $S_{\lambda}$ fixes the output gate corresponding to $v_0$, $S_{\lambda_1}$ does as well. Thus $v_0$ has a support of size at most $n - \lambda_1 \leq k$. Since the output gates are all in the orbit of $v_0$, the same holds for them as well. 

Now consider a monomial gate in $C$. Since it has degree $d$, it is a product of at most $d$ many variables $x_{\omega}$, say those corresponding to $\omega_1,\dotsc,\omega_d \in \Omega$. Since each $\omega_i$ has a support of size $\leq s_\Omega$, if a permutation fixes a support for each $\omega_i$, then it fixes the monomial gate. Thus there is a set of size at most $d s_{\Omega}$ such that any permutation fixing that set pointwise fixes the monomial gate. Thus it has a support of size at most $ds_{\Omega}$.

($\Leftarrow$) Conversely, suppose $C$ is a rigid symmetric algebraic circuit of support size $s$ computing a polynomial in $R$. Let its outputs be $f_1,\dotsc,f_m$. Then since $C$ is a rigid symmetric circuit, the action of $S_n$ on the variables permutes the set of output gates computing $\{f_1,\dotsc,f_m\}$ in the same way that the action of $S_n$ on the polynomial ring permutes $\{f_1,\dotsc,f_m\}$ (by induction on the structure of the circuit). Thus the span of $f_1,\dotsc,f_m$ is an $S_n$-representation $V$.

Finally, we show that $V$ is a direct sum of irreducible representations of shapes $\lambda$ with $\lambda_1 \geq n-k$. Since $C$ has support size $k$, the gate computing $f_1$ is fixed by a subgroup isomorphic to $S_{n-k}$; thus the stabilizer $H$ of $f_1$ contains $S_{n-k}$. Since $\{f_1,\dotsc,f_m\}$ spans $V$, by Obs.~\ref{obs:perm-to-linear} there is an $S_n$-equivariant linear map $\varphi\colon \F[S_n / H] \to V$, such that the image of the trivial coset is $H$ is $f_1$. By equivariance, the $S_n$-orbit of $f_1$ is contained in the image of $\varphi$, and since the orbit of $f_1$ spans $V$, $\varphi$ is surjective. Since $S_{n-k} \leq H$, there is an $S_n$-equivariant surjection $S_n / S_{n-k} \to S_n / H$, and which extends to a surjective $S_n$-equivariant linear map $\F[S_n / S_{n-k}] \to \F[S_n / H]$. Composing these, we get an $S_n$-equivariant linear surjection $\F[S_n / S_{n-k}] \to V$. By Ellis--Friedgut--Pilpel \cite[Thm.~7]{EFP11}, $\F[S_n / S_{n-k}]$ only contains irreducible representations of shape $\lambda$ with $\lambda_1 \geq n-k$. Thus the same is true for any quotient of $\F[S_n / S_{n-k}]$ as an $S_n$-representation, and in particular is true for $V$. This completes the proof.
\end{proof}
\section{Symmetric circuit size of separating modules and Weisfeiler--Leman} \label{sec:sym-ckt-size}

In this section we show that symmetric algebraic circuits of size $n^{\Theta(k)}$ and depth $r$ are essentially equivalent to $\Theta(r)$ rounds of the $\Theta(k)$-WL algorithm. 
We will also prove an analogous statement for Boolean threshold circuits---indeed, the statement for Boolean threshold circuits will be used in the course of the proof for algebraic circuits. In the Boolean setting, one direction of the equivalence is already known:

\begin{theorem}[{Dawar--Wilsenach, cf. \cite[paragraph between Thms.~15 and 16]{Dawar-Wilsenach20} and Anderson--Dawar \cite[Lem.~12(2)]{Anderson-Dawar}}] \label{thm:ADW-WL}
Let $G,H$ be two graphs. If there is a rigid, square-symmetric Boolean (threshold) circuit $C$ of fan-in at most $s$, support size at most $k \leq n/4$, and depth $\Delta$ such that $C(G) = 1$ and $C(H)=0$, then $G$ and $H$ are separated by $O(\Delta(k + \log s/ \log n))$ rounds of $O(k)$-WL.
\end{theorem}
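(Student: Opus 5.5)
The proof follows the Anderson--Dawar route. First translate the circuit into a formula of counting logic whose variable count tracks the support size and whose quantifier depth tracks the circuit depth. Then invoke the Cai--Fürer--Immerman correspondence: counting logic with $O(k)$ variables and quantifier depth $q$ is equivalent to $q$ rounds of $O(k)$-WL. Since the cited results are stated for general symmetric threshold circuits over relational structures, most of the work is specializing them to the square-symmetric setting on $n\times n$ adjacency matrices, with the action $\pi\cdot x_{ij}=x_{\pi^{-1}(i),\pi^{-1}(j)}$.

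\emph{Step 1 (supports of gates).} Since $C$ is rigid, every $\pi\in S_n$ induces a unique automorphism of $C$, so $S_n$ acts on the gates and Definition~\ref{def:support} applies to each gate $g$. Since $k\le n/4<n/2$, Lemma~\ref{lem:min-support-unique} gives each gate a unique minimal support $\mathrm{sp}(g)$ of size at most $k$, and this assignment is $S_n$-equivariant. For a gate $g$ and a tuple $\bar a$ of length $|\mathrm{sp}(g)|$, let $g[\bar a]$ denote the image of $g$ under any permutation sending $\mathrm{sp}(g)$ to $\bar a$ in a fixed order. This is well defined because the pointwise stabilizer of the support fixes $g$.

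\emph{Step 2 (formula per gate orbit, by induction on depth).} For each orbit of gates, build a counting formula $\varphi_g(\bar y)$ with $O(k)$ free and bound variables such that $G\models\varphi_g(\bar a)$ iff $g[\bar a]$ evaluates to $1$ on $A_G$. Input gates $x_{ij}$ have support $\{i,j\}$, so the formula is the edge relation. For a threshold gate $g$ with threshold $t$, partition its children into orbits under the stabilizer of $\mathrm{sp}(g)$. Each child $h$ in such an orbit is determined by the part of $\mathrm{sp}(h)$ lying outside $\mathrm{sp}(g)$, up to the multiplicity with which different tuples name the same child, and that multiplicity is a constant computable from the orbit structure. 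The condition ``at least $t$ children are true'' then becomes a Boolean combination of counting quantifiers $\exists^{\ge m}\bar z\,\varphi_h(\bar y',\bar z)$, where $\bar z$ ranges over at most $k$ new elements. By reusing variables, the variable count stays $O(k)$. This is the content of \cite[Lem.~12(2)]{Anderson-Dawar}.

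\emph{Step 3 (quantifier depth and WL).} Each level of the circuit adds a block of up to $k$ nested counting quantifiers. A single quantifier over a tuple is simulated by nested single-variable counting quantifiers, with the counts split among the possible combinations of sub-counts. Using the fan-in bound $s$, a count up to $s$ over a $k$-tuple can be decomposed into $O(k+\log s/\log n)$ nested quantifiers: the $\log s/\log n$ term pays for encoding a threshold of size $s$ through counts of at most $n$ per quantifier. The total quantifier depth is therefore $O(\Delta(k+\log s/\log n))$. The output gate has empty support (it is fixed by all of $S_n$), so $\varphi_{\mathrm{out}}$ is a sentence that is true in $G$ and false in $H$. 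By the Cai--Fürer--Immerman theorem, $G$ and $H$ are then distinguished by that many rounds of $O(k)$-WL.

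I expect Step 2 to be the main obstacle. The bookkeeping of child orbits relative to the parent's support is delicate, and the counting must correct for distinct tuples naming the same child, which forces the division by stabilizer sizes to be uniform across each orbit. I would rely on rigidity together with uniqueness of minimal supports to make these multiplicities orbit invariants, citing Anderson--Dawar for the details rather than re-deriving them.
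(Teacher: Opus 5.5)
Your proposal is correct and is essentially the paper's argument. Like the paper, you translate each gate orbit inductively, Anderson--Dawar style, into a counting formula whose $O(k)$ variables come from gate supports and whose quantifier depth grows by $O(k+\log s/\log n)$ per circuit layer, and then apply the Cai--F\"urer--Immerman correspondence with WL.

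Two small points. In Step~2 the child orbits should be taken under the \emph{pointwise} stabilizer of $\mathrm{sp}(g)$, since the setwise stabilizer need not fix $g$. Also, in the plain counting logic you use, the $\log s/\log n$ term is not actually needed: it is an artifact of FPC's bounded number sort, whereas finite Boolean combinations of nested element-counting quantifiers already give $O(k)$ depth per layer, so your Step~3 justification for it is superfluous but harmless.
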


\begin{proof}[{Proof sketch}]
Essentially, this is already implicit in the proofs of Anderson--Dawar \cite[Sec.~4.4]{Anderson-Dawar}. For the bound on WL dimension, this was pointed out explicitly by Dawar \& Wilsenach \cite[paragraph between Thms.~15 and 16]{Dawar-Wilsenach20}. The observation about depth \& rounds is implicit in Anderson--Dawar but follows immediately from the proof. We give a brief sketch here. 

They inductively construct FPC formulas for the constants $\{0,1\}$, variables $x_{ij}$ (these are ``relational queries of arity $2$'' in their language, corresponding to the edge relation $E(i,j)$), OR gates, AND gates, NAND gates, and Majority gates. The 0 and 1 formulas are quantifier free, the relational gates in our setting are all arity 2 since the only relation in the FO language of graphs is the edge relation, so in our setting their arity ``$r$'' is equal to 2. Examining the formulas on \cite[pp.~544--545]{Anderson-Dawar}, one sees that they each introduce no more than $O(r + k + \log_n s) = O(k + \log s / \log n)$ additional quantifiers.
Thus, they translate threshold circuits of depth $\Delta$ to FPC formulas of depth $O((k + \log s / \log n)\Delta)$. Since quantifier depth and number of variables in FPC correspond to rounds and dimension in WL, respectively \cite{CFI92}, this completes the proof sketch.
\end{proof}

\subsection{Symmetric Algebraic Circuits Simulate WL}
In this subsection we will prove the following theorem by taking a Boolean circuit which simulates WL and converting it to a rigid square-symmetric algebraic circuit that also simulates WL.

\begin{theorem} \label{thm:WLtoSep}
    The $r$-round $k$-dim WL algorithm can be simulated by separating modules (in fact, invariant polynomials) computed by rigid square-symmetric algebraic circuits of size $rn^{O(k)}$ and depth $O(r)$. 
\end{theorem}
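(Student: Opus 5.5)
The plan is to follow the route announced just before the statement: start from a Boolean threshold circuit that simulates WL, and arithmetize it gate by gate into an algebraic circuit over $\mathbb{Q}$ (or $\mathbb{Z}$). We want the arithmetized circuit to remain square-symmetric and rigid, and to have size $rn^{O(k)}$ and depth $O(r)$.

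\textbf{Step 1: a symmetric Boolean circuit for WL.} First build, directly from the definition of $k$-WL, a rigid square-symmetric threshold circuit. It has one layer of gates per round. For each round $t\le r$, each $k$-tuple $\bar v\in[n]^k$, and each possible colour class, there is a gate asserting that $\bar v$ has that colour after round $t$. Round $0$ is the atomic type of $\bar v$, which is an AND of literals $x_{ij},\neg x_{ij}$. Round $t+1$ asserts that the old colour equals $c$ and that, for each tuple of neighbouring colours, the number of $w\in[n]$ realizing it equals the prescribed value. Here ``number equals $m$'' is expressed with threshold gates. To avoid a priori unbounded colour sets, I would instead distinguish a fixed pair $G,H$. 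Equivalently, I can follow the standard FO$+$counting formulation: $k$-WL after $r$ rounds equals $C^{k+1}$ formulas of quantifier depth $r$. I would translate each subformula $\varphi(\bar x)$ into $n^{k}$ gates indexed by assignments, with counting quantifiers $\exists^{\ge m}y$ becoming threshold gates of fan-in $n$. Only formulas distinguishing $G$ from $H$ are needed, and the final output gate is a Boolean combination over all tuples. The permutation action on assignments gives symmetry, and indexing gates by distinct assignments gives rigidity. The size is $r\,n^{O(k)}$ times the number of distinct subformulas. Controlling that factor is a concern; I would use the fact that the number of colour classes is at most $n^k$, so a round needs only $n^{O(k)}$ counting gates.

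\textbf{Step 2: arithmetization.} Replace each Boolean gate by a polynomial that agrees with it on $\{0,1\}$ inputs. NOT becomes $1-x$ and AND becomes a product, realized as a multiplication tree, which is symmetric because the children of a gate form an orbit-closed multiset. The hard gate is a threshold ``at least $m$ of $y_1,\dots,y_n$'', or an exact count. On $0/1$ inputs the sum $S=\sum y_i$ lies in $\{0,\dots,n\}$. Let $L_a(S)$ denote the Lagrange polynomial that equals $1$ at $S=a$ and $0$ at the other points of $\{0,\dots,n\}$. Then ``$S=a$'' is computed by $L_a(S)$, which needs one symmetric addition gate followed by a univariate circuit of size $O(n)$ in $S$. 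That univariate part has constant coefficients and no variable symmetry issues, so symmetry and rigidity are preserved. The depth grows by $O(\log n)$ per layer using a balanced product of the linear factors. To get depth $O(r)$ with constant factor independent of $n$, I would instead compute the product as a single unbounded fan-in multiplication gate of $n+1$ linear factors $(S-b)$, with rational constants. This keeps each original layer at constant depth.

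\textbf{Step 3: the output.} Take the output to be the polynomial $P$ computed by the arithmetized circuit. It takes the value $1$ on $G$ and $0$ on $H$, so $P-1$ is a separating invariant, and its span is a one-dimensional test module. Symmetry implies $P$ is invariant, since automorphisms of the circuit extending each permutation fix the output. Rigidity needs to be rechecked after adding the constant gadgets and merging gates. I expect the main obstacle to be Step 1's size and rigidity bookkeeping. That means making sure the number of distinct colours/subformulas per round is $n^{O(k)}$ rather than something depending on the pair, and that merging does not introduce automorphisms fixing the inputs. If needed I would instead invoke the known construction of Dawar--Wilsenach, cf. \cite{Dawar-Wilsenach20}, and only do the arithmetization myself.
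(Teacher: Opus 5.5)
Your proposal follows essentially the same route as the paper: a rigid square-symmetric threshold circuit that simulates $r$ rounds of $k$-WL with the colour classes fixed by $G$, then the folklore gate-by-gate arithmetization ($\neg x \mapsto 1-x$, AND $\mapsto$ product, threshold $\mapsto$ Lagrange interpolation of the input sum), whose single output is an invariant. The one difference is that you build the threshold circuit directly (or via $C^{k+1}$) instead of modifying the Grohe--Verbitsky circuit, and you should realize each AND as a single unbounded fan-in product gate, since a bounded fan-in multiplication tree over an orbit-closed set of children is generally not symmetric.
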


In the setting of Boolean threshold circuits, we will also use the following key result from Dawar \& Wilsenach. The statement involves one new bit of terminology: when a circuit $C$ is acted on by a group, we say the \emph{orbit size} of $C$ is the maximum size of any orbit of gates in $C$.

\begin{theorem}[{Dawar \& Wilsenach \cite[Thm.~4.10]{Dawar-Wilsenach22}}] \label{thm:DW-support}
Let $C$ be a rigid square-symmetric (threshold) circuit of order $n>8$. For every $k\le n/4$, if the orbit size of $C$ is less than $\binom{n}{k}$, then the support size of $C$ is less than $k$.
\end{theorem}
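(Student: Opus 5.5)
The plan is to treat this as a statement about subgroups of $S_n$ of small index, combined with an induction over the circuit to pass from alternating groups to full symmetric groups. Fix a gate $g$ of $C$ and let $H := \stab_{S_n}(g)$, so $[S_n : H]$ is the size of the orbit of $g$, which by hypothesis is less than $\binom{n}{k}$.

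The first step is to invoke the classical bound on subgroups of $S_n$ of small index. This is the Liebeck--Praeger--Saxl type theorem, as in Dixon--Mortimer Thm.~5.2B. It says that if $[S_n:H] < \binom{n}{k}$ with $k \le n/4$ (and $n > 8$, which removes the small exceptional cases), then there is a set $\Delta \subseteq [n]$ with $|\Delta| < k$ such that
\[
\mathrm{Alt}([n]\setminus\Delta) \le H \le S_\Delta \times S_{[n]\setminus\Delta}.
\]
So $\Delta$ is almost a support of $g$. The only thing missing is that the pointwise stabilizer of $\Delta$ is $S_{[n]\setminus\Delta}$, and we only know that its alternating subgroup fixes $g$. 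I will use this group-theoretic result as a black box and spend the real effort on the upgrade from Alt to Sym. This upgrade is where rigidity enters, and I expect it to be the main obstacle.

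For the upgrade I argue by induction on the depth of $g$. Input gates $x_{ij}$ have support $\{i,j\}$ of size $2<k$, or else the case $k\le 2$ is handled directly. Constant gates have empty support. Before the induction I record one consequence of rigidity: no two distinct gates have the same label and the same set of children. Otherwise, swapping them (and the subcircuits above them accordingly) would give a nontrivial automorphism fixing every input, contradicting rigidity. This uses that threshold, AND and OR gates are symmetric in their inputs, so only the set or multiset of children matters.

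Now let $g$ be an internal gate, let $\Delta$ be as above, and let $\tau = (a\,b)$ be a transposition with $a,b \notin \Delta$. Every child $c$ of $g$ lies in an orbit of size less than $\binom nk$, because the orbit-size bound is a property of all of $C$. So by induction $c$ has a support $\mathrm{sup}(c)$ of size less than $k$. Since $|\Delta| + |\mathrm{sup}(c)| + 4 < 2k+4 \le n$, we can choose two points $e,f \notin \Delta \cup \mathrm{sup}(c) \cup \{a,b\}$. Then $\rho_c := \tau\,(e\,f)$ is an even permutation lying in $\mathrm{Alt}([n]\setminus\Delta) \le H$. Because $(e\,f)$ fixes $\mathrm{sup}(c)$ pointwise, it fixes $c$, and hence $\rho_c c = \tau c$. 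Since $\rho_c$ stabilizes $g$, it permutes the children of $g$, so $\tau c = \rho_c c$ is again a child of $g$.

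Thus $\tau$ maps the children of $g$ into the children of $g$, and bijectively so since the circuit is finite. Therefore $\tau g$ is a gate with the same label and the same children as $g$. By the rigidity consequence above, $\tau g = g$. Transpositions generate $S_{[n]\setminus\Delta}$, so the pointwise stabilizer of $\Delta$ fixes $g$, and $\Delta$ is a support of $g$ of size less than $k$.

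The points to check carefully are these. First, that $\tau$ really acts on gates, which holds because square-symmetry extends $\tau$ to an automorphism and rigidity makes that extension unique. Second, that the cited small-index theorem applies in exactly the range $k\le n/4$, $n>8$. If it only gives the bound for $k < n/2$ with a weaker constant, the counting step choosing $e,f$ still has plenty of room.
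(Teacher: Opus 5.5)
Your small-index step is fine, and so is the local computation. With $\Delta$ from Dixon--Mortimer and children of small support, the element $\rho_c=\tau(e\,f)$ does show that $\tau$ permutes the children of $g$, so $\tau g$ is an in-twin of $g$. The gap is the final inference $\tau g=g$. In this paper, rigid means each permutation extends to at most one automorphism. Under that definition, rigid circuits \emph{can} contain distinct in-twins. Swapping two in-twins extends to an automorphism only if the structure above them matches, and it need not. Proposition~\ref{prop:rigid} gives only ``no in-twins $\Rightarrow$ rigid'', which is the opposite direction to the one you use.

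Worse, no argument that uses only the orbit of $g$, the supports of its children, and rigidity can close this step. Here is a counterexample to that step.
\begin{itemize}
\item Let $g_0,g_1$ both be the OR of all inputs.
\item For each ordering $\sigma=(v_1,\dots,v_n)$ of $[n]$, let $P_\sigma$ be the AND of $x_{v_1v_2},\dots,x_{v_{n-1}v_n}$.
\item Let $Q_\sigma:=\mathrm{AND}(g_{p(\sigma)},P_\sigma)$, where $p(\sigma)\in\{0,1\}$ is the parity of $\sigma$.
\end{itemize}
Sending $g_s\mapsto g_{s+p(\pi) \bmod 2}$, $P_\sigma\mapsto P_{\pi\sigma}$ and $Q_\sigma\mapsto Q_{\pi\sigma}$ makes the circuit square-symmetric. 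It is also rigid. An automorphism fixing the inputs fixes each $P_\sigma$, since their children sets are distinct. It therefore fixes each $Q_\sigma$, the unique parent of $P_\sigma$, and hence fixes $g_0$ and $g_1$.

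Take $k=3$ and $n\ge 12$. Then $g_0$ has orbit $\{g_0,g_1\}$ of size $2<\binom n3$, and its children have supports of size $2<k$. So every hypothesis your inductive step uses at $g_0$ holds, yet $\tau g_0=g_1\neq g_0$ for every transposition $\tau$. The theorem survives only because the $P_\sigma$ and $Q_\sigma$ have orbits of size $n!$. Replacing $g_0,g_1$ by two constant-$1$ gates breaks your base case ``constants have empty support'' in the same way.

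The fix is to use the orbit bound on all gates at once, instead of inducting on depth.
\begin{enumerate}
\item Apply the small-index theorem to every gate $u$, including inputs and constants. This gives $\Delta_u$ with $|\Delta_u|<k$ and $\mathrm{Alt}([n]\setminus\Delta_u)\le\stab(u)$.
\item Define $\phi(u):=\tau u$ for any transposition $\tau$ that moves no point of $\Delta_u$. This is well defined, because two such transpositions differ by an element of $\mathrm{Alt}([n]\setminus\Delta_u)$.
\item For an edge from $c$ to $u$, choose $\tau$ avoiding $\Delta_c\cup\Delta_u$. This is possible since $|\Delta_c\cup\Delta_u|\le 2k-2\le n/2-2$. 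Then $\phi(c)=\tau c$ and $\phi(u)=\tau u$, so $\phi$ preserves edges and labels.
\item Choosing $\tau$ avoiding $\Delta_u\cup\Delta_{\phi(u)}$ shows that $\phi$ is an involution.
\item Choosing $\tau$ that also avoids $\{i,j\}$ shows $\phi(x_{ij})=x_{ij}$.
\end{enumerate}
Thus $\phi$ is an automorphism of $C$ extending the identity permutation, and rigidity forces $\phi=\mathrm{id}$. So every transposition avoiding $\Delta_u$ fixes $u$, and $\Delta_u$ is a support of size $<k$.

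Your argument does go through verbatim if ``rigid'' is read in the stronger syntactic sense of having no in-twins. That is how rigidity is actually verified in Lemma~\ref{lem:mGVcircuit}, but it is not the hypothesis of the statement.
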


First, we look at a seemingly-stronger definition of rigidity from \cite{Anderson-Dawar} (see their Definition 5 and Proposition 1) that is easier to apply and turns out to be equivalent to the definition of rigidity from  \cite{Dawar-Wilsenach20}. This definition uses the following notion of twins: \begin{definition}[Twins]
    Two gates $g,g'\in C$ are {\it in-twins} if they are the same type of gate and have the exact same set of inputs: for all $g'' \in C$, we have $(g'',g) \in C \Longleftrightarrow (g'',g') \in C$. They are \emph{twins} if they are in-twins and also out-twins: for all $g'' \in C$, we have $(g,g'') \in C \Longleftrightarrow (g',g'') \in C$.
\end{definition}

\begin{proposition}[{cf. \cite[Proposition~1]{Anderson-Dawar}}] \label{prop:rigid}
Let $G \leq S_n$ be a permutation group and $C$ a Boolean or algebraic circuit on $n$ variables that is $G$-symmetric (or even just a directed acyclic graph with $n$ source nodes). If $C$ has twins it is not $G$-rigid, and if $C$ is not $G$-rigid then it has in-twins.
\end{proposition}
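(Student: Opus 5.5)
The plan is to argue directly from the definition of an automorphism of the labeled DAG $C$: a bijection on gates that preserves the edge relation and all gate labels (gate type, threshold value if any, constant values, and the variable labels of input gates). The group action on the variables determines what such a bijection must do on input gates. We then use that $C$ is $G$-rigid exactly when every $g\in G$ extends to at most one automorphism of $C$.

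\emph{Twins imply non-rigidity.} Suppose $g\neq g'$ are twins, and let $\tau$ be the bijection of gates that swaps $g$ and $g'$ and fixes every other gate.
\begin{itemize}
\item Twins cannot be joined by an edge. If $(g,g')$ were an edge, then in-twinness would make $(g,g)$ an edge, which is impossible in a DAG.
\item So every edge either avoids $\{g,g'\}$ or has exactly one endpoint in it. In the latter case the identical in- and out-neighbourhoods of $g,g'$ show that $\tau$ maps edges to edges.
\item Since $g,g'$ have the same type/label, $\tau$ preserves labels.
\item Two distinct input gates labeled by variables cannot be twins, because they carry distinct variable labels. Hence $\tau$ fixes every variable gate, so $\tau$ extends the identity permutation.
\end{itemize}
Thus the identity of $G$ extends both to $\mathrm{id}_C$ and to $\tau\neq\mathrm{id}_C$, and $C$ is not $G$-rigid.

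\emph{Non-rigidity implies in-twins.} Suppose some $\pi\in G$ extends to two distinct automorphisms $\alpha\neq\beta$. Then $\gamma:=\alpha^{-1}\beta$ is a non-identity automorphism of $C$ that fixes every variable input gate. Choose a gate $g$ with $\gamma(g)\neq g$ that is minimal in a topological order, i.e.\ of minimal depth among moved gates. There are two cases.
\begin{itemize}
\item If $g$ is a source, it must be a constant gate, since variable gates are fixed. Then $g$ and $\gamma(g)$ are both sources with the same label, so they have the same (empty) input set and are in-twins.
\item Otherwise every input $h$ of $g$ has smaller depth, so $\gamma(h)=h$. Because $\gamma$ preserves edges, the inputs of $\gamma(g)$ are exactly $\gamma(\{\text{inputs of } g\})=\{\text{inputs of } g\}$. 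Because $\gamma$ preserves labels, $\gamma(g)$ has the same type as $g$. So $g$ and $\gamma(g)$ are distinct in-twins.
\end{itemize}

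The only delicate point is bookkeeping about what ``same type'' and ``automorphism'' mean. Labels must include threshold values and constants, so that label-preservation gives same-type. It also matters which source gates are pinned by the action on variables. With that fixed, both directions are immediate, and the symmetry hypothesis on $C$ is not actually needed, matching the parenthetical ``or even just a directed acyclic graph.''
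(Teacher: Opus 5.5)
Your proposal is correct and follows the paper's route. In one direction, swapping twins gives a nontrivial automorphism lying over the identity permutation. In the other, $\alpha^{-1}\beta$ is a nontrivial automorphism fixing the variable inputs, and its first moved gate in topological order is an in-twin of its image. Your extra checks fill in details the paper leaves implicit: that twins cannot be joined by an edge, that variable gates cannot be twins (so the swap really extends the identity), and the case where the first moved gate is a constant source.
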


\begin{proof}
First, suppose $C$ has twin gates $g,g'$. Then there is an automorphism $\gamma$ of $C$ that swaps the two gates $g,g'$ and fixes all the other gates. In particular, this means that both $\gamma$ and the trivial automorphism are extensions of the trivial permutation of the inputs to the whole circuit, which is the identity element of $G$, so $C$ is not $G$-rigid.

Conversely, suppose $C$ is not $G$-rigid. Then there is some $\pi \in G$ and two distinct automorphisms $\alpha,\beta$ of $C$ that both agree with $\pi$ in how they act on the inputs. Then $\gamma := \alpha\beta^{-1}$ is a nontrivial automorphism of $C$ that fixes all the inputs. Since $\gamma$ is nontrivial, it must move at least one gate. Let $g$ be a gate that is moved by $\gamma$ such that none of the inputs to $g$ are moved by $\gamma$; such a $g$ must exist by structural induction, because none of the inputs to $C$ are moved by $\gamma$, yet $\gamma$ moves at least one gate. Since $\gamma$ is an automorphism of $C$ that does not move the inputs to $g$, we have that $\gamma(g)$ and $g$ are in-twins.
\end{proof}

The Boolean circuit we start with is constructed by Grohe and Verbitsky. They construct an $r+2$ layer Boolean threshold circuit in which the first layer simulates the initial coloring step of WL, the next $r$ layers simulate each round of WL refinement, and the last layer computes the output. We want to prove that any permutation of inputs induces exactly one automorphism of the circuit. This will not quite be true of their circuit as written, but we will show how to modify their circuits to ensure this property.

\begin{theorem}[{Grohe--Verbitsky \cite[Theorem~3.2]{Grohe-Verbitsky}}]\label{thm:GVcircuit}
    The $r$-round $k$-dim WL algorithm can be implemented by a logspace uniform family of Boolean threshold circuits of depth $O(r)$ and size $O(rn^{3k})$ where $k\ge2$ and $r=r(n)$.
\end{theorem}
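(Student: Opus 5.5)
The plan is to avoid naming colors at all. Instead, the circuit will maintain, after each round $t$, the \emph{equivalence relation} ``same color'' on $k$-tuples of vertices. To compare two graphs we run WL on a single structure: the disjoint union $G \uplus H$ on $2n$ vertices, with a unary predicate marking which side each vertex belongs to. This only changes constants in the exponent.

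For each round $t \in \{0,\dotsc,r\}$ and each pair of $k$-tuples $(u,v)$ we introduce a gate $E_t(u,v)$. It is intended to output $1$ iff $u$ and $v$ receive the same color after $t$ rounds of $k$-WL. That is $(2n)^{2k}$ gates per layer.

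\textbf{Layer 0.} $E_0(u,v)$ says that $u$ and $v$ have the same atomic type: for all $i,j\in[k]$, $u_i=u_j \Leftrightarrow v_i=v_j$, $x_{u_iu_j}=x_{v_iv_j}$, and the side markers agree. This is an AND of $O(k^2)$ constant-size equality gadgets on input variables, so it has constant depth.

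\textbf{Refinement layer.} Write $u[i/x]$ for $u$ with its $i$-th entry replaced by $x$. The WL update says that $u$ and $v$ stay equivalent iff they were equivalent before and, for every color vector $(c_1,\dotsc,c_k)$, they have equally many $x$ with $(c(u[1/x]),\dotsc,c(u[k/x]))$ equal to that vector. Every color vector realized at $u$ or $v$ is witnessed by some vertex $y$ and a base tuple $w\in\{u,v\}$. So we set
\[
E_{t+1}(u,v)=E_t(u,v)\wedge\bigwedge_{w\in\{u,v\},\,y}\Bigl[\#\{x:\textstyle\bigwedge_i E_t(u[i/x],w[i/y])\}=\#\{x:\textstyle\bigwedge_i E_t(v[i/x],w[i/y])\}\Bigr].
\]
Each equality of counts $\#a=\#b$ of Boolean vectors $a,b$ indexed by $x$ is realized by two threshold gates. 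The test $\#a\ge\#b$ is equivalent to $\sum_x a_x+\sum_x\neg b_x\ge 2n$, and symmetrically for $\le$. This gives constant depth per round.

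The correctness argument is a straightforward induction on $t$, using that $E_t$ is an equivalence relation. Since $E_t$ is an equivalence relation, the conjunction $\bigwedge_i E_t(\cdot,w[i/y])$ exactly tests membership in the color-vector class of $(w,y)$.

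\textbf{Output layer.} This layer checks that for every tuple $u$, the number of $G$-side tuples $E_r$-equivalent to $u$ equals the number of $H$-side tuples equivalent to $u$. Again each equality uses two threshold gates, followed by a big AND.

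\textbf{Size count.} Per round there are $O(n^{2k})$ pairs $(u,v)$, $O(n)$ witnesses $y$, and $O(n)$ AND gates of fan-in $k$ plus $O(1)$ threshold gates per witness. This gives $O(k\,n^{2k+2})$ gates per round, which is $O(n^{3k})$ for $k\ge 2$. The total size is therefore $O(r n^{3k})$ and the depth is $O(r)$.

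\textbf{Uniformity.} Logspace uniformity follows because every gate is indexed by a tuple of $O(k)$ vertex names together with $t$. The wiring is a local, arithmetic-free rule computable in logspace.

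\textbf{Main obstacle.} The step I expect to be the crux is representing colors by the relations $E_t$ instead of by names. With explicit color names, the circuit would have to canonically sort the colors, which destroys both depth and symmetry. The witness trick, quantifying over $y$ to enumerate the neighbor color vectors, is what makes the multiset comparison expressible with threshold gates in constant depth per round.
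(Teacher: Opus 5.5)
\textbf{The gap: the side predicate makes the output layer always fail.} As written, your circuit outputs ``distinguished'' on every input, including when $G\cong H$. Your $E_0(u,v)$ requires the side markers of $u$ and $v$ to agree. So every $u\in V(G)^k$ and every $v\in V(H)^k$ have $E_0(u,v)=0$. Since $E_{t+1}(u,v)$ is conjoined with $E_t(u,v)$, also $E_r(u,v)=0$. Your output layer then compares, for a $G$-side tuple $u$, a count that is at least $1$ (it contains $u$ itself) with a count that is $0$. More generally, once the sides are marked by a unary predicate, WL on $G\uplus H$ never puts a $G$-tuple and an $H$-tuple in the same class, so no balance test built from $E_r$ can work.

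The repair is to run the two graphs in parallel on pure tuples, as the paper's sketch does (``$2n^k$ tuples in both graphs''). Let $E_t(u,v)$ range over $u,v\in V(G)^k\sqcup V(H)^k$, and let $E_0$ compare only equality and adjacency patterns. In the refinement step, let $x$ range over the vertex set of the graph containing $u$ (respectively $v$), and let $y$ range over that of $w$. Both vertex sets have $n$ elements, so your threshold test becomes $\sum_x a_x+\sum_{x'}\neg b_{x'}\ge n$. You could instead keep $G\uplus H$ and replace the unary marker by a binary ``same side'' relation. But then you owe an argument that WL on the union agrees, round by round, with the parallel run on pure tuples.

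\textbf{With the fix, the rest holds, and it differs from the paper's route.} The witness trick enumerates every colour vector realised at $u$ or $v$, the induction goes through, and you get $O(n^{2k+2})\subseteq O(n^{3k})$ gates per round for $k\ge 2$. Your construction keeps only the equivalence relation. The Grohe--Verbitsky circuit sketched in the paper instead names each colour by the index of the first tuple carrying it. Contrary to your ``main obstacle'' paragraph, that naming costs only constant depth: ``$u'$ is the name of $u$'' is just $E(u,u')\wedge\bigwedge_{u''<u'}\neg E(u,u'')$. So your device is not needed for depth. What it does buy is symmetry under relabelling vertices by construction. That is exactly what the paper has to work to recover, together with rigidity, when it modifies the Grohe--Verbitsky circuit for Lemma~\ref{lem:mGVcircuit}.
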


\begin{proof}[Proof sketch]
    Layer 0 is a $\mathsf{TC}^0$ circuit with $2k^2\cdot 2n^k$ output gates which computes the initial coloring of tuples in both graphs based on marked automorphism types. Each marked automorphism type can be represented with a bit-string of length $2k^2$ and there are $2n^k$ tuples in both graphs. Then, each tuple is assigned a color with a name which is the index of the first tuple with the same color.

    Layers 1 through $r$ are each copies of a $\mathsf{TC}^0$ circuit that does the WL iteration. It does this by checking two conditions. The first is that any pair of tuples with different colors, either in the same graph or one in either graph, must have different colors after passing through the layer. For any pair of tuples, this can be computed by a single AND gate with fan-in $2n^k$ for each color. Each AND gate of this type can run in parallel with constant depth. The second condition is that for any pair of tuples with one tuple in either graph, if the multiset of colors of $i$-neighboring tuples is distinct, then they get distinct colors after passing through the layer. This is done with three nested AND gates where the first runs over colors, the second runs over nodes in each tuple, and the third runs over vertices in each graph.

    The final layer compares the counts of each color to see that they are equal in the two graphs, using threshold gates to check numerical equality and an AND gate over the different types of colors.
\end{proof}

Unfortunately, the way that the authors index the colors in their construction of the circuit means that their specific construction may not be rigid. We will also need to modify their construction to produce circuits that look at the two graphs separately from one another, rather than together in a single circuit.

Given the graphs $G,H$ first look at which $k$-marked isomorphism types actually occur in $G$, and number the initial colors of $G$ lexicographically. Then build a circuit that is similar to the Grohe--Verbitsky circuit, except ``tailored'' to $G$, in that the initial colors aren't determined the same way as the original construction, but rather we decide what the indexing of initial colors will be based on the $k$-marked isomorphism types we see in $G$. Call that circuit $C_{G,0}$. 

Then we use that circuit $C_{G,0}$ on $H$ as well, just to count how many of each marked isomorphism type occur by picking a marked isomorphism type where they differ and extending $C_{G,0}$ to output the count of that isomorphism type and subtracting off the count of that isomorphism type in $G$. If those counts are different, then we get a representation out of $C_{G,0}$ that vanishes on $G$ but not $H$. 

If those counts are the same, we do the same procedure again, but instead simulating round 1. We use the multi-output circuit $C_{G,0}$ as input to round 1, then look at the $\leq n^k$ distinct colors that occur in $G$ and build a new circuit $C_{G,1}$ that has those particular colors ``baked in'' in lexicographic order. We then apply $C_{G,1}$ to $H$. If they differ, we pick a color where the counts differ and get a representation as above. If the Grohe--Verbitsky circuit distinguished $G$ from $H$, then there is some round $r$ such that $C_{G,r}$ distinguishes $G$ from $H$.

\begin{lemma}\label{lem:mGVcircuit}
    The modified Grohe--Verbitsky circuit is rigid and square-symmetric.
\end{lemma}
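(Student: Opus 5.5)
We prove the two properties separately. For square-symmetry we describe each layer of $C_{G,r}$ as indexed by combinatorial data on which $S_n$ acts naturally. For rigidity we use the criterion of Proposition~\ref{prop:rigid}: it suffices to show that $C_{G,r}$ has no twins, after merging any in-twins into single gates. Merging twins preserves the function computed and any symmetry, so we may assume this normalization.

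\textbf{Square-symmetry.} Every gate of the modified circuit is labeled by a tuple $(\text{layer}, \bar v, c, \dots)$. Here $\bar v \in [n]^k$ is a $k$-tuple of vertices, or $\bar v$ together with an auxiliary vertex $w$ and a position $i \le k$ for the neighbor-counting gates. The entry $c$ is a color name. The key change from the original construction is that color names are no longer ``index of the first tuple with that color'', which depends on the labeling. Instead they are the lexicographic rank among the colors actually occurring in $G$. In round $0$ this rank is taken among marked isomorphism types, and in round $t$ among refined color signatures. These names are isomorphism invariants of $G$ and are \emph{fixed constants baked into the circuit}, so they do not depend on vertex labels. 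For $\pi \in S_n$, define $\hat\pi$ on gates by acting on the vertex coordinates only, $\bar v \mapsto \pi(\bar v)$ and $w \mapsto \pi(w)$, and leaving layer, color and position labels fixed. We then check by induction on layers that $\hat\pi$ preserves edges and gate types. Layer $0$ gates test the atomic type of $\bar v$ via the variables $x_{v_a v_b}$, which are mapped to $x_{\pi(v_a)\pi(v_b)}$. The refinement gates take an AND/threshold over all $w \in [n]$ and over all tuples. These index sets are $S_n$-invariant, and the edge relation is defined uniformly in the vertex coordinates. The final counting and subtraction gates sum over all $\bar v$, so they are fixed by $\hat\pi$. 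Hence $\hat\pi$ is an automorphism extending $\pi$.

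\textbf{Rigidity.} Suppose $\gamma$ is an automorphism fixing all inputs $x_{ij}$. We show by induction on layers that $\gamma$ fixes every gate. This amounts to showing that distinct gates with the same type have distinct input sets, which is exactly the absence of in-twins. At layer $0$, the gate for $(\bar v, c)$ reads exactly the variables on $\bar v$, and the color $c$ determines its constant/threshold structure. Two such gates with equal inputs must therefore agree on the set of entries of $\bar v$. Here the main subtlety arises: tuples with repeated entries, or permutations of the same vertex set, may read the same variable set. To fix this, we make each layer-$0$ gate for $(\bar v, c)$ read the variables $x_{v_a v_b}$ through intermediate gates indexed by the position pair $(a,b)$. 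Equivalently, we use distinct gadgets per ordered coordinate so that the wiring pattern records $\bar v$ as an ordered tuple. Any remaining genuine duplicates compute identical functions and are merged. In higher layers, the gate for $(t, \bar v, c)$ has inputs $(t-1, \bar v[w/i], c')$, and by induction these inputs are fixed and distinct. The set of such inputs over $w$ and $i$ determines $\bar v$ for $n > k$, and the color label $c$ is recorded in which threshold is applied, so again there are no in-twins. The output layer is handled similarly.

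The main obstacle is exactly this step: guaranteeing that Proposition~\ref{prop:rigid}'s in-twin-freeness holds at the bottom layer for degenerate or reordered tuples, and that merging duplicates does not destroy the uniform $S_n$-action. We plan to handle both by merging all in-twins bottom-up. This is canonical, so it commutes with every $\hat\pi$, and after merging any automorphism fixing the inputs fixes everything by induction.
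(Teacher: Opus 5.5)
Your route is the paper's: square-symmetry comes from letting $\pi$ act on the vertex coordinates of gate labels, and rigidity comes from the absence of in-twins via Proposition~\ref{prop:rigid}. Your explicit $\hat\pi$ is more concrete than the paper's appeal to ``WL outputs the same result.'' Note that Proposition~\ref{prop:rigid} turns absence of \emph{in}-twins into rigidity; absence of twins, as your first paragraph says, is not enough.

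You are right that duplicate gates need an argument, and the paper's proof simply asserts them away (``every pair of nodes in one layer compute different things''). For a coordinate permutation $\sigma\in S_k$, the gates for $(\bar v,c)$ and $(\bar v^{\sigma},c^{\sigma})$ compute the same function in every round. At layer $0$ they also read the same literals, so if built uniformly they are in-twins.

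The gap is in how you dispose of them. Labelling intermediate gates by a position pair $(a,b)$ achieves nothing. Labels are not part of the circuit, and two same-type gates whose only input is $x_{v_av_b}$ are again in-twins. Positions can be made structural with pairwise distinct constant ``tag'' gadgets. Even so, tuples that read no variables at all, such as $(p,\dots,p)$ in a loopless graph, still give identical gadgets for different $p$, so some merging is unavoidable.

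So the proof rests on merging in-twins bottom-up. For symmetry and rigidity this part of your plan is sound. The merging is canonical, so each $\hat\pi$ descends to the quotient. The result has no in-twins, so Proposition~\ref{prop:rigid} applies directly. Your layer-by-layer induction is then unnecessary, and as stated it is false: once lower layers are merged, $(t,\bar v,c)$ and $(t,\bar v^{\sigma},c^{\sigma})$ have identical inputs.

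What is missing is that merging preserves the function computed only at AND/OR/NOT gates. Suppose two merged gates are both inputs of one threshold gate, for instance an exact-count gate over $w\in[n]$ in a refinement layer. Then that gate's count drops and its output can change. You need one of two fixes:
\begin{itemize}
\item keep input multiplicities (multi-edges or integer weights) on threshold gates when merging; or
\item check that in this circuit the only gates that ever collide under a common counting gate are constant-$0$, which do not affect counts.
\end{itemize}
With that repair your argument is complete, and more explicit than the paper's.
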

\begin{proof}
    Importantly, the construction does not have any duplicate nodes in any layer. Every pair of nodes in one layer compute different things, and so their inputs are different. Thus, this circuit has no in-twins and is therefore rigid (Proposition~\ref{prop:rigid}).

    The circuit is square-symmetric because any permutation on inputs is in essence a permutation of labelings of vertices or vertices within a $k$-tuple. In either case, WL will output the same result, so there must exist a corresponding automorphism of the circuit.
\end{proof}

\begin{lemma}\label{lem:bool-to-alg}
    Any Boolean threshold circuit $C$ which is rigid square-symmetric under the action of any group $G$ can be converted to an algebraic circuit which computes the same output of $C$ and is rigid square-symmetric under the action of $G$.
\end{lemma}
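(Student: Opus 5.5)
The plan is a gate-by-gate replacement: each Boolean or threshold gate of $C$ becomes a small algebraic gadget. We arrange that the gadgets are attached canonically, meaning each gadget depends only on the gate type and its (multi)set of inputs and not on any ordering of them. Then every automorphism of $C$ extends uniquely to an automorphism of the new circuit $C'$, and conversely.

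\textbf{Gadgets.} On $\{0,1\}$ inputs we use the following.
\begin{itemize}
\item Constants $0,1$ stay constants.
\item A variable $x_{ij}$ stays the same input.
\item NOT$(g)$ becomes $1+(-1)\cdot g$.
\item AND$(g_1,\dots,g_m)$ becomes the product gate $\prod g_i$.
\item OR$(g_1,\dots,g_m)$ becomes $1-\prod(1-g_i)$.
\item NAND is handled by composing the AND and NOT gadgets.
\end{itemize}
For a threshold gate $\mathrm{TH}_{\ge t}(g_1,\dots,g_m)$ we first form the symmetric sum $s=\sum_i g_i$, which takes values in $\{0,\dots,m\}$. We then apply a univariate interpolation polynomial $p_t$ with $p_t(a)=[a\ge t]$ for $a\in\{0,\dots,m\}$, computed by a fixed circuit in $s$, for instance via Lagrange interpolation or a product of factors $(s-a)$ with rational constants. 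Each input $g_i$ enters the gadget only through a symmetric sum or product, so the gadget is invariant under any permutation of its inputs. By induction on depth, every gate of $C'$ that corresponds to a gate $g$ of $C$ computes, on $\{0,1\}$ inputs, the same value as $g$. So $C'$ computes the same output function as $C$, and it is an equivalent polynomial modulo $I_{bool}$.

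\textbf{Symmetry.} Let $\alpha$ be an automorphism of $C$ extending $\sigma\in G$. Define $\alpha'$ on $C'$ by sending the gadget of $g$ to the gadget of $\alpha(g)$, internal gate to corresponding internal gate. This works because $\alpha(g)$ has the same type and threshold as $g$, and its inputs are the images of the inputs of $g$. Hence $C'$ is $G$-symmetric.

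\textbf{Rigidity.} This is the main obstacle: we must avoid creating twins. Two different gadgets could coincide in part. For example, an OR gadget contains internal $1-g_i$ gates, and a NOT gadget for $g_i$ also computes $1-g_i$. The same internal $1-g_i$ gate could also be shared by several OR gates. Such coincidences produce in-twins, and we must also make sure no unintended automorphisms arise. I would handle this in two steps. First, merge all gates of $C'$ that are in-twins, iterating bottom-up until there are none. Merging is canonical, so symmetry is preserved, and the computed polynomials are unchanged. Second, argue rigidity directly via Proposition~\ref{prop:rigid}: after merging, $C'$ has no in-twins, so it is rigid.

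One subtlety remains. If $C$ itself were not rigid we would have a problem, but $C$ is rigid, so $C$ has no twins by Proposition~\ref{prop:rigid}. Distinct original gates therefore have distinct input sets or types, or else differ only in their outputs. The worry is that such a pair, which is not a twin pair in $C$, is still merged in $C'$. Merging it would not hurt correctness, but we then need the map from automorphisms of $C$ to automorphisms of $C'$ to remain well defined. After merging, an automorphism of $C'$ fixing the inputs fixes everything by bottom-up induction, since in-twin-free gates are determined by their inputs. So uniqueness of extension holds. Existence of extension follows from the symmetry step above, as merging commutes with the induced maps. Together these show $C'$ is rigid and square-symmetric under $G$, with size and depth at most a constant factor, plus polynomial-in-fan-in overhead for threshold gates, larger than those of $C$.
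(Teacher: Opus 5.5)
Your gadget substitution and your symmetry argument are the same as the paper's. The paper then simply asserts that rigidity is preserved because each gadget is internally twin-free. Your worry about coincidences \emph{between} gadgets is well founded, and it is not addressed by the paper's sketch. Build each $1-y$ as $1+(-1)\cdot y$. Then the gadgets for $\ell=x_1\vee x_2$ and $k=\neg(\neg x_1\wedge\neg x_2)$ are isomorphic copies hanging off the same inputs. If both feed $t=\mathrm{TH}_{\ge 2}(k,\ell)$, swapping the two copies is a nontrivial automorphism of $C'$ fixing every variable. Yet $C$ has no in-twins and is therefore rigid. (Placing one such copy on $x_{ij},x_{ji}$ for every pair $\{i,j\}$, all feeding one OR gate, makes this square-symmetric.)

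The gap is in your cure. In the paper's model a circuit is a DAG whose edges form a set; twins are defined via $(g'',g)\in C$. So merging two in-twins that both feed a gate $h$ collapses two edges into one. If $h$ is an addition gate, in particular the sum gate $s=\sum_i g_i$ of a threshold gadget, its value changes. In the example, merging turns $s=k'+\ell'$ into $s=k'$. Since $p_2(s)=s(s-1)/2$ vanishes at $s\in\{0,1\}$, the output becomes identically $0$ on Boolean inputs instead of $x_1\vee x_2$. So ``the computed polynomials are unchanged'' is false.

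The same failure occurs for in-twins already present in $C$, which is exactly where you say merging ``would not hurt correctness.'' Rigidity excludes twins but not in-twins: Proposition~\ref{prop:rigid} only gives twins $\Rightarrow$ non-rigid $\Rightarrow$ in-twins. For instance, take two AND gates $g,g'$ on $\{x_1,x_2\}$, with only $g$ also feeding $g\vee x_3$, and both feeding $\mathrm{TH}_{\ge 2}(g,g')$. This circuit is rigid, and your merge destroys the threshold count.

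To repair the argument you must keep multiplicities: when $g'$ is merged into $g$, a gate that read both must read $g$ twice. You can allow parallel edges, i.e.\ record edge multiplicities; the proof of Proposition~\ref{prop:rigid} goes through verbatim with multisets of inputs. If the model must stay a simple DAG, you additionally need a canonical, twin-free encoding of those multiplicities.

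Alternatively, avoid merging altogether. Choose gadgets that cannot coincide, e.g.\ do not build OR out of the NOT and AND gadgets, and keep gadget-internal gates private and recognizable. Then prove directly that every automorphism of $C'$ fixing the variables maps whole gadgets onto gadgets. Such an automorphism induces an automorphism of $C$ fixing the variables, which is trivial because $C$ is rigid.
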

\begin{proof}
    The construction uses the standard folklore conversion of replacing AND gates with multiplication gates, NOT gates with $1-x$ gates, and threshold gates with appropriate interpolation gadgets (take the sum of the inputs, then apply Lagrange interpolation to get an algebraic circuit that outputs 1 if that sum is greater than the threshold and 0 if it is less). Replacing gates by gadgets in this way preserves rigidity and square-symmetry between structures in the algebraic circuit. Within the gadgets there are no twins and permuting inputs will still correspond to at least one automorphism since the structure is otherwise preserved. 
\end{proof}

\begin{proof}[Proof of Thm.~\ref{thm:WLtoSep}]

The modified Grohe--Verbitsky circuit still computes the same output as the original Grohe--Verbitsky circuit, so by Lem.~\ref{lem:mGVcircuit} the modified Grohe--Verbitsky circuit is a rigid square-symmetric circuit, and furthermore by Thm.~\ref{thm:GVcircuit} it also simulates WL. We can convert the modified Grohe--Verbitsky circuit into a rigid square-symmetric algebraic circuit $C^*$ which computes the same output by Lem.~\ref{lem:bool-to-alg}. $C^*$ is a single-output rigid square-symmetric circuit, so in particular it computes an invariant polynomial. This completes the proof.
\end{proof}

\subsection{WL Simulates Symmetric Circuits} \label{sec:WL-simulates-circuits}

To prove the opposite direction we will use a result of Dawar \& Wilsenach on simulating algebraic circuits by Boolean threshold circuits. We note that in the following result, there is no bound assumed on the degree of the algebraic circuit, and no bound claimed on the size of the resulting Boolean circuit. Rather, the bounds on the resulting Boolean circuit are only on its orbit size, and (we observe) on its depth. We have stated their result in notation consistent with ours.

\begin{theorem}[{Dawar \& Wilsenach \cite[Theorem~11]{Dawar-Wilsenach20}}] \label{thm:DW}
Let $G$ be a group acting on a set of variables $X$. If $C$ is a $G$-symmetric constant-free algebraic circuit over a field $\F$ with variables $X$ and $B \subseteq \F$ is a finite non-empty set, then there is a $G$-symmetric Boolean threshold circuit $C_{bool}$ on variables $X$ such that
\begin{enumerate}
\item For all $\{0,1\}$ assignments $\nu \colon X \to \{0,1\}$, we have $C(\nu(X)) \in B \Longleftrightarrow C_{bool}(\nu(X))=1$;

\item $C$ and $C_{bool}$ have the same largest $G$-orbit size; 

\item $C$ and $C_{bool}$ have the same depth.
\end{enumerate}
\end{theorem}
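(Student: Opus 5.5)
The plan is to build $C_{bool}$ gate by gate from $C$. Because $C$ is constant-free, every gate evaluated on a $\{0,1\}$ assignment produces an integer. So the value of each gate $g$ on Boolean inputs lies in a finite set $R_g \subseteq \mathbb{Z}$, which can be computed by structural induction: $R_g \subseteq \{a+b\}$ or $\{ab\}$ over $a \in R_{g_1}, b \in R_{g_2}$, summed or multiplied over all children. For each gate $g$ and each value $a \in R_g$, introduce a Boolean gate $b_{g,a}$ that is $1$ exactly when $g$ evaluates to $a$. The $b_{g,a}$ form a ``one-hot'' encoding of $g$'s value. The output gate of $C_{bool}$ is then the OR of $b_{\mathrm{out},a}$ over $a \in R_{\mathrm{out}} \cap B$.

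The main step is implementing $b_{g,a}$ in constant depth per layer, using threshold gates and the one-hot encodings of the children.

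\emph{Input gates.} For $g$ an input variable $x$, we set $b_{g,1}=x$ and $b_{g,0}=\neg x$. For the constants $0,\pm1$ we use constant gates.

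\emph{Addition gates.} Suppose $g$ has children $g_1,\dots,g_m$. Then $g$ equals $a$ iff $\sum_i \sum_{c \in R_{g_i}} c\, b_{g_i,c} = a$. This is an exact weighted-sum test. It can be realised as a threshold gate whose inputs are the $b_{g_i,c}$, repeated with multiplicity $|c|$, with negative values handled by feeding negated copies and shifting the threshold; the equality is the AND of a $\ge$ test and a $\le$ test.

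\emph{Multiplication gates.} Multiplication is handled by separating sign, zero-ness, and magnitude. The product is $0$ iff some child is $0$. Its sign is the parity of the number of negative children. Its magnitude is determined by the multiset of absolute values, i.e.\ by counting for each $c$ how many children take value $\pm c$; these counts are threshold tests. Hence $b_{g,a}$ is an OR, over the finitely many count-vectors whose product is $a$, of ANDs of exact-count threshold tests.

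Each layer of $C$ thus becomes a fixed-depth block. To meet the requirement of \emph{the same depth} rather than depth up to a constant factor, I would take threshold gates to be of the generalized kind in Dawar--Wilsenach, where an equality or threshold test on a weighted count is a single gate. The whole block $b_{g,a}$ for fixed $g$ then becomes one gate per $a$, with inputs the children's one-hot gates.

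Symmetry comes from the construction being canonical: each $b_{g,a}$ depends only on $g$, the label $a$, and the children sets. Every automorphism $\sigma$ of $C$ therefore lifts to $b_{g,a}\mapsto b_{\sigma g,a}$; the value sets $R_g$ are invariant under automorphisms, since they are computed structurally. So $C_{bool}$ is $G$-symmetric. For orbit sizes, the orbit of $b_{g,a}$ is in bijection with the orbit of $g$. Hence the largest orbit size is preserved, provided the lifting is faithful, i.e.\ no collapsing and no extra automorphisms arise. This is where rigidity-type subtleties enter.

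I expect the main obstacle to be the multiplication gates under the exact-depth constraint, together with making sure the orbit-size equality is exact rather than an upper bound. I would first try to define a single threshold-type gate family that accepts one-hot encoded inputs. If that fails, I would instead track $\log|\cdot|$ and sign bits separately.
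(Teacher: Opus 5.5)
Your construction (an indicator gate $b_{g,a}$ for each gate $g$ and value $a\in R_g$, children's values recovered by threshold counting, automorphisms lifted by $b_{g,a}\mapsto b_{\sigma g,a}$) is, in outline, the construction behind the Dawar--Wilsenach theorem. The paper itself gives no proof beyond that citation and its remarks. In characteristic $0$, and for the lifted action, your construction does give items 1 and 2.

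\textbf{Item 3 is the gap.} Your fix is a ``generalized threshold gate'' that decides in one step whether the one-hot encoded children of $g$ sum or multiply to $a$. That is not a threshold gate, and nothing in the cited construction supplies such gates. For a multiplication gate, your own description of $b_{g,a}$ is an OR over count vectors of ANDs of exact-count tests, not a single test on a weighted count.

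With the usual AND/OR/NOT/threshold gates, item 3 in fact cannot hold literally. Take $C=x_1+x_2$, which is symmetric under swapping the variables and has depth $1$, and take $B=\{1\}$. Then $C_{bool}$ must compute $x_1\oplus x_2$, which is not a linear threshold function, so no depth-$1$ threshold circuit computes it. Your final OR over $R_{out}\cap B$ would also add a layer even in your generalized model.

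What your blocks honestly give is depth $O(\Delta)$, since each layer of $C$ becomes a block of constant depth. That is all the paper's remark that depth ``follows directly from their proof'' can support, and it is all the paper needs. The only application, Theorem~\ref{thm:SeptoWL}, uses orbit size alone, and the depth-sensitive final theorem goes through Lemma~\ref{lem:TC} instead. So state and prove $O(\Delta)$ rather than trying to force equality.

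\textbf{Two smaller points.}
\begin{enumerate}
\item The statement allows an arbitrary field. Your claim that ``every gate produces an integer,'' the sign/magnitude/parity analysis, and the $|c|$-fold weighting are all specific to characteristic $0$. In characteristic $p$ the values lie in $\mathbb{F}_p$.
\item Suppose you realize the weight $c$ by $|c|$ duplicate copies of $b_{h,c}$. Those copies are twins, so $C_{bool}$ is not rigid, and Theorem~\ref{thm:DW-support} needs rigidity downstream. Moreover, under all automorphisms extending $G$, a copy's orbit is $|c|$ times that of $h$, and $|c|$ can be doubly exponential in the depth.
\end{enumerate}

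Both problems disappear if you treat addition exactly as you treat multiplication:
\begin{enumerate}
\item The value of $g$ is a function of the count vector $(n_c)_c$, where $n_c$ is the number of children of $g$ with value $c$.
\item Each condition $n_c=t$ is an unweighted exact-threshold test on $\{b_{h,c}: h \text{ a child of } g\}$.
\item $b_{g,a}$ is the OR of $\bigwedge_c [n_c=t_c]$ over the finitely many vectors $(t_c)_c$ with each $t_c$ at most the fan-in and $\sum_c t_c c=a$ (resp.\ $\prod_c c^{t_c}=a$), computed in $\F$.
\end{enumerate}

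Every gadget gate is then indexed by $g$ together with a $G$-invariant label. Its lifted orbit is therefore the orbit of $g$, and the maximum is attained at the gates $b_{g,a}$. This makes your item-2 argument precise. Merging any remaining in-twins to obtain a rigid circuit can only shrink orbits, which is the direction actually used.
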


The statement about depth is not contained in their statement, but follows directly from their proof. Their statement was also only for single-output circuits, but it is immediate from their proof that the same construction with the above properties works for multi-output $G$-symmetric circuits.

This lets us prove:

\begin{theorem} \label{thm:SeptoWL}
    For each $1/\log_2 n<\varepsilon<1$ and $k \leq \varepsilon n^{1-\varepsilon}$, if two $n$-vertex graphs $G,H$ admit a separating module with symmetric algebraic circuits of bit-size $n^{O(k)}$, then they can be distinguished by $O(k/\varepsilon)$-WL.
\end{theorem}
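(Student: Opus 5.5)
Starting from a separating module $V$ for $G,H$ whose spanning set is computed by a (rigid) square-symmetric algebraic circuit $C$ of bit-size $s = n^{O(k)}$, say $s \le n^{ck}$, we convert $C$ into a rigid square-symmetric Boolean threshold circuit separating $G$ from $H$, and then apply Theorem~\ref{thm:ADW-WL}. First we pass to a single output. By Proposition~\ref{prop:module-to-invariant}(3), summing the squares of the outputs gives a single-output (rigid) square-symmetric circuit $C'$ for an invariant $\hat f$ with $\hat f(A_G)=0$ and $\hat f(A_H)\neq 0$. Its size is at most $2s+1$ and its bit-size is $O(s+s\log s) = n^{O(k)}$. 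Next, Lemma~\ref{lem:constant-free} converts $C'$ into a constant-free circuit of size $O(s\log s) = n^{O(k)}$. This gadget replacement must preserve symmetry and rigidity. It preserves symmetry because constant gadgets are fixed by every automorphism. For rigidity, we merge identical constant gadgets, or if needed invoke Proposition~\ref{prop:rigid} and collapse in-twins, which does not change the computed polynomial or increase size.

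We then apply Theorem~\ref{thm:DW} with $B=\{0\}$. This gives a square-symmetric Boolean threshold circuit $C_{bool}$ with $C_{bool}(A_G)=1$ and $C_{bool}(A_H)=0$, whose orbit size equals that of the constant-free circuit and so is at most $n^{O(k)}$, say $< n^{c'k}$. If $C_{bool}$ is not rigid, we collapse in-twins repeatedly; this preserves symmetry, does not increase orbit size, and terminates at a rigid circuit by Proposition~\ref{prop:rigid}. Fan-in is bounded by the number of gates, which we need to control. The Dawar--Wilsenach construction may produce large gate sets, but the fan-in of a gate is at most the total number of gates, and those lie in at most $n^{O(k)}$-sized orbits. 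I would bound fan-in by (number of orbits)$\times n^{c'k}$ and argue the number of orbits is at most the size of the constant-free algebraic circuit, so $\log s_{\mathrm{fan}}/\log n = O(k)$.

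The support bound comes from Lemma~\ref{lem:binomial} with $s = c'k$. Set $r = \lceil c'k/\varepsilon\rceil$; since $c'k<\varepsilon n^{1-\varepsilon}$ (adjusting constants, or applying the lemma with $k$ in the hypothesis), we get $n^{c'k} < \binom{n}{r}$ as long as $r\le n/4$. We then invoke Theorem~\ref{thm:DW-support} with this $r$ in place of its $k$: the orbit size is less than $\binom{n}{r}$, so the support size of $C_{bool}$ is less than $r = O(k/\varepsilon)$. Finally, Theorem~\ref{thm:ADW-WL} with support $O(k/\varepsilon)$ and $\log(\text{fan-in})/\log n = O(k)$ says $G$ and $H$ are separated by $O(k/\varepsilon)$-WL; the number of rounds is irrelevant here.

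I expect the main obstacle to be the bookkeeping at the two interfaces. One is fitting the parameter range so that $r\le n/4$ holds; Lemma~\ref{lem:binomial} only guarantees $r\le n/2$, so we may need to shrink constants or assume $n$ large. The other is making sure rigidity and a polynomial fan-in bound survive the Dawar--Wilsenach conversion, since Theorem~\ref{thm:DW} as stated controls only orbit size and depth. If the fan-in bound fails, I would fall back on noting that in a rigid symmetric circuit with support $<r$, distinct children of a gate are determined by their supports and types, which bounds fan-in by $n^{O(r)}$. That still suffices for $O(k/\varepsilon)$-WL.
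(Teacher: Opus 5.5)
Your proposal is correct and follows the paper's route: constant-free conversion via Lemma~\ref{lem:constant-free}, then Theorem~\ref{thm:DW} with $B=\{0\}$, then the orbit bound $n^{O(k)}$ fed through Lemma~\ref{lem:binomial} and Theorem~\ref{thm:DW-support} to get support $O(k/\varepsilon)$, then Theorem~\ref{thm:ADW-WL}; your sum-of-squares reduction to a single output, your rigidification of $C_{bool}$, and your flagging of the parameter bookkeeping only make explicit points the paper passes over. You can drop the fan-in bookkeeping, since in Theorem~\ref{thm:ADW-WL} the fan-in affects only the number of rounds and not the $O(k)$ dimension, and neither of your fan-in arguments holds up anyway: the circuit from Theorem~\ref{thm:DW} can have far more orbits than $C$ has gates, and distinct gates of a rigid circuit can share both support and type.
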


\begin{proof}
Let $C$ be a (square-)symmetric algebraic circuit of bit-size $n^{O(k)}$ computing a spanning set for a separating module $V$. Since $C$ is rigid square-symmetric of size $n^{O(k)}$,  the maximum size of every orbit is at most $n^{O(k)}$.

First we convert $C$ into a \emph{constant-free} algebraic circuit of size $n^{O(k)}$, that is still rigid and (square-)symmetric. For this, we observe that the construction in Lemma~\ref{lem:constant-free} uses gadgets that are internally rigid. If any of the gates within those gadgets are in-twins, we may merge those constant gates into single constants. This merging process has the effect that the gates that syntactically compute constants (meaning the entire sub-DAG rooted at that gate only has constants on all of its leaves) all compute distinct constants, and thus cannot be moved by any automorphism of the circuit. The size of the resulting circuit is $n^{O(k)} \cdot O(k \log n)$, and since $k \leq n$, the latter is $n^{O(k+2)} = n^{O(k)}$. 

Now let $C_{bool}$ be the corresponding Boolean circuit guaranteed by Theorem~\ref{thm:DW} with the finite set $B$ being the singleton $\{0\}$ (since what we care about is whether our polynomials evaluate to 0 or not). By the properties there, $C_{bool}$ also has maximum orbit size $n^{O(k)}$ (even though its total circuit size can be much larger). Thus, for $k < \varepsilon n^{1-\varepsilon}$, $C_{bool}$ has support size less than $O(k/\varepsilon)$, by Lemma~\ref{lem:binomial}.

By Theorem~\ref{thm:ADW-WL} a Boolean threshold circuit of support size $O(k/\varepsilon)$ translates to a formula with $O(k/\varepsilon)$ variables and so can be distinguished by $O(k/\varepsilon)$-dimensional WL.
\end{proof}

Combining the preceding theorem with Thm.~\ref{thm:WLtoSep}, we get:

\begin{corollary}\label{col:equivalence}
    For any $1/\log_2 n<\varepsilon<1$ and $k \leq \varepsilon n^{1-\varepsilon}$, $k$-WL is simulated by separating modules of symmetric algebraic circuits of bit-size $n^{O(k)}$, which in turn are simulated by $O(k/\varepsilon)$-WL. In particular, when $\varepsilon$ is constant, then $\Theta(k)$-WL is equivalent in distinguishing power to separating modules of symmetric algebraic circuits of bit-size $n^{\Theta(k)}$.

    The same results hold if everywhere we replace ``separating module'' by ``separating invariant.''
\end{corollary}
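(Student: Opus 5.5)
The corollary combines the two directions already established, so the plan is to chain them and then handle the ``invariant'' variant. Fix $\varepsilon$ with $1/\log_2 n<\varepsilon<1$ and $k\le \varepsilon n^{1-\varepsilon}$.

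\emph{First direction.} Suppose $k$-WL distinguishes $G$ from $H$ within some $r \le n^k$ rounds; this bound holds because the number of colour classes of $k$-tuples only grows until it stabilizes, and there are at most $n^k$ tuples. I would invoke Theorem~\ref{thm:WLtoSep}. It gives an invariant polynomial computed by a rigid square-symmetric algebraic circuit of size $rn^{O(k)}=n^{O(k)}$ that vanishes on $G$ but not on $H$; this uses the ``tailored'' circuits $C_{G,r}$, which output a colour count minus its count in $G$. Its span is one-dimensional and $S_n$-invariant, so it is simultaneously a separating invariant and a separating module. To pass from gate count to bit-size, I would note that the constants in the Grohe--Verbitsky circuit and in the interpolation gadgets of Lemma~\ref{lem:bool-to-alg} are integers of magnitude at most $n^{O(k)}$. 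The interpolation there is over sums bounded by the fan-in, and subtracting a count from $G$ adds one such constant. So the total bit-size is at most $n^{O(k)}\cdot O(k\log n)=n^{O(k)}$, and this route gives the invariant version directly.

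\emph{Second direction.} Suppose a separating module has a spanning set computed by a rigid square-symmetric circuit of bit-size $n^{O(k)}$. Theorem~\ref{thm:SeptoWL} then gives distinguishability by $O(k/\varepsilon)$-WL. The invariant version is a special case, since a separating invariant spans a one-dimensional separating module. Alternatively, Proposition~\ref{prop:module-to-invariant}(3) turns a separating module into an invariant with only constant-factor blowup. Either way the two hierarchies match in both directions.

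\emph{Constant $\varepsilon$.} With $\varepsilon$ constant, $O(k/\varepsilon)=O(k)$, so the equivalence reads: $\Theta(k)$-WL is equivalent in distinguishing power to separating modules of bit-size $n^{\Theta(k)}$. The main point needing care is bookkeeping in the first direction: checking that the number of rounds is absorbed into $n^{O(k)}$ and that the bit-size, rather than just the gate count, stays $n^{O(k)}$. I expect this to be routine given the integer constants noted above.
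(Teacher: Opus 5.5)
Your proposal is correct and follows the paper's argument, which obtains the corollary simply by combining Theorem~\ref{thm:WLtoSep} with Theorem~\ref{thm:SeptoWL} (a separating invariant is a one-dimensional separating module). Your extra bookkeeping on the number of rounds and on passing from gate count to bit-size fills in details the paper leaves implicit, with one small correction: the Lagrange-interpolation constants replacing a threshold gate of fan-in $m$ are rationals of bit-size $O(m\log m)$ (with denominators such as $i!(m-i)!$), not integers of magnitude $n^{O(k)}$, but since $m\le n^{O(k)}$ the total bit-size is still $n^{O(k)}$.
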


One advantage of Dawar \& Wilsenach's construction (reproduced as Theorem~\ref{thm:DW} above) is that it produces circuits---over arbitrary fields---with well-controlled orbit size, even if the overall size of the resulting circuit could be exponential. However, in the context of the integers, and if the algebraic circuit additionally has polynomial degree, we give an alternative construction that additionally maintains polynomial size.

For this result, we come to a distinction that has not yet been relevant in the rest of the paper: algebraic circuits compute polynomials in $\F[X]$, even though when applied to adjacency matrices we only ``care'' about the resulting function in $\F[X] / \langle x_{ij}^2 - x_{ij} \rangle$. In the latter ring, degree is at most $O(n^2)$, while in $\F[X]$ degree can be unbounded. However, if a symmetric algebraic circuit computes a spanning set of a test module $V \leq \F[X]$ that is not identically zero on all adjacency matrices, then $V \pmod{\langle x_{ij}^2 - x_{ij} \rangle}$ is a nonzero test module in $\F[X] / \langle x_{ij}^2 - x_{ij} \rangle$ (Lemma~\ref{lem:test-modules-of-high-degree}).

Our next result works for degrees up to $n^{O(k)}$, where degree is measured in $\F[X]$, as above (note: much too large for Theorem~\ref{thm:degree-new} to be relevant). In that case, we can use a more direct folklore simulation of algebraic circuits over the integers by Boolean threshold circuits, as in the following lemma. 

\begin{lemma} \label{lem:TC}
Suppose $C$ is a rigid, square-symmetric, constant-free, multi-output algebraic circuit of size $S$ and depth $\Delta$ over the integers, such that the polynomial computed at each gate has degree at most $D$. Then there is a Boolean threshold circuit of total size $O(DS^2)$ and depth $O(\Delta)$ whose multiple output bits, when interpreted as the binary expansion of an integer, computes the same values as $C$ on all $\{0,1\}$ inputs, and furthermore such that the resulting Boolean circuit is also rigid square-symmetric.
\end{lemma}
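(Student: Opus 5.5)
The plan is the folklore gate-by-gate simulation: replace each algebraic gate by a constant-depth threshold gadget doing integer arithmetic, and keep the structure uniform enough that symmetry and rigidity survive.

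\textbf{Bounding bit-lengths.} First bound the bit-length of every value the circuit computes. On $\{0,1\}$ inputs, a constant-free circuit whose gates have degree at most $D$ computes integers of magnitude at most $2^{O(DS)}$. One way to see this: argue by induction on depth that a gate at depth $t$ has value at most $2^{2^t}$ in absolute value. A cleaner bound uses that a degree-$D$ polynomial computed by a size-$S$ constant-free circuit has sum of absolute coefficients at most $2^{O(DS)}$, hence magnitude $2^{O(DS)}$ on $\{0,1\}$ inputs. So each gate's value is representable by $O(DS)$ bits plus a sign bit. Fix a uniform width $L = O(DS)$ for every gate.

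\textbf{Gadgets.} Replace each algebraic gate $g$ by a bundle of $L+1$ Boolean output wires giving the two's-complement (or sign--magnitude) representation of its value. Inputs $x_{ij}$ become a bundle whose low bit is $x_{ij}$ and the rest constant $0$. Constants $0,\pm 1$ become fixed constant bundles. For an addition gate of fan-in $m$ and a multiplication gate of fan-in two (binarize first; this costs only a factor-$O(1)$ in size and $O(\log S)$ in depth), use the standard constant-depth threshold circuits for iterated addition and for multiplication of $L$-bit integers (Chandra--Stockmeyer--Vishkin; Hajnal et al.). These have depth $O(1)$ and size polynomial in $L$.

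\textbf{Size accounting.} To hit the stated $O(DS^2)$ total size, I would be careful here. Iterated addition of $m \le S$ numbers of $L$ bits can be done by threshold gates in depth $O(1)$ with $O(L + \log m)$ gates, each threshold gate having large fan-in: output bit $j$ is computed from weighted threshold gates over all input bits. Summing over gates gives $O(S \cdot L) = O(DS^2)$ gates. Multiplication similarly reduces to iterated addition of $L$ shifted partial products, where each partial-product bit is one AND gate. These AND gates cost $L^2$ per multiplication gate, which may exceed the target. If necessary I would count wires rather than gates, or absorb the AND gates into the threshold weights. Total depth is $O(\Delta)$, after handling binarization by noting that fan-in-two products in a depth-$\Delta$ circuit can be absorbed into an iterated-multiplication-free gadget per gate.

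\textbf{Symmetry and rigidity.} The gadget for a gate depends only on the gate's type and its multiset of inputs (addition is commutative and the gadget treats its input bundles symmetrically; for multiplication, use a symmetric gadget as well). Hence every automorphism of $C$ lifts to the Boolean circuit by mapping gadgets to gadgets, which gives square-symmetry. Rigidity is the main obstacle. Two distinct gadget-internal gates could become in-twins, for example two bit-gates in different gadgets that receive the same inputs. Here I would use Proposition~\ref{prop:rigid}: first merge any in-twins, which preserves function and symmetry, in the spirit of the proof of Theorem~\ref{thm:SeptoWL}. Then argue that after merging, a nontrivial automorphism fixing the inputs would induce a nontrivial automorphism of $C$ fixing its inputs, because distinct algebraic gates have distinct gadget output bundles. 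This would contradict the rigidity of $C$. Constant bundles should be shared, so that they are automatically fixed.
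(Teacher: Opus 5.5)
Your overall strategy (replace each algebraic gate by a constant-depth threshold gadget on bit-bundles, then check that symmetry and rigidity survive) is the same folklore simulation the paper uses. However, your handling of multiplication gates breaks the two properties the lemma is about. Binarizing a product of fan-in $m$ into a tree of fan-in-two products adds $\Theta(\log m)$ depth at every layer, so the depth becomes $O(\Delta\log S)$ rather than $O(\Delta)$. More seriously, a binary tree fixes an ordering and pairing of the factors. An automorphism of $C$ that permutes the children of a product gate therefore need not extend to the binarized circuit, and square-symmetry is lost.

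Your remarks about absorbing products into ``an iterated-multiplication-free gadget'' and using ``a symmetric gadget'' do not supply what is needed. That is a constant-depth threshold gadget for \emph{iterated} multiplication of $m$ integers that is invariant under permuting its factors. One option is a Chinese-remaindering/discrete-logarithm construction in the style of Hesse--Allender--Barrington, where every factor passes through an identical sub-gadget and the results are combined by iterated addition. For fan-in-two products, schoolbook multiplication is already symmetric, since $a_i\wedge b_j$ and $a_j\wedge b_i$ carry the same weight $2^{i+j}$. The paper is equally terse here (``the proof for multiplication gates is similar''), but your explicit binarization step is actually incorrect.

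Second, your ``cleaner'' bit-length bound is false under the stated hypothesis, because degree does not control constants in a constant-free circuit. Starting from $g_0=2$ (built from $\pm1$), the gates $g_{t+1}=g_t\cdot g_t$ all have degree $0$, yet $g_t=2^{2^t}$. So the absolute coefficient sum is not $2^{O(DS)}$, a uniform width $L=O(DS)$ cannot represent every gate, and the valid depth induction gives only doubly-exponential bounds. To make the gate-by-gate simulation polynomial-size, you must read $D$ as a formal degree in which constant leaves also have degree $1$, or simply assume a bound on the bit-length of gate values. Under the formal-degree reading, induction on the height $h(g)$ gives $\log_2|\text{value}(g)|\le \deg(g)\,h(g)\log_2 S$, since addition adds at most $\log_2 S$ and multiplication adds the children's logarithms. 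That yields width $O(D\Delta\log S)$. The paper's assertion of $O(D)$ bits has the same defect: even $\sum_{i,j}x_{ij}$, of degree $1$, takes values up to $n^2$. So this is a gap in the statement as much as in your argument.

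Your size accounting is also unsupported. The usual constant-depth iterated-addition construction spends about $m$ threshold gates on each output bit, not $O(L+\log m)$ gates in total, and the $L^2$ AND gates per product already exceed the budget. Only polynomial size is used downstream, though, and the paper's $O(m^3)$-per-gadget count does not give $O(DS^2)$ either.

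Finally, your rigidity argument improves on the paper's ad hoc one, and you can stop after the first step. Automorphisms preserve the in-twin relation, so repeatedly merging in-twins preserves function and square-symmetry. Once no in-twins remain, Proposition~\ref{prop:rigid} gives rigidity directly. Your further claim that Boolean automorphisms map gadgets to gadgets is unnecessary, and you have not justified it.
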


\begin{proof}
Since the total size of $C$ is $S$ and the degree of the polynomial at each gate is at most $D$, the values of integers computed at each gate on any $\{0,1\}$-inputs have at most $O(D)$ bits. Thus we can simulate $C$ on $\{0,1\}$ inputs by a Boolean threshold circuit $C'$ of size $O(S^2 D)$ and depth $O(\Delta)$: each multiplication gate and each addition gate can be simulated by $\mathsf{TC}^0$ gadgets.

To see that the resulting circuit is still rigid and square-symmetric, we briefly recall the $\mathsf{TC}^0$ gadget for iterated addition; we follow the exposition from \cite{KayalLectureNotes}. Let $a_1,\dots,a_m$ be integers each with $m$ bits (we use $m$ so it doesn't conflict with $n$ elsewhere in the proof), let $s$ refer to their sum, and let $c_0,\dots,c_m$ be the carry integers. Recall that addition of binary numbers is done by computing the carries and then computing a large parity for each bit of the sum. Since each $c_i$ is an integer, only their least significant bit will affect $s[i]$. Thus, $s[i]=PARITY(c_i[0],a_1[i],\dots,a_m[i])$. This computation of $m+1$ bits for the parity can be done in constant depth using threshold gates since $PARITY\in \mathsf{TC}^0$. Now, we just need to show that computation of carries is doable in constant depth using threshold gates. Consider a carry $c_j$ and note that it only depends on the $j$ least significant bits of each of the $a_i$'s. Let the sum of these bits be $t$ with length $k$. Since each term of the sum is integer smaller than $2^j$ ($j$ bits) and there are $m$ of them for each $a_i$, the total number of bits of $t$ is $\log_2(m2^j)=\log m+j$. Then, we get $c_j$ by dropping the $j$ least significant bits of $t$, so $c_j$ has $\log n$ bits and $k$ is fixed for a fixed $m$. Note that $$c_j[k]=1\iff(2^{k+1}\ge t\ge 2^k)$$ $$c_j[k-1]=1\iff(2^k-1\ge t\ge 2^{k-1})\lor(2^k+2^k-1\ge t \ge 2^k+2{k-1})$$ and so on. Since we only care about $c_j[0]$, which requires $2\cdot2^{k-j}\le2\cdot2^{\log m}=O(m)$ bits, we can compute $c_j[0]$ using $O(m)$ threshold gates and depth 2. Then, computing the parity can be done in $O(m^2)$ gates and depth 3. Thus the whole gadget has size $O(m^3)$ and depth 5.

Recall that $C$ is rigid square-symmetric. To show that $C'$ is also rigid square-symmetric, 
we show that converting addition gates to iterated addition gates maintains rigidity and symmetry. (The proof for multiplication gates is similar.) For this, we want to show that the only way to swap individual gates in each gadget is to swap the whole gadget, and that there is no automorphism of the circuit that swaps two gates inside of a single gadget. The gates computing $c_i[0]$ are all quite different from one another, so there are many direct ways to show they cannot be moved among one another by any automorphism, for example: they are in different layers, have different in-degrees, etc. Furthermore, because each bit $s[i]$ of the sum depends on $c_i[0]$, no automorphism can move any $s[i]$ gates. Thus, $C'$ is rigid. Finally, suppose an automorphism permutes the order of the integers that are the inputs to $C'$. This corresponds to exactly one isomorphism of $C'$, which permutes the entire gadgets in the corresponding order. Thus, $C'$ is square-symmetric. The depth of this circuit is a constant (from $\mathsf{TC}^0$ gadgets) times $\Delta$.
\end{proof}

\begin{theorem}
For any $1/\log_2 n < \varepsilon < 1$ and $k \leq \varepsilon n^{1-\varepsilon}$, 
if $G,H$ are two $n$-vertex graphs distinguished by a separating module computed by a constant-free symmetric algebraic circuit of size $n^{O(k)}$ and depth $\Delta$, in which every gate has degree $n^{O(k)}$, then $G$ and $H$ are distinguished by $O(\Delta)$ rounds of $O(k/\varepsilon)$-WL.
\end{theorem}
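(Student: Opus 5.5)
The plan is to follow the proof of Theorem~\ref{thm:SeptoWL}, but to use the folklore simulation of Lemma~\ref{lem:TC} in place of Theorem~\ref{thm:DW}. The reason for the swap is that Lemma~\ref{lem:TC} controls total size and depth, not just orbit size.

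Let $C$ be the given constant-free symmetric algebraic circuit. It has size $S = n^{O(k)}$ and depth $\Delta$, every gate has degree $D = n^{O(k)}$, and its outputs span a separating module $V$ that vanishes on $A_G$ but not on $A_H$. The argument proceeds in the following steps.

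\begin{enumerate}
\item \textbf{Make $C$ rigid.} If $C$ is not rigid, Proposition~\ref{prop:rigid} gives in-twins. We merge in-twins repeatedly. This preserves symmetry and the set of computed polynomials, does not increase size, depth, or degree, and terminates with a rigid circuit. (If the statement already intends rigidity, this step is vacuous.)

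\item \textbf{Pass to a Boolean threshold circuit.} Apply Lemma~\ref{lem:TC} to get a rigid square-symmetric Boolean threshold circuit $C'$. It has size $O(DS^2) = n^{O(k)}$ and depth $O(\Delta)$, and its output bit-blocks give the integer values of the outputs of $C$ on $\{0,1\}$ inputs.

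\item \textbf{Reduce to a single output bit.} Append a symmetric OR over all output bits of all output blocks. This costs one extra layer and one gate fixed by all of $S_n$, so rigidity and square-symmetry are kept. The resulting circuit $C''$ outputs $1$ exactly when some polynomial in the spanning set is nonzero. Since $V$ is spanned by the outputs, $C''(A_G)=0$ and $C''(A_H)=1$. Negating the output, we get a circuit accepting $G$ and rejecting $H$.

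\item \textbf{Bound the support.} $C''$ has total size $n^{O(k)}$, so every gate orbit has size at most $n^{O(k)} = n^{ck}$ for some constant $c$. Lemma~\ref{lem:binomial}, applied with $s = ck$ (adjusting constants so that $ck < \varepsilon n^{1-\varepsilon}$, which holds after rescaling $\varepsilon$ by a constant), gives $n^{ck} < \binom{n}{r}$ for $r = \lceil ck/\varepsilon\rceil \le n/4$. Then Theorem~\ref{thm:DW-support} shows that the support size of $C''$ is less than $r = O(k/\varepsilon)$.

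\item \textbf{Translate to WL.} Apply Theorem~\ref{thm:ADW-WL} with support size $O(k/\varepsilon)$, fan-in $s \le n^{O(k)}$ (so $\log s/\log n = O(k)$), and depth $O(\Delta)$. This distinguishes $G$ and $H$ by $O(k/\varepsilon)$-WL.
\end{enumerate}

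The main obstacle is the round count. Theorem~\ref{thm:ADW-WL} gives $O(\Delta(k + \log s/\log n)) = O(\Delta k/\varepsilon)$ rounds, not $O(\Delta)$. To get $O(\Delta)$, I would first re-examine the Anderson--Dawar formulas gate by gate. The extra quantifiers per gate are introduced to range over supports, that is, to quantify over tuples of size equal to the support. In WL terms, a block of $k$ quantifiers over a $k$-tuple can be absorbed into a single refinement step of $k$-dimensional WL, because a single round of $O(k)$-WL can look at all extensions of a tuple by a bounded number of vertices at once. The plan is therefore to argue that each threshold layer corresponds to $O(1)$ rounds of $O(k/\varepsilon)$-WL (in the counting-logic/pebble-game view, to $O(1)$ ``tuple-quantifier'' blocks). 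That would give $O(\Delta)$ rounds overall.

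If that refinement fails, the fallback is to state the bound with $O(\Delta k/\varepsilon)$ rounds, which still matches the rest of the corollary. I would also double-check two smaller points: that the $\mathsf{TC}^0$ gadgets inside Lemma~\ref{lem:TC} only inflate depth by a constant factor, and that they do not enlarge supports beyond those of the gates they replace. Both hold because gadget gates inherit the stabilizers of the original gates.
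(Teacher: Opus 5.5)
Your Steps 2--5 are the paper's proof: Lemma~\ref{lem:TC}, then one OR gate on top, then bounding orbit size by the total size $n^{O(k)}$, then support $O(k/\varepsilon)$ via Lemma~\ref{lem:binomial} and Theorem~\ref{thm:DW-support}, and finally Theorem~\ref{thm:ADW-WL}. Your Step 1 (merging in-twins until the circuit is rigid) supplies an assumption the paper uses without comment, since Lemma~\ref{lem:TC} needs a rigid input circuit.

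The round-count problem you raise is genuine, and the paper's proof does not resolve it either. The paper reads ``$O(\Delta)$ rounds'' off Theorem~\ref{thm:ADW-WL}. But that theorem, with support $O(k/\varepsilon)$ and fan-in up to $n^{O(k)}$, gives $O(\Delta(k/\varepsilon + k)) = O(\Delta k/\varepsilon)$ rounds.

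Your proposed repair fails. Under the Cai--F\"urer--Immerman correspondence, $r$ rounds of $k$-WL are exactly quantifier depth $r$ in $C^{k+1}$. A single refinement round only aggregates over single-vertex substitutions $\bar v[i/w]$; it does not see all extensions of a tuple by several vertices at once. The $\Theta(k/\varepsilon)$ extra quantifiers per gate in the Anderson--Dawar formulas are nested: they range over, and for threshold gates count, the tuples instantiating a child gate's support. So each layer costs $\Theta(k/\varepsilon)$ quantifier depth in that translation, and nothing in the cited results lets you charge only $O(1)$ rounds per layer.

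What your argument (and the paper's) actually establishes is your fallback: $O(\Delta k/\varepsilon)$ rounds of $O(k/\varepsilon)$-WL. This matches the stated $O(\Delta)$ only when $k/\varepsilon = O(1)$, for example constant $k$ and $\varepsilon$. Getting $O(\Delta)$ rounds for growing $k$ would need a genuinely different simulation, not a re-reading of Anderson--Dawar.

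A smaller point, also glossed over in the paper, concerns Step~4. Both moves there break down when $\varepsilon = \Theta(1/\log n)$ and $k$ is near the top of its allowed range: rescaling $\varepsilon$ by a constant, and meeting the requirement $r \le n/4$ of Theorem~\ref{thm:DW-support}. In that regime $ck/\varepsilon$ can be a constant fraction of $n$ exceeding $n/4$, while Lemma~\ref{lem:binomial} only covers $r \le n/2$. You should shrink the admissible range of $k$ by a constant factor.
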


\begin{proof}
Let $C$ be the symmetric algebraic circuit as in the statement of the theorem. By Lemma~\ref{lem:TC}, $C$ is faithfully simulated by a Boolean threshold circuit $C'$ of size $n^{O(k)}$ and depth $O(\Delta)$ that is rigid and square-symmetric. Furthermore, suppose without loss of generality that $C$ vanishes on $G$ but not on $H$; then the same is true of $C'$. Now we add one new OR gate to the top of $C'$, connected to all all of the outputs of $C'$; call the resulting circuit $\widehat{C}$. Then $\widehat{C}(H) = 1$ and $\widehat{C}(G) = 0$. Furthermore, $\widehat{C}$ is still a rigid square-symmetric circuit of size $n^{O(k)}$ and depth $O(\Delta)$, but now has just one output. By Lemma~\ref{lem:binomial}, this means that $\widehat{C}$ has size, and hence orbit size, at most $n^k < \binom{n}{k/\varepsilon}$ and so by Thm.~\ref{thm:DW-support}, $\widehat{C}$ has support size $O(d/\varepsilon)$. Thus, by Thm.~\ref{thm:ADW-WL} we get that $G$ and $H$ can be separated by $O(\Delta)$ rounds of $O(k/\varepsilon)$-WL.
\end{proof}

\section{Multiplicity obstructions for graph non-isomorphism} \label{sec:multiplicity}
\subsection{Additional preliminaries for this section}
\textit{Group actions.} For a group $\Gamma$, a \emph{$\Gamma$-set} is a set $X$ together with an action of $\Gamma$ (equivalently, a group homomorphism from $\Gamma$ to the group of permutations of $X$). Two $\Gamma$-sets $X,Y$ are said to be isomorphic if there is a bijection $f \colon X \to Y$ such that $f(\gamma x) = \gamma \cdot (f(x))$ for all $x \in x$, where on the left-hand side $\gamma x$ is the action of $\Gamma$ on $X$, and on the right-hand side $\gamma \cdot (f(x))$ is the action of $\Gamma$ on $Y$. $X$ is said to be a \emph{transitive} $\Gamma$-set if for all $x,x' \in X$, there exists $\gamma \in \Gamma$ such that $\gamma \cdot x = x'$. 

With one exception towards the end of this background section, that we will clearly mark, all of the material here is completely standard; see, for example, any standard textbook on group or representation theory, e.\,g., \cite{Fulton-Harris, Dummit-Foote}. 

For representations of finite groups $\Gamma$ over the complex numbers, the isomorphism type of the representation $\rho \colon \Gamma \to GL_d(\mathbb{C})$, where $d$ is the dimension, is determined by its \emph{character}, which by definition the function
\[
\chi_{\rho} \colon \Gamma \to \mathbb{C} \qquad \chi_{\rho}(g) := \tr(\rho(g)).
\]
An immediate consequence of this definition is that because the trace of a matrix remains the same after conjugation, if $g,h\in \Gamma$ are in the same conjugacy class, $\chi_\rho(g)=\chi_\rho(h)$. Thus, when $\Gamma=S_n$, $\chi_\rho(g)$ depends only on the cycle type of $g$ since the conjugacy classes in $S_n$ are exactly determined by cycle type. 

Because there is a bijection between characters and representations, just as we speak of the multiplicity of an irrep in another representation, if $\chi$ is the character of an irrep $V$ and $\chi'$ is the character of another representation $W$, by the multiplicity of $\chi$ in $\chi'$'' we mean the multiplicity of $V$ in $W$. Similarly, we speak of the character $\chi$ itself being irreducible when it is the character of an irrep. The character of most interest to us in this section is the character of the trivial representation, which we will denote $\chi_{\bf 1}$. The trivial representation ${\bf 1}:\Gamma\to GL_1(\mathbb{C})$ sends all group elements to the 1-dimensional identity matrix. Thus, $\chi_{\bf 1}(g)=1$ for all $g\in \Gamma$. The trivial representation is an irrep.

\textit{Character table.} It is a standard theorem that the number of irreps up to isomorphism is equal to the number of conjugacy classes. This lets us arrange all the characters of all irreps into a square table with side length equal to the number of conjugacy classes of $\Gamma$, called the \emph{character table}. We index its columns by conjugacy classes and index its rows by the irreps of $\Gamma$. For a column corresponding to the conjugacy class of $g \in \Gamma$ and a row corresponding to the irrep $V$, we put the character value $\chi_V(g)$ in the $(V,[g])$ entry of the table (recall that this value is the same for any choice of element in the conjugacy class $[g]$).  
\begin{theorem} \label{thm:char-table-invertible}
    The character table is always invertible for any $\Gamma$.
\end{theorem}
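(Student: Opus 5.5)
The plan is to use the orthogonality relations for irreducible characters over $\mathbb{C}$. Write the character table as a square matrix $X$ with rows indexed by the irreducible characters $\chi_1,\dotsc,\chi_r$ and columns indexed by the conjugacy classes $K_1,\dotsc,K_r$, with representatives $g_1,\dotsc,g_r$. The table is square because the number of irreps equals the number of conjugacy classes; I take this as the standard theorem just quoted. It then suffices to exhibit an explicit left inverse of $X$.

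First I would recall the inner product on class functions,
\[
\langle \alpha,\beta\rangle = \frac{1}{|\Gamma|}\sum_{g\in\Gamma}\alpha(g)\overline{\beta(g)},
\]
together with the first orthogonality relation $\langle \chi_i,\chi_j\rangle = \delta_{ij}$. I would prove this relation via Schur's lemma. Apply the averaging operator $\frac{1}{|\Gamma|}\sum_g \rho_i(g)\, A\, \rho_j(g^{-1})$ to an arbitrary linear map $A$. The result is $\Gamma$-equivariant, so by Schur's lemma it is zero when $i\neq j$ and a scalar when $i=j$. Taking traces, with $A$ ranging over elementary matrices, gives the relation.

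Since characters are constant on conjugacy classes, the sum over $\Gamma$ can be grouped by class:
\[
\delta_{ij} = \sum_{\ell=1}^r \frac{|K_\ell|}{|\Gamma|}\,\chi_i(g_\ell)\overline{\chi_j(g_\ell)}.
\]
Let $D$ be the diagonal matrix with entries $|K_\ell|/|\Gamma|>0$. In matrix form the identity reads $X D \overline{X}^{\top} = I$. Because $X$ is square, this already shows $X$ is invertible, with $X^{-1} = D\overline{X}^{\top}$.

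The only substantive step is the orthogonality relation itself, since the rest is bookkeeping. If a purely linear-algebraic argument is preferred over citing orthogonality, I would instead show linear independence of the rows. Suppose $\sum_i c_i\chi_i = 0$. Then for each $j$, pairing with $\chi_j$ gives $0 = \langle \sum_i c_i\chi_i,\chi_j\rangle = c_j$. This again uses orthogonality, which is why that relation is the crux. For $\Gamma=S_n$, all characters are integer-valued by Fact~\ref{fact:small-characters}, so complex conjugation can be dropped and $X^{-1} = DX^{\top}$ is rational. That rationality should be the form useful later in this section.
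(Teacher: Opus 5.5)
Your proposal is correct and is the standard textbook argument: the first orthogonality relation, rewritten as $XD\overline{X}^{\top}=I$, together with squareness of the table; the paper gives no proof of its own and simply defers to standard references for this result. One step your Schur's-lemma sketch glosses over is that the computation yields $\frac{1}{|\Gamma|}\sum_{g}\chi_i(g)\chi_j(g^{-1})=\delta_{ij}$, so you also need $\chi_j(g^{-1})=\overline{\chi_j(g)}$ (the eigenvalues of $\rho_j(g)$ are roots of unity), and $D\overline{X}^{\top}$ is a right rather than a left inverse, which is harmless since $X$ is square.
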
 

\textit{Inner products of characters.} The characters for any representation have a natural notion of an inner product $$\langle\chi,\chi'\rangle_\Gamma=\frac{1}{|\Gamma|}\sum_{g\in \Gamma}\chi(g)\overline{\chi'(g)}$$ where the bar denotes complex conjugation. Recall that the character values for $S_n$ are all integer-valued (see Fact~\ref{fact:small-characters}), so complex conjugation does not affect the inner product in our setting. If $\chi$ is the character of an irrep, then the inner product $\langle \chi, \chi' \rangle$ is equal to the multiplicity of $\chi$ in $\chi'$.

\textit{Representations of subgroups.} For any group $\Gamma$ with subgroup $\Sigma$, we can restrict any character of $\Gamma$ down to any character of $\Sigma$, denoted $\mathrm{Res}_\Sigma^\Gamma\chi_\rho$ where $\rho$ is an irrep of $\Gamma$, and we can induce any character of $\Sigma$ up to any character of $\Gamma$, denoted $\mathrm{Ind}_\Sigma^\Gamma\chi_{\rho}'$ where $\rho'$ is an irrep of $\Sigma$.
\begin{theorem}[Frobenius Reciprocity] \label{thm:frobenius}
    $\langle\mathrm{Ind}_\Sigma^{\Gamma}\chi_{\rho'}, \chi_\rho \rangle_\Gamma = \langle\chi_{\rho'},\mathrm{Res}^{\Gamma}_\Sigma\chi_\rho\rangle_\Sigma$
\end{theorem}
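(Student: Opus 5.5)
The plan is to prove Frobenius Reciprocity by writing both sides as sums over group elements, using an explicit formula for the induced character. I will work with the standard formula for induction. Fix a set of left coset representatives $g_1,\dotsc,g_m$ of $\Sigma$ in $\Gamma$, and extend $\chi_{\rho'}$ by zero to a function $\dot\chi_{\rho'}$ on $\Gamma$. Then
\[
\mathrm{Ind}_\Sigma^\Gamma \chi_{\rho'}(g) = \sum_{i=1}^m \dot\chi_{\rho'}(g_i^{-1} g g_i) = \frac{1}{|\Sigma|}\sum_{x \in \Gamma} \dot\chi_{\rho'}(x^{-1} g x).
\]
I will take this as the definition, or derive it from the block-matrix description of the induced module $\bigoplus_i g_i W$. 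In that description $g$ permutes the blocks, and only the blocks with $g g_i \Sigma = g_i\Sigma$ contribute to the trace, each contributing $\chi_{\rho'}(g_i^{-1} g g_i)$. The second equality holds because every $x$ in the coset $g_i\Sigma$ gives the same value, since $\chi_{\rho'}$ is a class function on $\Sigma$.

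With this formula, I expand the left-hand side:
\[
\langle \mathrm{Ind}\chi_{\rho'}, \chi_\rho\rangle_\Gamma = \frac{1}{|\Gamma||\Sigma|}\sum_{g\in\Gamma}\sum_{x\in\Gamma} \dot\chi_{\rho'}(x^{-1}gx)\overline{\chi_\rho(g)}.
\]
Next I substitute $h = x^{-1} g x$. For each fixed $x$ this is a bijection of $\Gamma$. Since $\chi_\rho$ is a class function on $\Gamma$, we have $\chi_\rho(g)=\chi_\rho(h)$. The inner sum over $x$ then no longer depends on $x$ and contributes a factor of $|\Gamma|$, which leaves
\[
\frac{1}{|\Sigma|}\sum_{h\in\Gamma}\dot\chi_{\rho'}(h)\overline{\chi_\rho(h)} = \frac{1}{|\Sigma|}\sum_{h\in\Sigma}\chi_{\rho'}(h)\overline{\chi_\rho(h)}.
\]
This is exactly $\langle \chi_{\rho'}, \mathrm{Res}\chi_\rho\rangle_\Sigma$, because the restricted character is just $\chi_\rho$ evaluated on $\Sigma$.

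The only real obstacle is justifying the induced-character formula. In particular I need that the trace of $g$ on $\bigoplus_i g_iW$ receives contributions only from fixed blocks, and that each fixed block contributes $\chi_{\rho'}(g_i^{-1}gg_i)$ with $g_i^{-1}gg_i\in\Sigma$. Both facts are a direct computation once $\mathrm{Ind}$ is defined as $\F[\Gamma]\otimes_{\F[\Sigma]}W$. I will also check that the complex conjugation sits on the correct side. The conjugate lands on $\chi_\rho$ in both expressions, so no symmetry of the inner product is needed; in any case, for $S_n$ the characters are real by Fact~\ref{fact:small-characters}.
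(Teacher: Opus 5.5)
Your argument is correct. Note that the paper gives no proof of this statement: it is listed in the preliminaries of Section~\ref{sec:multiplicity} as a standard fact with textbook references, so there is no in-paper argument to compare against.

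What you wrote is the standard character computation, and each step holds up:
\begin{itemize}
\item Independence of the coset representative uses that $\chi_{\rho'}$ is a class function on $\Sigma$.
\item The substitution $h=x^{-1}gx$ uses that $\chi_\rho$ is a class function on $\Gamma$.
\item On both sides the conjugate falls on $\chi_\rho$. This matches the paper's convention $\langle\chi,\chi'\rangle_\Gamma=\frac{1}{|\Gamma|}\sum_g\chi(g)\overline{\chi'(g)}$.
\end{itemize}

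The main alternative route is the module-level adjunction $\mathrm{Hom}_\Gamma(\F[\Gamma]\otimes_{\F[\Sigma]}W,V)\cong\mathrm{Hom}_\Sigma(W,\Res^\Gamma_\Sigma V)$. It holds over any field. In characteristic zero it yields your character identity after taking dimensions, since the character inner product computes the dimension of the space of equivariant maps between the two modules.

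It is worth seeing that Observation~\ref{obs:perm-to-linear} in the paper is exactly this adjunction for $W$ the trivial $\Sigma$-module. That special case is the only one the paper actually uses:
\begin{itemize}
\item in Theorem~\ref{thm:cycle-index}, with $\chi_{\rho'}$ trivial;
\item in the step ``$V$ occurs in $\F[S_n/\Gamma]$ iff $V$ has a nonzero $\Gamma$-invariant vector'' in Lemma~\ref{lem:occurrence-inheritance} and Proposition~\ref{prop:mult-vs-occurrence}.
\end{itemize}

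Your route gives the identity for arbitrary, not necessarily irreducible, characters via an explicit induced-character formula. The Hom route is more conceptual and directly gives the invariant-vector formulation the paper relies on.
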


The one straightforward fact that we will use that might not be covered in standard textbooks (but is not beyond the difficulty of an exercise) is the following. The coset representation $\F[S_n / \Gamma]$ is equal to the induced representation in $S_n$ of the trivial representation in $\Gamma$, i.\,e. $\F[S_n / \Gamma] \cong\mathrm{Ind}^{S_n}_\Gamma{\bf 1}$.

\subsection{Computing multiplicities}
We begin this section by showing that the representation-theoretic multiplicities supported on a graph's orbit, up to degree $d$, can be computed in $n^{O(d)}$ time. (Or up to support-degree $d$ in $2^{O(d^2)} n^{O(d)}$ time.)

\begin{theorem} \label{thm:algorithm-multiplicities}
Let $\F$ be a field of characteristic $0$ or $> n$. There is an algorithm which, given a graph $G$ and a positive integer $d$, computes the multiplicities of every $S_n$-irrep that occurs in $\F[S_n \cdot  G]$ up to degree $d$ in time $n^{O(d)}$.

In particular, there is an algorithm that, given two $n$-vertex graphs $G,H$ and an integer $d$ as input, will find a multiplicity obstruction in degree $\leq d$ if one exists, or report that none does, in time $n^{O(d)}$.
\end{theorem}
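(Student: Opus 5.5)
The plan is to reuse the decomposition machinery from the proof of Theorem~\ref{thm:algorithm} and read off multiplicities as kernel dimensions. Here $\F[S_n \cdot G]$ in degree $\le d$ means the image of $R_{\le d}$ in the functions on the orbit $S_n\cdot G$. Write $I_G$ for the vanishing ideal of that orbit. Then
\[
\F[S_n\cdot G]_{\le d} \cong R_{\le d}/(I_G\cap R_{\le d})
\]
as $S_n$-modules. By Maschke's theorem, which applies because $\mathrm{char}\,\F$ is $0$ or $>n$ and hence coprime to $n!$, the multiplicity of an irrep $\lambda$ in this quotient equals $\mult_\lambda(R_{\le d}) - \mult_\lambda(I_G\cap R_{\le d})$. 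It therefore suffices to compute, for each $\lambda$, both $\mult_\lambda(R_{\le d})$ and $\mult_\lambda(I_G\cap R_{\le d})$.

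First, exactly as in the proof of Theorem~\ref{thm:algorithm}, I build the permutation matrices $M_1,M_2$ of the generators $(12)$ and $(12\cdots n)$ acting on the monomial basis of $R_{\le d}$. This has dimension $n^{O(d)}$. I then run Chistov--Ivanyos--Karpinski (or Brooksbank--Luks) to decompose $R_{\le d}$ into irreps, group them into isotypic components, and fix explicit equivalences between the copies within each component. This yields bases $v_{i,j}$, where $i\le k$ indexes a basis of the irrep and $j\le m$ indexes the copy. The number $m$ is $\mult_\lambda(R_{\le d})$. Since irreps of $S_n$ are absolutely irreducible and defined over $\mathbb{Q}$, this works over $\F$ in characteristic $0$. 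It also works in characteristic $>n$, where $\F[S_n]$ is still semisimple. To identify which partition $\lambda$ each isotypic component corresponds to, I compute the character: take traces of the action of one representative of each cycle type on one copy, and match against the character table. Alternatively, I simply record the components up to equivalence, which is all that is needed to compare $G$ and $H$.

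The key step is to show that $\mult_\lambda(I_G\cap R_{\le d})=\dim\ker U$, where $U$ is the $k\times m$ matrix with entries $U_{ij}=v_{i,j}(A_G)$. By Schur's lemma, every sub-representation of the $\lambda$-isotypic component has the form $\{\sum_j x_j v_j : x\in X\}$ for a subspace $X\le\F^m$, and its multiplicity is $\dim X$. Such a submodule lies in $I_G$ iff it vanishes on the whole orbit. Because it is $S_n$-closed, this happens iff it vanishes at $A_G$ alone. For a single copy given by $x$, vanishing at $A_G$ means $\sum_j x_j v_{i,j}(A_G)=0$ for all $i$, i.e.\ $Ux=0$. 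So the largest vanishing submodule corresponds to $X=\ker U$, and the multiplicity in the coordinate ring is $m-\dim\ker U=\mathrm{rank}\,U$.

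Each such evaluation and rank computation costs $\mathrm{poly}(n^{O(d)})$, and there are at most $n^{O(d)}$ isotypic components, so the total time is $n^{O(d)}$. For the ``in particular'' statement, I run the procedure on $G$ and on $H$ and compare the multiplicity vectors degree by degree. For the graded version, I apply the procedure to each of $R_{\le 0},\dots,R_{\le d}$. A difference in any coordinate is a multiplicity obstruction, and if there is no difference I report that none exists. I expect the main obstacle to be the careful justification that the CIK decomposition and equivalence-finding remain valid over fields of characteristic $>n$ with bit-sizes $n^{O(d)}$. By contrast, the kernel-dimension identification above is the conceptual heart but is straightforward once Schur's lemma is applied.
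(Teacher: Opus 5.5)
Your proposal is correct and follows essentially the same route as the paper: decompose $R_{\le d}$ into isotypic components as in the proof of Theorem~\ref{thm:algorithm}, then read off the vanishing multiplicity as $\dim\ker U$. Your Schur's-lemma argument, showing that $\ker U$ gives exactly the maximal vanishing submodule and hence that the coordinate-ring multiplicity is $\mathrm{rank}\,U$, fills in what the paper leaves implicit.

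One caution concerns your optional step of matching traces against the full character table. That would require evaluating at $p(n)=e^{\Theta(\sqrt{n})}$ cycle types, which exceeds $n^{O(d)}$ for constant $d$. You should therefore rely on your alternative of recording components up to equivalence. This is what the paper does, and it needs no extra matching step, since one decomposition of $R_{\le d}$ serves both $G$ and $H$.
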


As with Lemma~\ref{lem:spanning}, an analogous result holds with degree replaced by support-degree, but with runtime $2^{O(d^2)} n^{O(d)}$ (the number of monomials of support-degree at most $d$). 

\begin{proof}
The third paragraph of the proof of Theorem~\ref{thm:algorithm} already says directly that we can compute all irreps in $\F[X]$ up to degree $d$ that vanish on $G$ (resp., $H$), as well as which ones are equivalent to which other ones as abstract representations. We can then apply the algorithm of Brooksbank \& Luks \cite{BL} again to tell which irreps that vanish on $G$ are equivalent to which ones that vanish on $H$. At that point, the algorithm can directly check if any of the multiplicities differ.
\end{proof}

\subsection{Characterizing multiplicities}
\begin{definition}[Cycle index]
    The \emph{cycle index} of a permutation group $G \leq S_n$ is the multivariate generating function for the multiset of its cycle types. That is, if $c_i(\pi)$ denotes the number of $i$-cycles in $\pi \in S_n$, the cycle index of $G$ is
    \[
    cyc_G(x_1,\dotsc,x_n) := \sum_{\pi \in G} x_1^{c_1(\pi)} x_2^{c_2(\pi)} \dotsb x_n^{c_n(\pi)}.
    \]
\end{definition}
We will not really use the generating function aspect of the cycle index; rather, we use it as a convenient container for the multi-set of cycle-types of a permutation group (and we use the name to help connect it to literature on the cycle index).

\begin{theorem} \label{thm:cycle-index}
For all $G \leq S_n$, the collection of all multiplicities of $S_n$-representations in the permutation representation $S_n / G$ determine and are determined by the cycle index of $G$.
\end{theorem}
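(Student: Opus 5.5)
The plan is to compute the multiplicities explicitly and show that they are related to cycle-type counts by an invertible linear transformation. By the fact recorded at the end of the preliminaries, $\F[S_n/G] \cong \mathrm{Ind}^{S_n}_G \mathbf{1}$. So for each irrep $\chi_\lambda$ of $S_n$, Frobenius reciprocity (Theorem~\ref{thm:frobenius}) gives
\[
m_\lambda := \langle \mathrm{Ind}^{S_n}_G \chi_{\mathbf 1}, \chi_\lambda\rangle_{S_n} = \langle \chi_{\mathbf 1}, \mathrm{Res}^{S_n}_G \chi_\lambda\rangle_G = \frac{1}{|G|}\sum_{\pi \in G} \chi_\lambda(\pi).
\]
Character values of $S_n$ are integers (Fact~\ref{fact:small-characters}), so no conjugation appears.

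Next I group the sum by conjugacy classes. Since $\chi_\lambda(\pi)$ depends only on the cycle type $\mu$ of $\pi$, let $a_\mu(G)$ be the number of elements of $G$ with cycle type $\mu$; this is exactly the coefficient data of $cyc_G$. Then
\[
|G|\, m_\lambda = \sum_{\mu \vdash n} \chi_\lambda(\mu)\, a_\mu(G).
\]
In matrix form, with $X$ the character table (rows indexed by $\lambda$, columns by $\mu$), $m$ the vector $(m_\lambda)_\lambda$, and $a$ the vector $(a_\mu)_\mu$, this reads $|G|\, m = X a$.

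For the direction ``cycle index determines multiplicities'': $|G| = \sum_\mu a_\mu$ is read off from the cycle index (it is $cyc_G(1,\dots,1)$). Hence $m = X a / |G|$ is determined.

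For the converse, I use Theorem~\ref{thm:char-table-invertible}: $X$ is invertible, so $a = |G|\, X^{-1} m$. The one issue is the unknown scalar $|G|$, and I expect this normalization to be the only real obstacle. I would resolve it with the identity-class entry of $a$: $a_{(1^n)} = 1$, since exactly one element of $G$ has cycle type $1^n$. So I compute $b := X^{-1} m$, which is determined by the multiplicities. Then $b_{(1^n)} = 1/|G|$, which is nonzero, so $|G| = 1/b_{(1^n)}$ and $a = b / b_{(1^n)}$.

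Alternatively, $|G|$ can be recovered as $n!/\dim \F[S_n/G]$, where $\dim \F[S_n/G] = \sum_\lambda m_\lambda \dim V_\lambda$. Either way, the multiplicities determine the full vector $a$, and hence the cycle index.

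The multiplicities here are meant in $\F[S_n/G]$ over a field of characteristic zero. There characters determine representations, and the inner-product formula for multiplicities is valid, so nothing beyond the cited facts is needed.
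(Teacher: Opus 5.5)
Your proposal is correct and follows essentially the same route as the paper: Frobenius reciprocity gives $|G|\,m = Xa$ with $X$ the character table of $S_n$, and invertibility of $X$ gives the converse. Your alternative normalization via $\dim \F[S_n/G] = n!/|G|$ is exactly how the paper recovers $|G|$. The identity-class trick $a_{(1^n)}=1$ is an equally valid shortcut.
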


We first prove an immediate corollary about the power of multiplicities to distinguish non-isomorphic graphs:

\begin{corollary} \label{cor:multiplicity}
Two $n$-vertex graphs $G$ and $H$ are separated by a multiplicity obstruction if and only if the cycle indices of $\Aut(G) \leq S_n$ and $\Aut(H) \leq S_n$ are distinct.
\end{corollary}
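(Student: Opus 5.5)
The plan is to reduce the corollary to Theorem~\ref{thm:cycle-index}. That requires identifying the multiplicity data of a graph with the multiplicities of irreps in the permutation representation $\F[S_n/\Aut(G)]$.

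First, I will set up the ring of functions on the orbit. Let $\mathcal{O}_G = S_n\cdot A_G$ denote the orbit of $G$'s adjacency matrix, and let $I_G$ denote the vanishing ideal of $\mathcal{O}_G$ inside $R = \F[x_{ij}]/\langle x_{ij}^2-x_{ij}\rangle$. By the proof of Lemma~\ref{lem:test-modules-of-high-degree}, $R$ is exactly the ring of all functions $\{0,1\}^{n\times n}\to\F$. Restriction to the finite set $\mathcal{O}_G$ therefore gives an $S_n$-equivariant surjection $R\to\F^{\mathcal{O}_G}$, whose kernel is $I_G$. Hence, as $S_n$-representations, $R\cong I_G\oplus \F^{\mathcal{O}_G}$ by Maschke's theorem.

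The orbit $\mathcal{O}_G$ is a transitive $S_n$-set, and the stabilizer of $A_G$ is $\Aut(G)$. So $\mathcal{O}_G\cong S_n/\Aut(G)$ as $S_n$-sets. The function space $\F^{\mathcal{O}_G}$ is isomorphic to the permutation module $\F[S_n/\Aut(G)]$: identify the indicator functions with basis vectors, noting that the action on functions $(\pi f)(A)=f(\pi^{-1}A)$ matches the convention $g\cdot e_x=e_{g^{-1}x}$ up to this identification.

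Next, for each irrep $\lambda$, the multiplicity of $\lambda$ in $I_G$ equals $\mult_\lambda(R)-\mult_\lambda(\F[S_n/\Aut(G)])$. This is the number of independent copies of $\lambda$ vanishing on $G$, which is the multiplicity referred to in the definition of a multiplicity obstruction. Since $\mult_\lambda(R)$ does not depend on the graph, $G$ and $H$ have a multiplicity obstruction exactly when some $\lambda$ has different multiplicity in $\F[S_n/\Aut(G)]$ and $\F[S_n/\Aut(H)]$.

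Finally, apply Theorem~\ref{thm:cycle-index} with the subgroups $\Aut(G)$ and $\Aut(H)$ of $S_n$. The full multiplicity vectors of these two permutation representations agree if and only if the cycle indices agree. This gives both directions of the corollary.

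The main obstacle I expect is a definitional subtlety: the multiplicity could be read degree by degree (in filtered pieces $R_{\le d}$) rather than in the whole ring $R$. If so, I would still argue with the whole ring, since an obstruction at some degree yields a difference in total multiplicity. Conversely, a total-multiplicity difference appears at degree $\le n^2$, where $R_{\le n^2}=R$. So the graded or filtered refinement does not change which pairs of graphs are separated.
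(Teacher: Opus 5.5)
Your main argument (the first four paragraphs) is correct and is essentially the paper's proof. You split $R$ as $I_G$ plus the coordinate ring of the orbit, identify the latter with $\F[S_n/\Aut(G)]$ by orbit--stabilizer, note that $\mult_\lambda(R)$ does not depend on the graph, and invoke Theorem~\ref{thm:cycle-index}. Your appeal to Maschke's theorem, and to $R$ being the full ring of functions on $\{0,1\}$-matrices, makes explicit what the paper leaves implicit.

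Your final paragraph, however, is wrong and should be dropped. An obstruction in some filtered piece $R_{\le d}$ does \emph{not} force a difference in total multiplicity, and the corollary is false if multiplicities are read degree by degree. For example, let $G$ be the Frucht graph (cubic, $12$ vertices, trivial automorphism group), and let $H$ be any asymmetric non-regular graph on $12$ vertices. Both automorphism groups have cycle index $x_1^{12}$, so all total multiplicities agree. Now look at the $(n-1,1)$-isotypic part of the linear forms. It consists of three copies, spanned respectively by the centered diagonal entries, the centered row sums $\sum_j x_{ij}-\frac1n\sum_{k,j}x_{kj}$, and the centered column sums. On the orbit of a loopless undirected graph the first copy vanishes, the other two coincide, and they vanish if and only if the graph is regular. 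Hence $(n-1,1)$ has multiplicity $3$ in $I_G\cap R_{\le 1}$ but only $2$ in $I_H\cap R_{\le 1}$.

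So the filtered notion separates strictly more pairs. The corollary, like the paper's proof, concerns multiplicities in the whole ring $R$, and only the ``conversely'' half of your last paragraph is valid.
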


\begin{proof}
Let $R = \F[x_{ij}] / \langle x_{ij}^2 - x_{ij} \rangle$ and let $I_G \subseteq R$ be the vanishing ideal of the orbit $S_n \cdot G$ (and similarly for $I_H$). Then for any irrep $\chi$, we have $\mult_\chi(\F[S_n \cdot G]) = \mult_\chi(R) - \mult_\chi(I_G)$, so the multiplicities of irreps that vanish on $S_n \cdot G$ give equivalent information to the multiplicities that occur in the coordinate ring of the orbit, $\F[S_n \cdot G]$. Now, because the orbit is a finite set, and the Orbit--Stabilizer Theorem gives an $S_n$-equivariant isomorphism $S_n \cdot G \to S_n / \Aut(G)$ (the latter being the coset space), as $S_n$ representations we have $\F[S_n \cdot G]$ and $\F[S_n / \Aut(G)]$ are equivalent: $\mult_\chi \F[S_n \cdot G] = \mult_\chi \F[S_n / \Aut(G)]$. By Theorem~\ref{thm:cycle-index}, the latter multiplicities contain the same information as the cycle index of $\Aut(G)$. Thus $G$ and $H$ are distinguished by $S_n$-multiplicities in their vanishing ideals $I_G,I_H$ if and only if $\Aut(G)$ and $\Aut(H)$ have distinct cycle indices.
\end{proof}

\begin{proof}[Proof of Theorem~\ref{thm:cycle-index}]
    Let $G,H \leq S_n$ be two subgroups of $S_n$. Let $u$ be the column vector indexed by cycle types, where for each cycle type $c$, $u_c$ is the number of elements of $G$ of cycle type $c$. We will refer to $u$ as the ``cycle index vector'' of $G$. Similarly, let $v$ be the cycle index vector of $H$. By construction, the cycle index vector and the cycle index polynomial ``contain the same data'' (just represented by different types of objects).

    We first show that if either the multiplicities or the cycle indices are the same, then $|G|=|H|$. Note that $\sum_c u_c = |G|$, so if the cycle indices are the same, then $|G| = \sum_c u_c = \sum_c v_c = |H|$. On the other hand, if the multiplicities of $S_n$-irreps in $\F[S_n / G]$ are the same as those in $\F[S_n / H]$, then we have $|S_n / G| = \dim \F[S_n / G] = \dim \F[S_n /H] = |S_n / H|$, and therefore $|G|=|H|$.

    Now we show how to recover the multiplicities from the cycle index vectors (and then at the end, because this uses the character table and the character table is invertible, we will get the other direction almost for free). 
    Let $\chi$ be an irreducible character of $S_n$ for which we wish to know its multiplicity in $\F[S_n / G]$. As a representation, the latter is the induced representation $\mathrm{Ind}_G^{S_n} \chi_1$, where $\chi_1$ is the trivial character of $G$. So we seek the inner product $\langle \chi, \mathrm{Ind}_G^{S_n}\chi_1 \rangle_{S_n}$, where we have used the subscript $S_n$ to remind the reader that here this is the inner product of $S_n$-characters. By Frobenius Reciprocity (Theorem~\ref{thm:frobenius}), we have
    \[
    mult_\chi(\F[S_n / G]) = \langle \chi, \mathrm{Ind}_G^{S_n}\chi_1 \rangle_{S_n} = \langle \mathrm{Res}^{S_n}_G \chi, \chi_1 \rangle_G.
    \]
    To calculate the latter, note that $\chi_1$ on $G$ is just the constant-$1$ function, and $\mathrm{Res}^{S_n}_G(\chi)(g) = \chi(g)$, which is completely determined by the cycle type of $G$. Thus we seek to compute
    \[
    \langle \mathrm{Res}^{S_n}_G \chi, \chi_1 \rangle_G = \frac{1}{|G|} \sum_{g \in G} \chi(g).
    \]
    Since $\chi(g)$ only depends on the cycle type of $g$, the latter sum is a scaled version of summing the $\chi$-th row of the character table $M$ of $S_n$, with each entry weighted by the number of elements of $G$ with that cycle type. Thus we have that the $\chi$-th entry of $Mu$ (where $u$ is the cycle index vector of $G$) is the multiplicity of $\chi$ in $\F[S_n / G]$:
    \[
    mult_\chi(\F[S_n / G]) = \frac{1}{|G|}\left(Mu\right)_\chi
    \]

    Putting these together, we get that the vector $\frac{1}{|G|}Mu$ is the vector of all the multiplicities of $S_n$ irreps in $\F[S_n / G]$. Thus, if $G$ and $H$ have the same cycle index, then $u=v$ and (by the first paragraph above) $|H|=|G|$, so the vector of multiplicities $(1/|G|) Mu$ for $G$ is equal to $(1/|H|)Mv$ for $H$.

    Conversely, if the multiplicity vectors are equal, then we have $(1/|G|) Mu = (1/|H|) Mv$. As $|G|=|H|$ (again by the first paragraph above), and $M$ is invertible (Theorem~\ref{thm:char-table-invertible}), we get that $u=v$, and thus $G$ and $H$ have the same cycle index. \qedhere

\end{proof}

\subsection{Limitations of multiplicity obstructions}
\begin{observation} \label{obs:multiplicity-nogo}
Almost all pairs of graphs are not distinguishable by multiplicities but are distinguishable by separating modules. The same is true for almost all pairs of random regular graphs of degree $3 \leq d \leq n-4$.
\end{observation}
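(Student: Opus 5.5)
Two things have to be shown: almost all pairs are \emph{not} distinguished by multiplicities, and \emph{are} distinguished by separating modules. Both reduce to known facts about automorphism groups together with results already in this paper.

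\emph{Indistinguishability by multiplicities.} By Corollary~\ref{cor:multiplicity}, $G$ and $H$ admit a multiplicity obstruction if and only if the cycle indices of $\Aut(G)$ and $\Aut(H)$ differ. I will use the classical theorem of Erd\H{o}s--R\'enyi (equivalently, P\'olya's asymptotic count of unlabeled graphs) that almost all graphs on $n$ vertices are asymmetric, i.e.\ $\Pr[\Aut(G)=1]\to 1$. If $G$ and $H$ are chosen independently, both are asymmetric with probability $(1-o(1))^2=1-o(1)$. In that case both cycle indices equal $x_1^n$, so no multiplicity obstruction exists. Concretely, $\F[S_n/\Aut(G)]\cong\F[S_n]\cong\F[S_n/\Aut(H)]$ is the regular representation in both cases.

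For random $d$-regular graphs I will cite the analogous results: Bollob\'as and McKay--Wormald for constant $d\ge 3$, and Kim--Sudakov--Vu for $3\le d\le n-4$. These show that a random $d$-regular graph is asymmetric with probability $1-o(1)$, and the same argument then applies verbatim. The upper limit $d\le n-4$ matches the complement symmetry: a $d$-regular graph has the same automorphism group as its $(n-1-d)$-regular complement, so the range is closed under complementation.

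\emph{Distinguishability by separating modules.} Two uniformly random labeled graphs are non-isomorphic with probability $1-o(1)$. The same holds for random regular graphs, since the isomorphism class of an asymmetric graph has probability at most $n!/\#\{d\text{-regular labeled graphs}\}$, which is $o(1)$. For non-isomorphic $G$ and $H$, let $T$ be the vanishing ideal in $R=\F[x_{ij}]/\langle x_{ij}^2-x_{ij}\rangle$ of the finite orbit $S_n\cdot G$. The text following the definition of test modules already notes that $T$ is a test module. It vanishes on $A_G$, and it does not vanish on $A_H$: because $R$ is the ring of all functions on $\{0,1\}$-matrices (as in the proof of Lemma~\ref{lem:test-modules-of-high-degree}), the indicator function of $S_n\cdot H$, which is zero on $S_n\cdot G$, lies in $T$. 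So $T$ is a separating module. Since separating invariants are special separating modules, the simplest witness is the invariant $S_K$ (Observation~\ref{obs:invariants}) with $K=H$, shifted by its value on $G$.

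\emph{Main obstacle.} The delicate part is only in the quoted asymptotic asymmetry results, particularly for the full regular-degree range. The representation-theoretic content is entirely supplied by Corollary~\ref{cor:multiplicity}. I will therefore state the asymmetry theorems as cited black boxes rather than reprove them.
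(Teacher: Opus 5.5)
Your proof is correct and follows the same route as the paper. Corollary~\ref{cor:multiplicity} together with the asymmetry theorems (Erd\H{o}s--R\'enyi, and Kim--Sudakov--Vu for regular graphs) rules out multiplicity obstructions, and separability of disjoint finite orbits gives separating modules. Your vanishing-ideal argument with the indicator of $S_n\cdot H$ is a self-contained replacement for the paper's citation of Derksen--Kemper, and your check that random pairs are non-isomorphic with high probability supplies a step the paper leaves implicit.

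One side remark is false, though the proof does not use it. $S_H - S_H(G)$ need not separate $G$ from $H$: $S_H(H)=1$, but also $S_H(G)=1$ whenever $G$ has more edges and contains exactly one copy of $H$. For example, take $H$ a triangle and $G$ a triangle with a pendant edge, both padded with isolated vertices. Instead, take $K$ to be whichever of $G,H$ has more edges (either one if they tie), so that $S_K$ takes the values $1$ and $0$ on the two graphs. Alternatively, keep the invariant indicator of $S_n\cdot H$ that you already exhibited.
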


\begin{proof}
Since all orbits of $S_n$ are finite, any two distinct orbits (=non-isomorphic graphs) are in fact separated by invariants (e.\,g., \cite[Thm.~2.3.6]{DerksenKemper}), which is a particular case of separating modules. On the other hand, almost all graphs have trivial automorphism group \cite{ErdosRenyiAsymmetric}, so by Cor.~\ref{cor:multiplicity} any two such graphs are not distinguished from one another by multiplicity information alone. For random regular graphs, we use the analogous result on automorphism groups in that setting \cite{KSVAsymmetric}.
\end{proof}

As pointed out in the introduction, it is thus natural to ask whether all graphs that are characterized by their symmetries are distinguishable by multiplicities. Here we show that in constant degree, only finitely many graphs can be distinguished by multiplicities:

\begin{proposition} \label{prop:FI}
For any fixed degree $d$, as $n \to \infty$, the multiplicities of irreps in degree $\leq d$ can distinguish at most finitely many pairs of non-isomorphic graphs. In particular, there is no constant $d$ such that multiplicity information in degree $\leq d$ distinguishes all pairs of non-isomorphic graphs that are characterized by their symmetries.
\end{proposition}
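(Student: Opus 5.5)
Fix $d$. I read the statement as follows: as $n$ varies, the data ``multiplicities of all irreps in the degree-$\le d$ part of $\F[S_n\cdot G]$'' (equivalently, of the vanishing ideal $I_G$ intersected with $R_{\le d}$) takes only boundedly many values, independent of $n$. Once that is shown, multiplicity information in degree $\le d$ partitions $n$-vertex graphs into at most $N_d$ classes, with $N_d$ independent of $n$, so at most $N_d$ isomorphism types can be told apart. That is the sense of ``finitely many''; the argument below in fact uses nothing specific to multiplicities beyond the fact that they are an invariant of this finite-dimensional data. For the ``in particular'' clause, I would exhibit more than $N_d$ pairwise non-isomorphic symmetry-characterized graphs on $n$ vertices for some large $n$. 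Complete multipartite graphs $K_{a,a,\dots,a}$ and their complements, disjoint unions of cliques $mK_a$ with $ma=n$ for various divisors $a$, are natural candidates. Indeed $\Aut(mK_a)=S_a\wr S_m$ has three orbitals (diagonal, edges, non-edges), so any graph whose automorphism group contains it is $\emptyset$, $K_n$, $mK_a$, or its complement. Choosing $n$ with more than $N_d$ divisors, these give pairwise non-isomorphic symmetry-characterized graphs, so some pair is not distinguished.

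The key steps are as follows. First, the representation of $S_n$ on the degree-$\le d$ graded piece of $\F[S_n\cdot G]$ is the image of $R_{\le d}$ under restriction to the orbit. Its isomorphism type is determined by the multiplicities of $\lambda$ with $\lambda_1\ge n-2d$ (Cor.~\ref{cor:Sn-reps-by-degree}). More precisely, I would use the filtered pieces $R_{\le d}/(I_G\cap R_{\le d})$, whose multiplicities encode exactly the degree-$\le d$ multiplicity obstructions. Second, the multiplicity of each such $\lambda$ is bounded above by its multiplicity in $R_{\le d}$ itself, and by FI-module representation stability \cite{ChurchFarb,CEF} this multiplicity is, for $n\ge 4d$ say, independent of $n$ once $\lambda$ is written as $\lambda[n]=(n-|\mu|,\mu)$ with $|\mu|\le 2d$. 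Concretely, $R_{\le d}$ is a finitely generated FI-module in degree $\le 2d$, since it is spanned by orbits of monomials on $\le 2d$ vertices. So the number of relevant $\mu$ is bounded by $\sum_{j\le 2d}p(j)$, and each multiplicity lies in a range $[0,c_{d,\mu}]$ independent of $n$. Third, the number of possible multiplicity vectors is therefore at most $\prod_\mu (c_{d,\mu}+1)=:N_d$, which is independent of $n$, and this completes the first claim.

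The main obstacle is the second step: making rigorous that the multiplicity of $\lambda[n]$ in $R_{\le d}$ stabilizes, that is, is bounded uniformly in $n$. I would first try an elementary route avoiding FI machinery. $R_{\le d}$ is a direct sum of permutation modules $\F[S_n/\stab(m)]$ over orbit representatives $m$, and there are at most a constant number $c(d)$ of such orbits, indexed by isomorphism types of graphs with $\le d$ edges. Each stabilizer contains a Young-type subgroup $S_{n-s}\times\Aut(H_m)$ with $s\le 2d$. The multiplicity of $\lambda[n]$ in $\mathrm{Ind}_{S_{n-s}}^{S_n}\mathbf 1$ equals, by Frobenius reciprocity and the branching rule, the number of skew paths from shapes of size $n-s$ of the form $(n-s)$ to $\lambda[n]$, which is at most $s!\le(2d)!$ and is independent of $n$. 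This gives the bound $c_{d,\mu}\le c(d)\,(2d)!$ directly, which suffices.
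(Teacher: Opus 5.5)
Your proposal is correct, and its skeleton matches the paper's. The first step is to show that the degree-$\le d$ multiplicity data takes at most $N_d$ values, with $N_d$ independent of $n$. The second is pigeonhole on the symmetry-characterized graphs $mK_a$ with $ma=n$, for $n$ having many divisors (the paper takes $n$ a product of $\mu$ distinct primes).

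The difference is in how you get the uniform bound on multiplicities.

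\emph{The paper's route.} It invokes Church--Ellenberg--Farb's FI-module theory to obtain uniform representation stability of $\mathrm{Sym}^d(V_n\otimes V_n)$ in the un-quotiented ring. For large $n$ this gives a fixed decomposition $\bigoplus_i S_{\lambda_i[n]}^{\oplus m_i}$, and the paper then bounds the Boolean quotient by it.

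\emph{Your route.} You decompose $R_{\le d}$ as a permutation module over the at most $c(d)$ orbits of monomials. Each orbit module $\F[S_n/(S_{n-s}\times\Aut(H_m))]$ is a quotient of $\mathrm{Ind}_{S_{n-s}}^{S_n}\mathbf 1$; equivalently, $\stab(m)$-invariants sit inside $S_{n-s}$-invariants. You make this step only implicitly, but it is immediate. Multiplicities in $\mathrm{Ind}_{S_{n-s}}^{S_n}\mathbf 1$ are counted by standard skew tableaux of shape $\lambda/(n-s)$, giving at most $s!\le(2d)!$.

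\emph{What each buys.} Your argument is self-contained and gives an explicit $N_d$ valid for every $n$, not only sufficiently large $n$. It also works directly with the filtered piece $R_{\le d}$ of the Boolean quotient, which is where $I_G\cap R_{\le d}$ lives; vanishing ideals of finite point sets are not homogeneous. So you need no separate passage from homogeneous pieces of $\F[x_{ij}]$ to the quotient. The paper's route yields the stronger structural fact that the decomposition actually stabilizes, which is more than the proposition needs.

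You also supply the orbital-count justification that $mK_a$ is characterized by its symmetries, which the paper only recalls. One small fix: exclude $a\in\{1,n\}$, since those give the empty and complete graphs. This costs only two divisors.
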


\begin{proof}
The polynomials of degree $d$ in the polynomial ring $\F[x_{ij} : i,j \in [n]]$ are, as a representation of $S_n$, $\mathrm{Sym}^d(V_n \otimes V_n)$, where $V$ is the defining $n$-dimensional representation of $S_n$. By \cite[Prop~3.4.1]{CEF}, the sequence $V_n$ is the sequence of representations of a finitely generated FI-module. Then by \cite[Thm.~3.4.2]{CEF}, the same is true of $V_n \otimes V_n$, and then by \cite[Prop.~3.4.3]{CEF} the same is true of $\mathrm{Sym}^d(V_n \otimes V_n)$ for any fixed value of $d$. By \cite[Thm.~1.13]{CEF}, this sequence of representations is thus ``uniformly representation stable'' in the sense of Church \& Farb \cite{ChurchFarb}. Part of the definition of the latter is that there is a finite list of partitions $\lambda_1,\dotsc,\lambda_k$ and multiplicities $m_1,\dotsc,m_k \in \mathbb{N}$---all of which depend only on $d$, not on $n$---such that  for all sufficiently large $n$, $\mathrm{Sym}^d(V_n \otimes V_n) = \bigoplus_{i=1}^k S_{\lambda_i[n]}^{\oplus m_i}$, where $\lambda[n]$ is the extension of $\lambda$ to a partition of $n$ by increasing the first part of the partition so the whole partition adds up to $n$. In particular, the total multiplicity of all irreps in $\mathrm{Sym}^d(V_n \otimes V_n)$ is $\sum_{i=1}^k m_i$, which is a constant that depends only on $d$, not on $n$. Thus the total number of possibilities for the multiplicities of each irrep in degree $d$ is $\prod_{i=1}^k (m_i+1)$, which is again a constant that depends on $d$ but not on $n$, so the multiplicities in degree $d$ (and hence, \emph{a fortiori}, in degree $\leq d$) can partition all isomorphism classes of graphs into only $f(d)$ sub-classes, where $f$ is a function that depends only on $d$, not on $n$. 

Those multiplicities are an upper bound on the multiplicities in the quotient ring $\F[x_{ij}] / \langle x_{ij}^2 - x_{ij} \rangle$, so we similarly get a bound there that depends only on $d$ and not on $n$. (We remark that the family of ideals $\langle x_{ij}^2 - x_{ij} : i,j \in [n] \rangle$ is a finitely generated FI-ideal, so the above argument also applies directly to the quotiented ring as well.)

For the ``in particular,'' we recall that a disjoint union of $k$ many $\ell$-cliques is characterized by its symmetries (for any $k,\ell \geq 1$). For any factorization $n=k\ell$ we get such a graph, and graphs with different pairs $(k,\ell)$ are non-isomorphic (and not isomorphic to one another's complements either). For example, for $n$ a product of $\mu$ distinct primes, there are $2^{\mu}$ different pairs $(k,\ell)$ with $k\ell = n$, hence $2^\mu$ mutually non-isomorphic $n$-vertex graphs that are characterized by their symmetries. Choosing $\mu > \log_2 f(d)$ we find that these graphs that are characterized by their symmetries cannot all be distinguished from one another by multiplicities up to degree $d$.
\end{proof}

\subsection{Multiplicity obstructions are stronger than occurrence obstructions}
By analogy with occurrence obstructions in GCT, given two $n$-vertex graphs $G,H$, we say that an irrep $V$ of $S_n$ is an \emph{occurrence obstruction} if \[
\mult_V(\F[S_n / \Aut(G)]) = 0 \neq \mult_V(\F[S_n / \Aut(H)])\]
(or vice versa). (The reason for the name is that the occurrence of $V$ in the coordinate ring of the orbit of $H$ obstructs an isomorphism with $G$.)
By the ``occurrence information'' for a group $\Gamma \leq S_n$ we mean the set (not multiset) of which equivalence classes of irreps occur in $\F[S_n / \Gamma]$ with non-zero multiplicity.

\begin{lemma} \label{lem:occurrence-inheritance}
Suppose $\Gamma_1, \Gamma_2 \leq S_k$; for $i=1,2$, let $\Gamma_i^{[n]}$ be the group $S_{n-k} \times \Gamma_i \leq S_n$. Then 
\begin{enumerate}
\item $\Gamma_1,\Gamma_2 \leq S_k$ have the same multiplicity information if and only if $\Gamma_1^{[n]},\Gamma_2^{[n]} \leq S_n$ have the same multiplicity information; and

\item If $\Gamma_1,\Gamma_2 \leq S_k$ have the same occurrence information, then $\Gamma_1^{[n]},\Gamma_2^{[n]} \leq S_n$ have the same occurrence information.
\end{enumerate}
\end{lemma}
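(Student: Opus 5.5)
The approach is to express $\F[S_n/\Gamma_i^{[n]}]$ as an induced representation and compute its decomposition with the Littlewood--Richardson rule. Since $\Gamma_i^{[n]} = S_{n-k}\times\Gamma_i \leq S_{n-k}\times S_k \leq S_n$, transitivity of induction gives
\[
\F[S_n/\Gamma_i^{[n]}] \cong \mathrm{Ind}^{S_n}_{S_{n-k}\times\Gamma_i}\mathbf{1} \cong \mathrm{Ind}^{S_n}_{S_{n-k}\times S_k}\left(\mathbf{1}_{S_{n-k}}\boxtimes \F[S_k/\Gamma_i]\right).
\]
The second isomorphism holds because inducing the trivial character from $S_{n-k}\times\Gamma_i$ up to $S_{n-k}\times S_k$ gives $\mathbf{1}\boxtimes\mathrm{Ind}_{\Gamma_i}^{S_k}\mathbf{1}$. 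Write $\F[S_k/\Gamma_i]=\bigoplus_\mu S_\mu^{\oplus a^{(i)}_\mu}$, summing over partitions $\mu\vdash k$. Inducing $S_{(n-k)}\boxtimes S_\mu$ is the product $h_{n-k}s_\mu$, which by Pieri's rule equals $\sum_\lambda S_\lambda$ over partitions $\lambda\vdash n$ obtained from $\mu$ by adding a horizontal strip of size $n-k$. Therefore
\[
\mult_\lambda \F[S_n/\Gamma_i^{[n]}] = \sum_{\mu} a^{(i)}_\mu\,[\lambda/\mu \text{ is a horizontal strip}].
\]

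For part 2, the occurrence information of $\Gamma_i^{[n]}$ is the set of $\lambda$ such that some $\mu$ with $a^{(i)}_\mu>0$ has $\lambda/\mu$ a horizontal strip. This set depends only on $\{\mu: a^{(i)}_\mu>0\}$, which is the occurrence information of $\Gamma_i$. So equal occurrence data for $\Gamma_1,\Gamma_2$ gives equal occurrence data for $\Gamma_1^{[n]},\Gamma_2^{[n]}$.

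For part 1, the forward direction follows immediately from the displayed formula. The converse is the main obstacle, because the linear map from $(a_\mu)_{\mu\vdash k}$ to $(m_\lambda)_{\lambda\vdash n}$ must be shown to be injective. I would handle it by a triangularity argument. Order the partitions $\mu\vdash k$ by dominance, or lexicographically. For each $\mu$, consider $\lambda=\mu+(n-k)$, meaning $n-k$ is added to the first row. Any $\nu\vdash k$ with $\lambda/\nu$ a horizontal strip must satisfy $\nu_j\ge\lambda_{j+1}=\mu_{j+1}$ for all $j\ge1$. Hence $\sum_{j\ge2}\nu_j\ge\sum_{j\ge2}\mu_j$, so $\nu_1\le\mu_1$. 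If $\nu_1=\mu_1$, then every inequality is an equality and $\nu_j=\mu_{j+1}$ for $j\geq 2$. Comparing this with the constraint $\nu_j\le\lambda_j=\mu_j$ and iterating from the top row should force $\nu=\mu$, though the indices need care here. The cleaner route I would try first is the interlacing condition $\lambda_{j+1}\le\nu_j\le\lambda_j$, which gives a unitriangular system with respect to lexicographic order on the tails $(\mu_2,\mu_3,\dots)$: the coefficient of $a_\mu$ in $m_{\mu+(n-k)}$ is $1$, and every other $\nu$ that contributes is strictly larger in a fixed total order. Inverting the unitriangular system recovers $a^{(i)}$ from the multiplicities of $\Gamma_i^{[n]}$, which proves the converse.

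An alternative for part 1 avoids Pieri's rule altogether and uses Theorem~\ref{thm:cycle-index}. The cycle index of $S_{n-k}\times\Gamma_i$ is the product $cyc_{S_{n-k}}\cdot cyc_{\Gamma_i}$ in disjoint variable blocks, which after identification becomes a product of polynomials in $x_1,\dots,x_n$. The polynomial ring is an integral domain and $cyc_{S_{n-k}}\neq0$, so equality of the products is equivalent to equality of $cyc_{\Gamma_1}$ and $cyc_{\Gamma_2}$. Theorem~\ref{thm:cycle-index} then gives part 1 in both directions. I would present this argument for part 1, since it is short and rigorous, and keep the Pieri computation only for part 2.
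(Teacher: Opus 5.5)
Your proposal is correct. For part 1, the argument you say you would present is the paper's proof: factor the cycle index of $S_{n-k}\times\Gamma_i$ as $cyc_{S_{n-k}}\cdot cyc_{\Gamma_i}$, cancel the nonzero factor, and apply Theorem~\ref{thm:cycle-index}. You also spell out the integral-domain cancellation that the paper calls immediate.

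For part 2 you take a genuinely different route, dual to the paper's. You induce, writing $\F[S_n/\Gamma_i^{[n]}]\cong\mathrm{Ind}^{S_n}_{S_{n-k}\times S_k}(\mathbf{1}\boxtimes\F[S_k/\Gamma_i])$, and then apply Pieri's rule. The paper instead restricts an arbitrary $S_n$-irrep $V$ to $S_{n-k}\times S_k$ and decomposes it as $\bigoplus_j U_j\boxtimes V_j$. It then notes that $V$ has an $(S_{n-k}\times\Gamma_i)$-invariant if and only if some summand has $U_j$ trivial and $V_j$ containing a $\Gamma_i$-invariant. By Frobenius reciprocity these are two views of one fact: Pieri's rule says exactly which $\mathbf{1}\boxtimes S_\mu$ occur in $\mathrm{Res}^{S_n}_{S_{n-k}\times S_k}S_\lambda$.

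The paper's version is softer. It uses only that irreps of a direct product are external tensor products, and it never needs to know which $\lambda$ pair with which $\mu$. Yours uses Pieri's rule but gives an explicit nonnegative formula $\mult_\lambda=\sum_\mu a_\mu[\lambda/\mu\text{ is a horizontal strip}]$. That single formula yields part 2 and the forward direction of part 1, and by triangularity also a second proof of the converse of part 1.

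In that triangularity aside your first inequality is reversed. For $\lambda=\mu+(n-k)$, the bound you need is the other half of the interlacing, $\nu_j\le\lambda_j=\mu_j$ for $j\ge 2$. This forces $\nu_1\ge\mu_1$, with equality only when $\nu=\mu$. For example, with $\mu=(1,1)$ and $n-k=1$, the partition $\nu=(2)$ contributes to $\lambda=(2,1)$. The system is therefore still unitriangular, and you can solve for the $a_\mu$ in order of decreasing $\mu_1$. This does not affect your proof, since you rely on the cycle index for part 1.
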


\begin{proof}
\begin{enumerate}
\item We use the characterization in terms of cycle indices (Theorem~\ref{thm:cycle-index}). The cycle indices of $\Gamma_i$ determine and are determined by those of $\Gamma_i^{[n]}$. More specifically, when viewed as multivariate generating functions, the cycle index of $G \times H$ is the product of the cycle indices of $G$ and $H$. Writing $c(G)$ for the cycle index polynomial, we then we have $c(\Gamma_i^{[n]}) = c(\Gamma_i) c(S_{n-k})$, from which it follows immediately that $c(\Gamma_1) = c(\Gamma_2) \Longleftrightarrow c(\Gamma_1^{[n]}) = c(\Gamma_2^{[n]})$.

\item By Frobenius reciprocity, an irrep of $S_n$ contains a $\Gamma$-invariant if and only if it occurs with non-zero multiplicity in $\F[S_n / \Gamma]$. We will show that each irrep of $S_n$ contains a $\Gamma_1^{[n]}$ invariant if and only if it contains a $\Gamma_2^{[n]}$ invariant. Toward that end, let $V$ be an irrep of $S_n$. Suppose the restriction of $V$ to $\Gamma_1^{[n]}$ contains an invariant. Since $\Gamma_i^{[n]} = S_{n-k} \times \Gamma_i \leq S_{n-k} \times S_k$, we will first examine the restriction of $V$ to $S_{n-k} \times S_k$. Suppose $\mathrm{Res}^{S_n}_{S_{n-k} \times S_k} V = \bigoplus_i U_i \boxtimes V_i$, where each $U_i$ is an irrep of $S_{n-k}$ and each $V_i$ is an irrep of $S_k$ (recall that irreps of a direct product are tensor products of irreps of the factors, see Section~\ref{sec:prelim:rep}). 
Then 
\begin{align*}
\mathrm{Res}^{S_n}_{\Gamma_{i}^{[n]}} V & =  \mathrm{Res}^{S_{n-k} \times S_k}_{\Gamma_i^{[n]}}\left(\mathrm{Res}^{S_n}_{S_{n-k} \times S_k} V\right) \\
 & = \mathrm{Res}^{S_{n-k} \times S_k}_{\Gamma_i^{[n]}}\left(\bigoplus_j U_j \boxtimes V_j\right) \\
 & = \bigoplus_j \mathrm{Res}^{S_{n-k} \times S_k}_{\Gamma_i^{[n]}}\left(U_j \boxtimes V_j\right) \\
 & = \bigoplus_j \mathrm{Res}^{S_{n-k}}_{S_{n-k}}U_j \boxtimes \mathrm{Res}^{S_k}_{\Gamma_i} V_j \\
 & = \bigoplus_j U_j \boxtimes \mathrm{Res}^{S_k}_{\Gamma_i} V_j,
 \end{align*}
where the second to last line follows from the basic fact about restrictions of direct products (see Section~\ref{sec:prelim:rep}).

Thus, the trivial $\Gamma_i^{[n]}$ representation occurs in $\mathrm{Res}^{S_n}_{\Gamma_i^{[n]}} V$ if and only if there is some $j$ such that (1)  $U_j$ is the trivial rep of $S_{n-k}$ and (2) $\mathrm{Res}^{S_k}_{\Gamma_i} V_j$ contains the trivial $\Gamma_i$ representation. 

But now, by our assumption that the occurrence information for $\Gamma_1,\Gamma_2 \leq S_k$ are the same, we have that $\mathrm{Res}^{S_k}_{\Gamma_1} V_j$ contains a trivial representation if and only if $\mathrm{Res}^{S_k}_{\Gamma_2} V_j$ does. Combining with the preceding paragraph, that gives us that $V$ contains a $\Gamma_1^{[n]}$ invariant if and only if it contains a $\Gamma_2^{[n]}$ invariant. \qedhere
\end{enumerate}
\end{proof}

Although the converse of Lemma~\ref{lem:occurrence-inheritance}(2) may well be true, we do not see how to prove it along the above lines. (The issue is that for each $S_n$-irrep $V$ with a $\Gamma_i^{[n]}$-invariant, all we conclude from assuming that $\Gamma_1^{[n]}$ and $\Gamma_2^{[n]}$ have the same occurrence information is that there exists a $j$ such that $V_j$ contains a $\Gamma_i$-invariant, but it is possible that this occurs in different $V_j$ for for $i=1,2$.)

\begin{corollary}
If $(G,H)$ is a pair of $k$-vertex graphs that are separated by multiplicities but not occurrence obstructions, then $G \sqcup K_{n-k}$ and $H \sqcup K_{n-k}$ are $n$-vertex graphs that are separated by multiplicities but not occurrence obstructions, for all $n > 2k$.  
\end{corollary}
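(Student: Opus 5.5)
The plan is to reduce everything to Lemma~\ref{lem:occurrence-inheritance}, applied with $\Gamma_1 = \Aut(G)$ and $\Gamma_2 = \Aut(H)$ as subgroups of $S_k$. The main step is to identify $\Aut(G \sqcup K_{n-k})$ with the group $\Gamma_1^{[n]} = S_{n-k} \times \Aut(G)$ as a subgroup of $S_n$, and similarly for $H$. Then Corollary~\ref{cor:multiplicity} and the definition of occurrence obstruction, both phrased via $\F[S_n/\Aut(\cdot)]$, reduce the claim to the two parts of that lemma.

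First I will prove $\Aut(G \sqcup K_{n-k}) = S_{n-k} \times \Aut(G)$, with $S_{n-k}$ acting on the clique's vertex set and $\Aut(G)$ on $V(G)$. The inclusion $\supseteq$ is clear, since any permutation of the clique and any automorphism of $G$ preserve edges. For $\subseteq$, I need every automorphism to preserve the clique vertex set setwise. This is the step I expect to be the main obstacle, because it is where the hypothesis $n>2k$ must be used. Each clique vertex has degree $n-k-1 \ge k$, while each vertex of $G$ has degree at most $k-1$, so degrees separate the two parts. An automorphism must therefore map the clique to itself and $V(G)$ to itself, and its restriction to $V(G)$ is an automorphism of $G$.

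Given this identification, the multiplicity part follows as stated. $G,H$ are separated by multiplicities, so by Corollary~\ref{cor:multiplicity} (equivalently Theorem~\ref{thm:cycle-index}) the groups $\Aut(G), \Aut(H) \le S_k$ have different multiplicity information. By Lemma~\ref{lem:occurrence-inheritance}(1), in the ``only if'' direction read contrapositively, $\Gamma_1^{[n]}$ and $\Gamma_2^{[n]}$ also have different multiplicity information. So $G\sqcup K_{n-k}$ and $H \sqcup K_{n-k}$ are separated by multiplicities.

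For the occurrence part, ``not separated by occurrence obstructions'' means exactly that $\F[S_k/\Aut(G)]$ and $\F[S_k/\Aut(H)]$ contain the same set of irreps. By Lemma~\ref{lem:occurrence-inheritance}(2), the same then holds for $\F[S_n/\Gamma_1^{[n]}]$ and $\F[S_n/\Gamma_2^{[n]}]$. By the identification above, these are the coordinate rings of the orbits of the enlarged graphs, so no occurrence obstruction separates them. One further check: the paper defines occurrence and multiplicity obstructions via permutation representations of automorphism groups inside $S_n$, so the embedding of $S_{n-k}\times\Aut(G)$ in $S_n$ has to match the $\Gamma^{[n]}$ convention of the lemma. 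Any relabeling only conjugates the subgroup in $S_n$, which changes neither cycle index nor occurrence information.
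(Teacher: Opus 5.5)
Your proposal is correct and follows the paper's proof: identify $\Aut(G \sqcup K_{n-k})$ with $\Aut(G)\times S_{n-k}$, then apply both parts of Lemma~\ref{lem:occurrence-inheritance}. You justify the identification by vertex degrees, where the paper notes that $G$ is too small to contain an $(n-k)$-clique; both arguments use $n>2k$ in the same way. One harmless slip: the direction of Lemma~\ref{lem:occurrence-inheritance}(1) you need is the contrapositive of the ``if'' direction (same information for the $\Gamma_i^{[n]}$ implies same for the $\Gamma_i$), not the ``only if'' direction, but since the lemma is a biconditional nothing is lost.
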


\begin{proof}
If $n > 2k$, then $\Aut(G \sqcup K_{n-k}) = \Aut(G) \times S_{n-k}$, since $G$ is too small to contain a clique of size $n-k$. The result then follows immediately from Lemma~\ref{lem:occurrence-inheritance}.
\end{proof}

Thus finding infinitely many such examples reduces to finding a single example. For the sake of interest, we give a different (albeit related) construction of an infinite family of pairs of graphs that are separated by multiplicities but not occurrence obstructions. Our construction in the next proposition also has the advantage that it does not have the ``gap'' between $k$ and $2k$ that would arise by applying the preceding corollary.

\begin{proposition} \label{prop:mult-vs-occurrence}
For all $n \geq 6$, there are pairs of simple undirected graphs $(G_n, H_n)$ on $n$ vertices such that $G_n$ and $H_n$ are separated by multiplicity obstructions but not by occurrence obstructions.
\end{proposition}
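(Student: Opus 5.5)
The plan is to use Corollary~\ref{cor:multiplicity} for the multiplicity side and a direct fixed-vector computation for the occurrence side. The relevant facts:
\begin{itemize}
\item By Frobenius reciprocity (Theorem~\ref{thm:frobenius}), an irrep $V$ of $S_n$ occurs in $\F[S_n/\Gamma]$ iff $\mathrm{Res}^{S_n}_\Gamma V$ contains a $\Gamma$-invariant vector.
\item So I will choose graphs whose automorphism groups are tiny (trivial or of order $2$), where ``which irreps have a $\Gamma$-fixed vector'' can be computed by hand.
\end{itemize}

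\textbf{Key lemma.} Let $n\geq 5$ and $\Gamma=\langle g\rangle$ with $g$ an involution. Then every irrep of $S_n$ other than the sign representation has a nonzero $g$-fixed vector. The sign representation has one iff $g$ is even.

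For the proof, a $1$-dimensional irrep is trivial or sign, on which $g$ acts by $1$ or by $\mathrm{sgn}(g)$. For $n\geq 5$ every irrep $V$ of dimension $>1$ is faithful, since the only normal subgroups are $1$, $A_n$, $S_n$, and a non-faithful $V$ would factor through $S_n/A_n$. If $g$ had no fixed vector on $V$, then $g$ would act with all eigenvalues $-1$. Since $g$ acts diagonalizably with $g^2=1$, this means $g$ acts as $-I$. Then $\rho(g)$ is central in $\rho(S_n)\cong S_n$, contradicting that $S_n$ has trivial center for $n\geq 3$.

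\textbf{Consequences.} The occurrence information of $\langle g\rangle$ is one of two kinds:
\begin{itemize}
\item If $g$ is even, it is ``all irreps,'' the same as for the trivial group.
\item If $g$ is odd, it is ``all irreps except sign.''
\end{itemize}
In both cases it depends only on the parity of $g$, not on its cycle type. The cycle index of $\langle g\rangle$, however, records the cycle type of $g$, and the cycle index of the trivial group differs from that of any order-$2$ group (for instance because $|\Gamma|$ differs). So by Corollary~\ref{cor:multiplicity} it suffices to exhibit, for each $n\geq 6$, one of the following two kinds of pairs:
\begin{itemize}
\item an asymmetric graph together with a graph whose automorphism group is generated by an even involution; or
\item two graphs whose automorphism groups are generated by odd involutions of different cycle types.
\end{itemize}

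\textbf{Construction, by parity of $\lfloor n/2\rfloor$.} Let $H_n=P_n$ be the path, whose automorphism group is generated by the reversal. The reversal is a product of $\lfloor n/2\rfloor$ disjoint transpositions.
\begin{itemize}
\item If $\lfloor n/2\rfloor$ is even, take $G_n$ to be any asymmetric graph on $n$ vertices; these exist for every $n\geq 6$. Then $G_n$ and $H_n$ have equal occurrence information (all irreps) but distinct cycle indices.
\item If $\lfloor n/2\rfloor$ is odd, then $\lfloor n/2\rfloor\geq 3$ since $n\geq 6$. Take $G_n$ to be the tree formed from a path $v_1\cdots v_{n-1}$ by adding a leaf $w$ adjacent to $v_2$. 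Its only nontrivial automorphism is the transposition $(v_1\,w)$. Indeed, $v_2$ is the unique degree-$3$ vertex, it carries exactly two leaf neighbors $v_1,w$, and the third branch is a path of length $n-3\geq 3$, so it cannot be exchanged with them. Both groups are then generated by odd involutions, so both have occurrence information ``all but sign.'' Their generators have different cycle types ($1$ transposition versus $\geq 3$), so the cycle indices differ.
\end{itemize}

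\textbf{Expected obstacles.} The main step to get right is the key lemma. In particular I need the faithfulness/center argument to be airtight for all $n\geq 5$, including the small-dimensional irreps such as $(n-1,1)$. A second obstacle is certifying the claimed automorphism groups. The path and the modified tree are routine. For the existence of asymmetric graphs on every $n\geq 6$, I would cite the standard fact, or build one explicitly: for example a path $v_1\cdots v_{n-1}$ with a leaf attached to $v_3$, checking the degree-$3$ vertex's branch lengths $2$, $1$, $n-4$ are pairwise distinct, which holds for $n\geq 7$. For $n=6$ I would check a specific asymmetric $6$-vertex graph by hand.
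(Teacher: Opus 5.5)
Your proof is correct, and it takes a genuinely different route from the paper's.

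\textbf{The paper's route.} It uses $G_n = P_4$ plus $n-4$ isolated vertices, and the ``dandelion'' $H_n$, so that $\Aut(G_n)=C_2\times S_{n-4}$ and $\Aut(H_n)=S_{n-4}$. It invokes the branching rule to restrict attention to the twelve shapes with $\lambda_1\geq n-4$. For each shape it checks, via leading terms of straightened Young tableaux, that the extra even involution $(n-3,n)(n-2,n-1)$ fixes a nonzero $S_{n-4}$-invariant. The cases $n=6,7$ are settled by a Sage computation.

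\textbf{Your route.} You choose graphs with automorphism groups of order at most $2$. You then prove a structural lemma: irreps of dimension $>1$ are faithful for $n\geq 5$, and $Z(S_n)=1$. Together these show that the occurrence information of $\langle g\rangle$ depends only on the parity of $g$. The lemma, the case split on the parity of $\lfloor n/2\rfloor$, and the automorphism groups of $P_n$ and of your two spiders all check out. The argument is uniform in $n\geq 6$, with no tableau calculations and no machine verification.

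\textbf{A small simplification.} Your first case only arises for $n\geq 8$, since $\lfloor 6/2\rfloor=\lfloor 7/2\rfloor=3$ is odd. So the spider with legs $2,1,n-4$ already covers every instance, and the separate $6$-vertex asymmetric graph you anticipated is unnecessary.

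\textbf{What each approach buys.} Yours gives brevity and a general principle: occurrence information is nearly blind for graphs with small automorphism groups. The paper's gives an example where the automorphism groups are large and the occurrence sets are genuinely restrictive (only $\lambda_1\geq n-4$). That is closer to the GCT philosophy of highly symmetric objects, and it fits the inheritance framework of Lemma~\ref{lem:occurrence-inheritance}. Your examples, by contrast, have occurrence sets that are ``everything'' or ``everything but sign,'' so they exhibit the phenomenon in a somewhat degenerate regime.

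\textbf{Your idea also shortens the paper's own check.} The non-$S_{n-4}$ points carry a commuting $S_4$, and $\sigma=(n-3,n)(n-2,n-1)$ fixes a nonzero vector in every irrep of that $S_4$. It lies in the kernel of the $(2,2)$ irrep. On the $3$-dimensional irreps, which are faithful, it cannot act as $-I$. On the one-dimensional irreps, $\sigma$ is even. Hence $\sigma$ fixes a nonzero vector in any nonzero space of $S_{n-4}$-invariants, which replaces the twelve tableau computations.
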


For simple undirected graphs, the value of $6$ is optimal: by explicit calculation, we found that there are no pairs of simple undirected graphs on $n \leq 5$ vertices that are separated by multiplicities but not separated by occurrences. 

\begin{proof}
We exhibit one infinite family of pairs of graphs. Let $G_n = P_4 + [n-4]$ be the disjoint union of a 4-vertex path (length 3) and $n-4$ isolated vertices. We define $H_n$ to be a large ``dandelion'': a path on 4 vertices, with one endpoint connected to $n-4$ vertices of degree 1 (which we refer to as the ``seeds'' of the dandelion). One may think of $G_n$ as being $H_n$ after the seeds of the dandelion were blown off.

Then we have $\Aut(G_n) = C_2 \times S_{n-4}$, where $C_2$ acts on the 4-vertex path by reversing it, which has cycle type $(\bullet \bullet)(\bullet \bullet)$. For concreteness, we suppose the path to be on vertices $n-3,n-2,n-1,n$, in order. 

Similarly, for $n \geq 6$, we have $\Aut(H_n) = S_{n-4}$,  permuting the seeds of the dandelion arbitrarily. (At $n=4$ we have $G_4 = H_4 = P_4$ and at $n=5$, $H_n = P_5$.) For ease of comparison with $\Aut(G_n)$, let us suppose the 4 non-seed vertices are $n-3,n-2,n-1,n$, so $\Aut(H_n)$ is the copy of $S_{n-4}$ acting on $[n-4]$.

Since the cycle indices of these groups are distinct (which follows from the groups having different sizes), by Theorem~\ref{thm:cycle-index} we have that $G_n$ and $H_n$ are distinguished by a multiplicity obstruction.

It remains to show that they are not distinguished by occurrence obstructions. By standard representation theory (which we have recalled in the proof of Theorem~\ref{thm:cycle-index}) we have that an irrep $V$ of $S_n$ occurs in the permutation representation $S_n / \Gamma$ if and only if $V$, when restricted to $\Gamma$, contains a non-zero $\Gamma$-invariant vector.  To show that the occurrence information for $\Aut(G_n)$ and $\Aut(H_n)$ are the same, it is necessary and sufficient to show that for every irrep $V_{\lambda}$ of $S_n$, $V_\lambda$ contains $\Aut(G_n)$-invariants if and only if it contains $\Aut(H_n)$ invariants. Since in our case $\Aut(H_n) \leq \Aut(G_n)$, for each $S_n$-irrep that contains an $S_{n-4}$ invariant, we merely need to check that it also contains, among the $S_{n-4}$-invariants, an invariant for the additional copy of $C_2$ in $\Aut(G_n)$, generated by $\sigma := (n-3,n)(n-2,n-1)$.

Now, if $v \in V_{\lambda}$ is any vector, then $v + \sigma(v)$ is $C_2$-invariant, and is non-zero if and only if $\sigma(v) \neq -v$. Furthermore, since the action of $C_2=\langle \sigma \rangle$ commutes with the action of $S_{n-4}$, if $v$ is $S_{n-4}$-invariant, then $v + \sigma(v)$ is also $S_{n-4}$-invariant (hence, $S_{n-4} \times C_2$-invariant). Thus, it suffices to find an $S_{n-4}$-invariant vector $v$ in an $S_n$ irrep, and then show that $\sigma(v) \neq -v$. 

From the branching rule for restriction from $S_n$ down to $S_{n-1}$ \cite[\S 2.8]{Sagan}, and then iterated down to $S_{n-4}$, an $S_n$ irrep contains an $S_{n-4}$ invariant if and only if the corresponding partition $\lambda$ has $\lambda_1 \geq n-4$. Leaving us exactly 12 cases to check. 

In each of these cases, we first construct a vector $v$ that is $S_{n-4}$-invariant to which we can apply $\sigma$. Then we show by explicit computation that $\sigma(v) \neq -v$. To simplify this computation, it turns out we don't need to compute the entire invariant vector, just its ``leading term,'' in the following sense.

Here we follow the presentation by Peel \cite[Theorem~5.1]{peel} (covered nicely in Sagan's textbook \cite[Theorem~2.8.3]{Sagan}). Namely, when we restrict an $S_n$-irrep $V$ indexed by partition $\lambda$ down to $S_{n-1}$, we define the following chain of subspaces: suppose that among standard Young tableaux, $n$ can appear in $k$ distinct places (these are sometimes called the inner corners of the shape $\lambda$). Let $V_i$ be the subspace of $V$ spanned by the Young tableaux in which $n$ occurs among the top-most $i$ places among those places it can occur. Then each $V_i$ is an $S_{n-1}$-submodule, and $V_i / V_{i-1}$ is the $S_{n-1}$-irrep indexed by the shape which is $\lambda$ minus the position $n$ was in in $V_i$. Now, since $V_i$ is an $S_{n-1}$-submodule, note that when we apply a permutation $\pi \in S_{n-1}$ to some standard Young tableaux in $V_i$, we get back a linear combination of tableaux in $V_i \backslash V_{i-1}$---which we refer to as the ``leading terms''---and those in $V_{i-1}$---which we refer to as ``lower-order terms.''

We may then iterate this to get down from $S_n$ to $S_{n-4}$. From the preceding paragraph we see that in order to reach an $S_{n-4}$-invariant, any boxes below the first row must be filled with a subset of $\{n-3,n-2,n-1,n\}$, with the remainder of that set occurring on the right-hand side of the first row. Since our $C_2$ action is $(n-3,n)(n-2,n-1)$, we only focus on the position of these four. Note that if $\sigma(v)=-v$, then in particular the leading term(s) of $\sigma(v)$ must be the negative of the leading term(s) of $v$. So rather than compute $v$ explicitly and the action of $C_2$ on it, we focus on its leading terms. We write each calculation as the presentation of the selected $v$ followed by the $C_2$ action $\sigma$ applied to $v$.

To determine the action of $S_n$ on Young tableaux of a given shape, we use the standard ``straightening rule'' for the action of $S_n$ on Young tableaux; we refer the reader to \cite[Section~7.4]{Fulton} for details.

For this calculation, we assume $n \geq 8$; we discuss this further, as well as the cases $n=6,7$ after the calculations:

\begin{enumerate}
\item $\lambda = (n)$. This is the trivial representation of $S_n$, so restriction to any subgroup is still trivial, and thus contains a non-zero invariant.

\ytableausetup
{mathmode, boxframe=normal, boxsize=2em, notabloids} 

\item $\lambda = (n-1,1)$. 
\[
v=\begin{ytableau}
    \none & \none & \scriptstyle n-3 & \scriptstyle n-2 & \scriptstyle n-1 \\
    n
\end{ytableau}\;;\quad\sigma(v)=\begin{ytableau}
   \none & \none & n & \scriptstyle n-1 & \scriptstyle n-2 \\
   \scriptstyle n-3 
\end{ytableau}=\begin{ytableau}
   \none & \none & \scriptstyle n-2 & \scriptstyle n-1 & n \\
   \scriptstyle n-3 
\end{ytableau}\ne -v
\]

\item $\lambda = (n-2,2)$.
\[
v=\begin{ytableau}
    \none & \none & \none & \scriptstyle n-3 & \scriptstyle n-2 \\
    \scriptstyle n-1 & n
\end{ytableau}\;;\quad\sigma(v)=\begin{ytableau}
    \none & \none & \none & n & \scriptstyle n-1 \\
    \scriptstyle n-2 & \scriptstyle n-3
\end{ytableau}=\begin{ytableau}
   \none & \none & \none & \scriptstyle n-1 & n \\
   \scriptstyle n-3 & \scriptstyle n-2
\end{ytableau}\ne -v
\]

\item $\lambda = (n-2,1,1)$.
\[
v=\begin{ytableau}
    \none & \none & \scriptstyle n-3 & \scriptstyle n-2 \\
    \scriptstyle n-1 \\
    n
\end{ytableau}\;;\quad\sigma(v)=\begin{ytableau}
    \none & \none & n & \scriptstyle n-1 \\
    \scriptstyle n-2 \\
    \scriptstyle n-3
\end{ytableau}=-\begin{ytableau}
   \none & \none & \scriptstyle n-1 & n \\
   \scriptstyle n-3 \\
   \scriptstyle n-2
\end{ytableau}\ne -v
\]

\item $\lambda = (n-3,3)$.
\[
v=\begin{ytableau}
    \none & \none & \none & \none & \scriptstyle n-3 \\
    \scriptstyle n-2 & \scriptstyle n-1 & n
\end{ytableau}\;;\quad\sigma(v)=\begin{ytableau}
    \none & \none & \none & \none & n \\
    \scriptstyle n-1 & \scriptstyle n-2 & \scriptstyle n-3
\end{ytableau}=\begin{ytableau}
   \none & \none & \none & \none & n \\
   \scriptstyle n-3 & \scriptstyle n-2 & \scriptstyle n-1
\end{ytableau}\ne -v
\]

\item $\lambda = (n-3,2,1)$.
\[
v=\begin{ytableau}
    \none & \none & \none & \scriptstyle n-3 \\
    \scriptstyle n-2 & \scriptstyle n-1 \\
    n
\end{ytableau}\;;\quad\sigma(v)=\begin{ytableau}
    \none & \none & \none & n \\
    \scriptstyle n-1 & \scriptstyle n-2 \\
    \scriptstyle n-3
\end{ytableau}=-\begin{ytableau}
   \none & \none & \none & n \\
   \scriptstyle n-3 & \scriptstyle n-2 \\
   \scriptstyle n-1
\end{ytableau}\ne -v
\]

\item $\lambda = (n-3,1,1,1)$.
\[
v=\begin{ytableau}
    \none & \none & \scriptstyle n-3 \\
    \scriptstyle n-2 \\
    \scriptstyle n-1 \\ 
    n
\end{ytableau}\;;\quad\sigma(v)=\begin{ytableau}
    \none & \none & n \\
    \scriptstyle n-1 \\
    \scriptstyle n-2 \\ 
    \scriptstyle n-3
\end{ytableau}=-\begin{ytableau}
   \none & \none & n \\
   \scriptstyle n-3 \\
   \scriptstyle n-2 \\ 
   \scriptstyle n-1
\end{ytableau}\ne -v
\]

\item $\lambda = (n-4,4)$.
\[
v=\begin{ytableau}
    \scriptstyle n-3 & \scriptstyle n-2 & \scriptstyle n-1 & n
\end{ytableau}\;;\quad\sigma(v)=\begin{ytableau}
    n & \scriptstyle n-1 & \scriptstyle n-2 & \scriptstyle n-3
\end{ytableau}=\begin{ytableau}
   \scriptstyle n-3 & \scriptstyle n-2 & \scriptstyle n-1 & n
\end{ytableau}\ne -v
\]

\item $\lambda = (n-4,3,1)$. For this partition, we do not start with the standard filling, since the standard filling is sent to minus itself (plus lower-order terms, but we want to avoid having to calculate lower-order terms).
\[
v=\begin{ytableau}
    \scriptstyle n-3 & \scriptstyle n-2 & n \\
    \scriptstyle n-1
\end{ytableau}\;;\quad\sigma(v)=\begin{ytableau}
    n & \scriptstyle n-1 & \scriptstyle n-3 \\
    \scriptstyle n-2
\end{ytableau}=-\begin{ytableau}
   \scriptstyle n-3 & \scriptstyle n-2 & \scriptstyle n-1 \\
   n
\end{ytableau}+\begin{ytableau}
   \scriptstyle n-3 & \scriptstyle n-1 & n \\
   \scriptstyle n-2
\end{ytableau}\ne -v
\]

\item $\lambda = (n-4,2,2)$.
\[
v=\begin{ytableau}
    \scriptstyle n-3 & \scriptstyle n-2 \\
    \scriptstyle n-1 & n
\end{ytableau}\;;\quad\sigma(v)=\begin{ytableau}
    n & \scriptstyle n-1 \\
    \scriptstyle n-2 & \scriptstyle n
\end{ytableau}=\begin{ytableau}
   \scriptstyle n-3 & \scriptstyle n-2 \\
   \scriptstyle n-1 & n
\end{ytableau}\ne -v
\]

\item $\lambda = (n-4,2,1,1)$. 
\[
v=\begin{ytableau}
    \scriptstyle n-3 & \scriptstyle n-2 \\
    \scriptstyle n-1 \\
    n
\end{ytableau}\;;\quad\sigma(v)=\begin{ytableau}
    n & \scriptstyle n-1 \\
    \scriptstyle n-2 \\
    \scriptstyle n-3
\end{ytableau}=-\begin{ytableau}
   \scriptstyle n-3 & \scriptstyle n-1 \\
   \scriptstyle n-2 \\
   n
\end{ytableau}\ne -v
\]

\item $\lambda = (n-4,1,1,1,1)$. 
\[
v=\begin{ytableau}
    \scriptstyle n-3 \\ 
    \scriptstyle n-2 \\ 
    \scriptstyle n-1 \\
    n
\end{ytableau}\;;\quad\sigma(v)=\begin{ytableau}
    n \\ 
    \scriptstyle n-1 \\ 
    \scriptstyle n-2 \\
    \scriptstyle n-3
\end{ytableau}=\begin{ytableau}
   \scriptstyle n-3 \\
   \scriptstyle n-2 \\ 
   \scriptstyle n-1 \\
   n
\end{ytableau}\ne -v
\]

\end{enumerate}

Note that, in order for the above calculation to not run into any ``unexpected adjacencies'' between the boxes, even diagonal adjacencies, it suffices for $n \geq 8$ (the worst case is the partition $(n-3,3)$). For $n=6,7$, we use Sage code to explicitly check that there are multiplicity obstructions but not occurrence obstructions; we include the code in Appendix~\ref{app:code}. This completes the proof.
\end{proof}

\section{Separating invariants and the Reconstruction Conjectures} \label{sec:invariant}
In this section we give explicit invariant polynomials that identify a graph, and we show that improving the support-degree (resp. degree) of the invariants we construct is equivalent to the (Edge) Reconstruction Conjecture.

By a general theorem in invariant theory (e.\,g., \cite[Thm.~2.3.6]{DerksenKemper}), since the orbits of the $S_n$ action on graphs are finite sets (or more generally Zariski-closed, but in our case we have the stronger property of actually being finite), for any two non-isomorphic graphs $G,H$ there exists an invariant polynomial whose value on $G$ (and its orbit) and is different than its value on $H$ (and its orbit). In fact, there exists an invariant polynomial whose value on the orbit of $G$ differs from its value on  \emph{all} graphs non-isomorphic to $G$; we say such a polynomial \emph{identifies} (the isomorphism class of) $G$. However, as we are interested in complexity-theoretic aspects of these invariant polynomials, we would like to see them explicitly. We furnish two such explicit polynomials in the next two results. We doubt these are genuinely new, but being unable to find them in the literature we include their proofs here.

We can also get a similar polynomial of slightly higher degree by using ``generalized monomials'' that correspond to induced (rather than general) subgraphs. The degree is higher by a factor of at least 2, but the proof is simpler, so we start there:

\begin{proposition} \label{prop:invariant}
    For any $n$-vertex graph $G$, there exists an invariant polynomial that identifies $G$, of degree $N$ (where $N$ is the number of edges in the complete graph on $V(G)$) and support-degree $n$.
\end{proposition}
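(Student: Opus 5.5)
The plan is to use the ``generalized monomial'' for $G$ itself, namely the indicator polynomial of the adjacency matrix $A_G$ on $\{0,1\}$-matrices, and then symmetrize it. Concretely, define
\[
m_G(X) := \prod_{(i,j) \in E(G)} x_{ij} \prod_{(i,j) \notin E(G),\, i \neq j} (1 - x_{ij}),
\]
where the products range over all $N$ potential edges (ordered or unordered pairs, according to the graph variant in Remark~\ref{rmk:symmetric}). Every one of the $N$ variables appears exactly once, so the degree is $N$. Every monomial in the expansion lives on a subset of the pairs of $[n]$, so the support-degree is at most $n$; the top monomial $\prod x_{ij}$ over all pairs touches every vertex, so it is exactly $n$.

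On a $\{0,1\}$-matrix $A$ we have $m_G(A) = 1$ if $A = A_G$ and $m_G(A) = 0$ otherwise. This is just a product of factors each equal to $1$ exactly when the entry agrees with $A_G$.

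Next, I symmetrize by setting
\[
f_G := \sum_{\sigma \in S_n} \sigma \cdot m_G .
\]
The orbit sum is an invariant polynomial, and it can only lower the support-degree and degree bounds, never raise them. We have $(\sigma \cdot m_G) = m_{\sigma G}$, the indicator of the relabeled adjacency matrix. Hence
\[
f_G(A) = \#\{\sigma \in S_n : \sigma \cdot A_G = A\},
\]
which equals $|\Aut(G)|$ when $A$ is the adjacency matrix of a graph isomorphic to $G$, and $0$ otherwise. Equivalently, we could sum over the distinct labeled copies of $G$ to get the value $1$; either normalization works since the field has characteristic zero.

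So $f_G$ takes the nonzero value $|\Aut(G)|$ on the orbit of $G$ and the value $0$ on every non-isomorphic graph, which means it identifies $G$. If one prefers an invariant that vanishes exactly on $G$'s orbit, use $|\Aut(G)| - f_G$ instead.

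The only step needing care is verifying that the degree and support-degree do not drop after summing the orbit; otherwise the statement would give an unnecessarily weak bound. The top-degree homogeneous component of each $\sigma \cdot m_G$ is $\pm\prod_{\text{all pairs}} x_{ij}$, with sign $(-1)^{N - |E(G)|}$ independent of $\sigma$. These components therefore add up without cancellation to $n!\,(-1)^{N-|E(G)|}\prod x_{ij}$, which is nonzero in characteristic zero. This confirms degree exactly $N$ and support-degree exactly $n$, and I expect no real obstacle beyond this bookkeeping.
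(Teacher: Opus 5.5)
Your proposal is correct and is essentially the paper's own argument. The paper also takes the orbit sum of the indicator product $\prod_{(i,j)\in E(G)} x_{ij}\prod_{(i,j)\notin E(G)}(1-x_{ij})$, differing only in normalizing by $1/|\Aut(G)|$ so that the value on the orbit of $G$ is exactly $1$; your extra check that the top-degree components $\pm\prod x_{ij}$ add without cancellation (giving degree exactly $N$ and support-degree exactly $n$) is a detail the paper leaves implicit.
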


(We phrase it as above so it is clear the same proof works in the undirected case, the directed case, the bipartite case, etc., \emph{mutatis mutandis}.)

\begin{proof}
Define 
\[
I_G(X) := \frac{1}{|\Aut(G)|}\sum_{\sigma \in S_n} \sigma \left[\left(\prod_{(i,j)\in E(G)} x_{ij}\right)\left(\prod_{(i,j) \notin E(G)} (1-x_{ij})\right)\right].
\]
(``$I$'' is for ``induced subgraph.'') Each term in the above sum is a product of variables and one-minus-variables, with each variable occurring in exactly one factor of the product; it is readily apparent that such a term is $1$ on exactly one graph and $0$ on all other graphs (not just up to isomorphism). And since $I_G$ is a sum over all isomorphic copies, we get $I_G(H)=1$ if and only if $H \cong G$, and otherwise $I_G(H)=0$.
\end{proof}

Using the $S_G$ polynomials  as a building block instead of $I_G$, we can improve the degree by a factor of 2, with a slightly more complicated proof. 

\begin{proposition} \label{prop:invariant-smaller}
    For any graph $G$, there exists:
\begin{enumerate}
\item an invariant polynomial that identifies $G$  and has degree at most $\max\{2, \min\{|E(G)|, |E(G^c)|\}\}$

\item an invariant polynomial that identifies $G$ and, using $V_0(G)$ to denote the set of non-isolated vertices of $G$, has support-degree at most $\max\{\min\{|V(G)|,4\}, \min\{|V_0(G)|, |V_0(G^c)|\}\}$.
\end{enumerate}
\end{proposition}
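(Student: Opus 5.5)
The plan is to identify $G$ using the invariants $S_K$ of Observation~\ref{obs:invariants}, chosen with generic coefficients. The one combinatorial fact needed is this: if $S_G(H)\neq 0$, then $H$ contains a labeled copy of $G$ as a (not necessarily induced) subgraph. If in addition $|E(H)|=|E(G)|$, that copy uses every edge of $H$, so $H\cong G$. Note that $|E(H)|=S_{K_2}(H)$, where $K_2$ is the one-edge graph, so it is one of the $S_K$ invariants.

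For part 1, first suppose $m:=|E(G)|\le |E(G^c)|$ and $m\ge 1$. Let $\mathcal{K}$ be the set of isomorphism types of graphs $K$ on the vertex set $[n]$ with $1\le |E(K)|\le m$. This includes $K_2$ and $G$ itself. Set
\[
f(X) := \sum_{K\in\mathcal{K}} c_K\, S_K(X),
\]
which is an invariant of degree at most $m$. For each of the finitely many graphs $H\not\cong G$ on $[n]$, the condition $f(H)=f(G)$ is the linear equation $\sum_K c_K\,(S_K(H)-S_K(G))=0$ in the unknowns $c_K$. This equation is not the zero equation, for the following reason. If it were, then $S_{K_2}(H)=S_{K_2}(G)$ and $S_G(H)=S_G(G)=1$, and the fact above would give $H\cong G$. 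So each bad set of coefficients is a proper hyperplane. Since $\F$ has characteristic zero, it is infinite, and we can choose $(c_K)$, even with integer entries, avoiding finitely many hyperplanes. Then $f$ identifies $G$.

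If instead $|E(G^c)|<|E(G)|$, apply the same construction to $G^c$ and substitute $x_{ij}\mapsto 1-x_{ij}$. This is an $S_n$-equivariant affine change of variables. It preserves invariance and does not increase degree, and it converts a polynomial identifying $G^c$ into one identifying $G$, since $A_{G^c}=J-A_G$ (off the diagonal). The degenerate case $\min\{|E(G)|,|E(G^c)|\}=0$ means $G$ is empty or complete, which leaves $m=0$ with no content in the main argument. There I would simply use $\sum_{i\neq j} x_{ij}$ (degree $1$), which identifies $G$ by edge count. This, and any small-case quirks of the convention (e.g.\ undirected versus directed), is where the ``$\max\{2,\cdot\}$'' absorbs the slack.

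Part 2 follows the same argument with support-degree. Let $v:=|V_0(G)|$, assume $v\le |V_0(G^c)|$, and take $\mathcal{K}$ to be all graphs $K$ with $2\le |V_0(K)|\le v$. This again contains $K_2$ (as $v\ge 2$ whenever $G$ is nonempty) and $G$, and $\sdeg(S_K)=|V_0(K)|$. The generic-hyperplane argument is verbatim. The complement substitution $x_{ij}\mapsto 1-x_{ij}$ sends a monomial of support-degree $s$ to a sum of monomials each of support-degree at most $s$, so it preserves the bound.

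The main thing I expect to need care is the degenerate and small cases in the complement step. When $G$ or $G^c$ has very few non-isolated vertices (e.g.\ $V_0(G^c)$ empty, so $G$ is complete, or tiny graphs where $K_2\notin\mathcal{K}$ would otherwise fail), I would handle these by direct inspection. For example, in those cases I would fall back on Proposition~\ref{prop:invariant}, whose support-degree is $n=|V(G)|\le 4$ when $n$ is small, together with the edge-count invariant. This is the source of the term $\min\{|V(G)|,4\}$ in the bound.
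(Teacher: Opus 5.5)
Your proof is correct, but it takes a different route from the paper's.

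\textbf{The paper's route.} After possibly complementing, the paper writes down one explicit polynomial,
\[
f(X) = -1 + S_G(X) + 2\bigl(|E(G)| - \textstyle\sum_{i,j} x_{ij}\bigr)^2 .
\]
A short case analysis (whether $H$ contains a copy of $G$, and whether $|E(H)|=|E(G)|$) shows it vanishes exactly on the orbit of $G$. The squared edge-count term has degree $2$, and its monomials $x_{ij}x_{k\ell}$ have support-degree up to $4$. That term is exactly where the $\max\{2,\cdot\}$ and $\min\{|V(G)|,4\}$ in the statement come from.

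\textbf{Your route.} You stay linear in the orbit sums $S_K$ and separate $G$ by avoiding finitely many hyperplanes. Your own nondegeneracy argument shows that only $S_{K_2}$ and $S_G$ are needed. The choice can also be made explicit: take $M\,S_{K_2} + S_G$ with $M$ larger than every possible value of $S_G$, say $M > \binom{N}{|E(G)|}$ with $N$ the number of possible edges.

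\textbf{What each buys.} Because nothing is squared, you actually get sharper bounds whenever $G$ is neither empty nor complete: degree at most $\min\{|E(G)|,|E(G^c)|\}$ and support-degree at most $\min\{|V_0(G)|,|V_0(G^c)|\}$. A nonempty loopless graph has $|V_0|\ge 2$, so $K_2\in\mathcal{K}$ automatically. The only degenerate case is therefore $G$ empty or complete, where the edge count $\sum x_{ij}$ (degree $1$, support-degree $2$) suffices. So your attribution of the $2$ and $4$ to small-case or convention quirks is off. They are artifacts of the paper's squaring, and no appeal to Proposition~\ref{prop:invariant} is needed. In exchange, the paper obtains a concrete defining equation for the orbit with tiny coefficients and no genericity argument. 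Yours gives an invariant whose value on the orbit of $G$ differs from its values elsewhere; subtract $f(G)$ to get one vanishing exactly on the orbit. You get slightly better degree at the cost of a large or generically chosen coefficient.
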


\begin{proof}
    1. Suppose without loss of generality that the number of edges of $G$ is less than the number of non-edges (if it's not, apply the ring automorphism that effectively takes the complement of graphs: $x_{ij} \leftrightarrow 1-x_{ij}$ for all $(i,j)$). Then we claim the following polynomial satisfies the conclusion of the proposition:
    \[
    f(X) := -1 + S_G(X) + 2\left(|E(G)| - \sum_{i,j \in [n]} x_{ij}\right)^2.
    \]
    (Note that here $|E(G)|$ is an integer constant, independent of the input variables to $f$.)    We have already seen that $S_G$ is invariant; the other summand is the square of a polynomial that is readliy seen to be invariant, being the sum of a constant and \emph{all} the variables, with the same coefficient.

    Now, if $H \cong G$, then we have
    \begin{align*}
    f(H) & = -1 + S_G(H) + 2(|E(G)| - \sum x_{ij}(H))^2 \\
     & = -1 + 1 + 2(|E(G)| - |E(H)|)^2 = 0.
     \end{align*}

     On the other hand, suppose $H$ is any $n$-vertex graph that is not isomorphic to $G$. If $H$ contains an isomorphic copy of $G$ as a subgraph, then since $H \not\cong G$, $G$ must be a proper subgraph of $H$, so $|E(H)| > |E(G)|$. Thus the sum inside the second term is a nonzero integer, so its square is at least 1. We then have:
     \begin{align*}
    f(H) & = -1 + S_G(H) + 2(|E(G)| - \sum x_{ij}(H))^2 \\
     & = -1 + 1 + 2(|E(G)| - |E(H)|)^2 \\
     & \geq  0 +  2 \cdot 1 \geq 2,
     \end{align*}
    so in particular $f(H) \neq 0$.

    On the other hand, if $|E(H)| = |E(G)|$, then since $H \not\cong G$, $H$ cannot contain $G$ as a subgraph, and in that case we have:
     \begin{align*}
    f(H) & = -1 + S_G(H) + 2(|E(G)| - \sum x_{ij}(H))^2 \\
     & = -1 + 0 + 2(|E(G)| - |E(H)|)^2 \\
     & = -1 + 0 + 0 = -1 \neq 0.
     \end{align*}
    Thus, for any $H \not\cong G$, we have $f(H) \neq 0$.

    2. The same construction yields the statement on support-degree, but now we choose $G$ or its complement based on which one has fewer non-isolated vertices. The 4 comes from the fact that $(|E(G)| - \sum x_{ij})^2$ includes terms of the form $x_{ij}x_{k\ell}$.
    
\end{proof}

Whether or not the degree bounds in the preceding results can be improved is a question closely related to (standard generalizations of) various Reconstruction Conjectures. We next recall these conjectures and then make this connection explicit.

\newcommand{\multiset}[1]{\{\!\{#1\}\!\}}
The $k$-vertex deck of a graph $G$ is the multiset of isomorphism types gotten by removing $k$ vertices from $G$: $VD_k(G) := \multiset{[G \backslash U] : U \in \binom{V(G)}{k}}$, where we use $[G]$ for the isomorphism type; similarly, the $k$-edge deck of a graph is $ED_k(G) := \multiset{[G \backslash E] : E \in \binom{E(G)}{k}}$. Given some isomorphism-invariant function $f$ on the set of graphs, a graph $G$ is said to be \emph{reconstructible from $f$} if $f(G)=f(H)$ implies $G \cong H$. In this language, we have various reconstruction conjectures:

\begin{conjecture}[{Reconstruction Conjecture \cite{kellyThesis,kelly,ulam}}]
Every finite simple undirected graph on at least 3 vertices is reconstructible from its $1$-vertex deck.
\end{conjecture}

\begin{conjecture}[{Edge-Reconstruction Conjecture \cite{harary}}]
Every finite simple undirected graph with at least 4 edges is reconstructible from its $1$-edge deck.
\end{conjecture}

As recently as 2020, Farhadian proved that 
a random graph is reconstructible from the multiset of its $3\log n$-vertex subgraphs with high probability \cite{farhadian} (in the above language, reconstructible from its $(n-3\log n)$-deck). On the other hand, for any $0 < \varepsilon < 1$, there exist arbitrarily large pairs of non-isomorphic graphs that have identical multisets of $\lfloor \varepsilon n \rfloor$-vertex subgraphs \cite{nydl}.
A directed version of the Reconstruction Conjecture turned out to be false \cite{stockmeyer}, but there is a so-called New Digraph Reconstruction Conjecture that includes for each vertex $v$ the triple $([G\backslash v], \deg_{in}(v), \deg_{out}(v))$ \cite{ramachandran}. We are not aware of work generalizing the New Digraph Reconstruction Conjecture to removing more than 1 vertex.
For complexity-theoretic aspects of these conjectures and further references, see \cite{HHRT}.

\begin{theorem}[{Relating invariants and edge reconstruction, cf. \cite[Thm.~4.6]{forman}}] \label{thm:edge-recon} 
Let $k$ be a positive integer, and let $G$ be a finite simple undirected graph with $k < |E(G)|$. Then $G$ is $k$-edge reconstructible if and only if it is identified by invariant polynomials of degree at most $|E(G)|-k$.
\end{theorem}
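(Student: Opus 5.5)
The plan is to translate both sides of the equivalence into statements about subgraph counts, using Observation~\ref{obs:invariants} as the bridge. Set $m := |E(G)|$ and $d := m-k$. By Observation~\ref{obs:invariants}, the invariant polynomials of degree at most $d$ are exactly the linear combinations of the $S_K$ with $|E(K)| \le d$. Here $K$ ranges over $n$-vertex graphs, i.e.\ edge configurations with isolated vertices padded. So ``$G$ is identified by invariants of degree $\le d$'' is equivalent to the following statement $(\ast)$: for every $n$-vertex $H \not\cong G$ there is some $K$ with $|E(K)|\le d$ and $S_K(G)\ne S_K(H)$.

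The forward direction of $(\ast)$ needs one small step. A single identifying invariant $f$ gives the statement immediately. Conversely, if $(\ast)$ holds, I would build one polynomial in the style of Proposition~\ref{prop:invariant-smaller}: take a product over the finitely many $H\not\cong G$ of $(g_H - g_H(G))$. That product has too high a degree, so instead I would use the reading that ``identified by invariants of degree $\le d$'' means the collection of such invariants separates $G$ from every other graph. I expect the paper intends this reading, and I would state it explicitly.

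Next I relate the $k$-edge deck to the counts $S_K$ with $|E(K)|\le d$. Each card $G\setminus E'$ has exactly $d$ edges. Given the deck, $S_K(G)$ for $|E(K)|\le d$ is recovered by Kelly's lemma. Each copy of $K$ in $G$ lies in exactly $\binom{m-|E(K)|}{d-|E(K)|}$ of the $d$-edge spanning subgraphs, so
\[
S_K(G)=\binom{m-|E(K)|}{d-|E(K)|}^{-1}\sum_{C\in ED_k(G)} S_K(C).
\]
Conversely, the counts $S_K(G)$ with $|E(K)|=d$ are precisely the multiplicities in $ED_k(G)$: the number of $d$-edge subsets of $E(G)$ forming a graph isomorphic to $K$ is $S_K(G)$. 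So the $k$-edge deck and the vector $(S_K(G))_{|E(K)|\le d}$ determine each other. One caveat: a graph $H$ with $|E(H)|\ne m$ also has a $k$-edge deck. Its total number of cards is $\binom{|E(H)|}{k}$, so by the identity above the degree-$\le d$ counts still determine the deck of $H$, again via $|E(K)|=d$.

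To finish, suppose $G$ is $k$-edge reconstructible and $H\not\cong G$. Then $ED_k(H)\ne ED_k(G)$, so some $S_K$ with $|E(K)|=d$ differs between $G$ and $H$, which gives an identifying invariant of degree $d$. Conversely, suppose the degree-$\le d$ invariants identify $G$ and $ED_k(H)=ED_k(G)$. Then all $S_K$ with $|E(K)|=d$ agree on $G$ and $H$. I must then show the lower-degree $S_K$ also agree. I expect this to be the main obstacle, since Kelly's lemma needs $|E(H)|=m$. The deck itself determines $|E(H)|$ only through $\binom{|E(H)|}{k}=\binom{m}{k}$, and this forces $|E(H)|=m$ because $k\ge1$ and $m>k$ make $\binom{\cdot}{k}$ strictly increasing on the relevant range. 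With $|E(H)|=m$ in hand, the Kelly identity transfers every lower count, so $H\cong G$.
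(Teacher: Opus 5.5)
Your proposal is correct and follows essentially the paper's route. You reduce via Observation~\ref{obs:invariants} to the counts $S_K$ with $|E(K)|\le d$, read off the $|E(K)|=d$ counts as the multiplicities in $ED_k$, and recover the lower counts by the Kelly-type double-counting identity, which the paper uses only implicitly. Your explicit check that $ED_k(H)=ED_k(G)$ forces $|E(H)|=m$ fills in a point the paper passes over.

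Two small touch-ups. First, your ``caveat'' sentence is not right as stated. If $|E(H)|\neq m$, no card of $H$ has $d$ edges, and the degree-$\le d$ counts need not determine $H$'s deck. That case is settled directly by the degree-$1$ edge-count invariant $\sum_{i<j}x_{ij}$, which is allowed since $d\ge 1$. Second, your worry about the meaning of ``identified'' is moot. For each of the finitely many $H\not\cong G$, the set $\{c : \sum_K c_K(S_K(H)-S_K(G))=0\}$ is a proper hyperplane, and a generic linear combination of the $S_K-S_K(G)$ with $|E(K)|\le d$ avoids all of them. That combination is a single identifying invariant of degree $\le d$.
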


This theorem is essentially equivalent to Forman's Theorem~4.6, though we discovered the result independently before becoming aware of Forman's paper; see Section~\ref{sec:related} for a comparison.

\begin{proof}
Suppose $G$ is identified by invariant polynomials of degree at most $|E(G)|-k$. Then for each graph $H$ on the same number of vertices as $G$, there is some graph $K$ with $|E(K)| \leq |E(G)|-k$ such that $S_K(G) \neq S_K(H)$ (for otherwise all invariants of degree at most $|E(G)|-k$ take the same values on both $G$ and $H$, by Observation~\ref{obs:invariants}). Since the value of $S_K(G)$ is the same as the multiplicity of $K$ as an edge-deleted subgraph of $G$, and the counts of $k$-edge-deleted subgraphs are determined by $ED_k(G)$, we have that for each graph $H$ not isomorphic to $G$, $ED_k(G) \neq ED_k(H)$, and thus $G$ is $k$-edge-reconstructible.

Conversely, suppose $G$ is $k$-edge-reconstructible. Then for every graph $H$ on the same number of vertices that is not isomorphic to $G$, we have $ED_k(G) \neq ED_k(H)$. In particular, there exists a graph $K$ with $|E(K)| \leq |E(G)|-k$ whose multiplicity in $ED_k(G)$ differs from its multiplicity in $ED_k(H)$. These multiplicities are precisely $S_K(G)$ and $S_K(H)$, respectively, so we get that $S_K(G) \neq S_K(H)$. Thus the invariant polynomial $S_K(X) - S_K(G)$ vanishes on $G$ but not on $H$, so $G$ is identified by invariants of degree at most $|E(G)|-k$.
\end{proof}

\begin{theorem}[{Relating invariants and vertex reconstruction, cf. \cite[Theorem~11.4]{forman}}] \label{thm:vertex-recon}
Let $k$ be a positive integer, and let $G$ be a finite simple undirected graph with $k < |V(G)|$. Then $G$ is $k$-vertex reconstructible if and only if it is identified by invariant polynomials of support-degree at most $|V(G)|-k$.
\end{theorem}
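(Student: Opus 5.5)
The plan is to follow the proof of Theorem~\ref{thm:edge-recon}, with vertices in place of edges and support-degree in place of degree. The bridge between invariants and the vertex deck is Observation~\ref{obs:invariants}. Every invariant of support-degree at most $m := |V(G)|-k$ is a linear combination of polynomials $S_K$, where $K$ ranges over graphs with at most $m$ non-isolated vertices. So the task is to show that the vector $(S_K(G))_K$ over such $K$ carries the same information as the deck $VD_k(G)$.

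The key step is a two-way translation between these quantities.

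\emph{From invariants to the deck.} For a graph $K$ with $j \le m$ non-isolated vertices, let $K_0$ be $K$ with its isolated vertices removed. The value $S_K(G)$ counts (not necessarily induced) copies of $K_0$ in $G$. I will show it is determined by the induced-subgraph counts on $m$-vertex sets, via double counting:
\[
S_K(G) = \binom{n-j}{m-j}^{-1}\sum_{U \in \binom{V(G)}{m}} \#\{\text{copies of } K_0 \text{ in } G[U]\}.
\]
The right-hand side depends only on the multiset $\multiset{[G[U]] : |U|=m}$. This multiset is exactly $VD_k(G)$, since deleting $k$ vertices leaves $m$.

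\emph{From the deck to invariants.} The count of induced $m$-vertex subgraphs isomorphic to a fixed $L$ equals an inclusion--exclusion (Möbius inversion over supergraphs on the same vertex set) of non-induced counts of graphs on exactly $m$ vertices. Concretely, it is a linear combination of $S_{K}$ where $K$ ranges over $m$-vertex graphs containing $L$, padded with isolated vertices to $n$ vertices. I need to be careful with one subtlety here. When $K$ has isolated vertices among its $m$, the monomial for $K$ only sees its non-isolated vertices. So I will use the normalized count that includes the binomial factor for choosing the remaining vertices, which is still a scalar multiple of the corresponding $S_{K'}$. Either way, these are invariants of support-degree at most $m$.

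With this dictionary, the two directions of the theorem go exactly as in Theorem~\ref{thm:edge-recon}.

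Suppose $G$ is identified by invariants of support-degree at most $m$. Then for any $H \not\cong G$ some $S_K$ with at most $m$ non-isolated vertices separates them. By the first translation, $VD_k(G) \ne VD_k(H)$, so $G$ is $k$-vertex reconstructible.

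Conversely, suppose $G$ is $k$-vertex reconstructible and $H \not\cong G$. Then some isomorphism type $L$ on $m$ vertices has different multiplicities in the two decks. By the second translation, some $S_K$ of support-degree at most $m$ differs between $G$ and $H$. The polynomial $S_K(X)-S_K(G)$ then vanishes on $G$ but not on $H$.

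I expect the main obstacle to be the bookkeeping in the deck-to-invariant direction: the Möbius inversion and the normalization for isolated vertices, so that the support-degree is exactly at most $m$ and the correspondence is genuinely invertible. The first direction is a routine double count.
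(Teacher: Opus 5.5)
Your proposal is correct and follows essentially the same route as the paper. Both arguments use Observation~\ref{obs:invariants} to reduce to the counts $S_K$ with at most $|V(G)|-k$ non-isolated vertices, and then pass between these non-induced counts and the induced census $VD_k(G)$ (by M\"obius inversion in the paper; by your double count, with the factor $\binom{n-j}{m-j}$, in the forward direction). Your double count and your normalization for isolated vertices make explicit a step the paper leaves implicit: that the $m$-vertex deck also determines the counts of subgraphs on fewer than $m$ vertices.
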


\begin{proof}
Suppose $G$ is identified by invariant polynomials of support-degree at most $|V(G)|-k$. Then for each graph $H$ on the same number of vertices as $G$, there is some graph $K$ with $|V(K)| \leq |V(G)|-k$ such that $S_K(G) \neq S_K(H)$, for otherwise all invariants of support-degree at most $|V(G)|-k$ take the same values on both $G$ and $H$, by Observation~\ref{obs:invariants}. By Möbius inversion, the census of non-induced subgraphs on $\leq |V(G)|-k$ vertices and the census of induced subgraphs on $\leq |V(G)|-k$ vertices determine one another. But the latter is precisely $VD_k(G)$. Thus we have that for each graph $H$ not isomorphic to $G$, $VD_k(G) \neq VD_k(H)$, and thus $G$ is $k$-reconstructible.

Conversely, suppose $G$ is $k$-reconstructible. Then for every graph $H$ on the same number of vertices that is not isomorphic to $G$, we have $VD_k(G) \neq VD_k(H)$. In particular, there exists a graph $K$ with $|V(K)| \leq |V(G)|-k$ whose multiplicity in $VD_k(G)$ differs from its multiplicity in $VD_k(H)$. By what we've said above, there must be some graph $K'$ on $\leq |V(G)|-k$ vertices such that $S_{K'}(G) \neq S_{K'}(H)$, and thus $G$ is identified by invariants of support-degree at most $|V(G)|-k$.
\end{proof}

\section*{Acknowledgment}
We thank Benedikt Pago for pointing us to \cite{DPS26}---in an earlier version of this paper we had posed an ``open'' question that it turned out they had already answered! JU thanks Anuj Dawar for the invitation to speak at Cambridge, at which the aforementioned conversation occurred. We thank Sandra Kiefer for conversations related to graphs characterized by their symmetries. JAG thanks Emma Church for teaching him about representation stability and FI-modules, Jonah Blasiak for useful conversations over several years, and Maria Gillespie for interesting related conversations about representations of the symmetric group. Initial parts of this project were carried out as JU's master's research at U. Massachusetts, Amherst; JU thanks Cameron Musco for being his master's advisor. JAG was partially funded through NSF CAREER award CCF-2047756; JU was partially funded by Grochow sabbatical funds and a fellowship from the University of Colorado Boulder Department of Computer Science.

\section*{AI Usage Statement}
To the best of our knowledge, no LLM was used in the research, writing, nor any other aspect of this paper.

\appendix

\section{Code for the calculation in Proposition~\ref{prop:mult-vs-occurrence}} \label{app:code}
We used the following code in SageMath version 10.7 on 22 June 2026. This code constructs the two graphs $G_n,H_n$ from the proof of Proposition~\ref{prop:mult-vs-occurrence}, then retrieves their multiplicity information as vectors---using the connection with cycle indices as in the proof of Theorem~\ref{thm:cycle-index}---and checks that there are no occurrence obstructions.

\begin{verbatim}
def graph_to_multiplicity_vec(G,n):
    Sn = SymmetricGroup(n)
    group = G.automorphism_group()
    
    # gets a list of cycle types in the order of columns of the character table
    cycle_types = [g.cycle_type() 
                    for g in Sn.conjugacy_classes_representatives()] 
        
    polynomial = list(group.cycle_index()) 
    # NB: .cycle_index() returns a polynomial which is already
    # divided by the order of the group (works with our proof of the theorem)
    
    vec = [0]*len(cycle_types) # list of 0s of length len(cycle_types)
    for monomial in polynomial:
        vec[cycle_types.index(monomial[0])] = monomial[1] # mon[0]==term, mon[1]==coeff
    M = Sn.character_table() 
    return(M*vector(vec)) # multiply character table by normalized cycle index vector

n = 6 # number of vertices; replace as needed

# dandelion graph
H = Graph({0:[1],1:[2],2:[3],3:[i for i in range(4,n)]})

# dandelion with seeds blown off
G = Graph({0:[1],1:[2],2:[3]}) 
for i in range(4,n):
    G.add_vertex()

G_mult = graph_to_multiplicity_vec(G,n)
H_mult = graph_to_multiplicity_vec(H,n)

G.show()
print(G_mult)
H.show()
print(H_mult)

for i in range(len(G_mult)):
    if (G_mult[i] == 0) != (H_mult[i] == 0):
        print(False) # this line should never be reached
\end{verbatim}

\bibliographystyle{alphaurl}
\bibliography{refs.bib}

\end{document}